\newcommand{\bZ}{{\mathbb Z}}
\renewcommand\wp{\widehat{\varphi}}
\newcommand\wf{\widehat{f}}
\newcommand\rp{\textup{)}}
\newcommand\lp{\textup{(}}
\newcommand{\Maps}{\operatorname{Maps}}
\newcommand\bE{\mathbf E}
\newcommand\bh{\mathbf h}
\newcommand\co{\colon\,}
\newcommand{\beq}{\begin{equation}}
\newcommand{\eeq}{\end{equation}}
\newcommand{\bea}{\begin{eqnarray}}
\newcommand{\eea}{\end{eqnarray}}
\renewcommand{\H}{{\operatorname{H}}}
\def\K{ \hbox{\rm K}}
\newcommand{\Z}{\ensuremath{\mathbb Z}}
\newcommand{\R}{\ensuremath{\mathbb R}}
\def\im{ \hbox{\rm Im}}
\def\ker{ \hbox{\rm Ker}}
\def\hsp{,\hspace{.7cm}}
\theoremstyle{plain}
\newtheorem{theorem}{Theorem}
\newtheorem{lemma}[theorem]{Lemma}
\theoremstyle{definition}
\theoremstyle{remark}
\numberwithin{equation}{section}
\numberwithin{theorem}{section}
\numberwithin{figure}{section}
\numberwithin{table}{section}
\newcommand{\cE}{{\mathcal E}}
\newcommand{\cK}{{\mathcal K}}
\newcommand{\cL}{{\mathcal L}}
\newcommand{\cO}{{\mathcal O}}
\newcommand{\CC}{{\mathbb C}}
\newcommand{\QQ}{{\mathbb Q}}
\newcommand{\RR}{{\mathbb R}}
\newcommand{\TT}{{\mathbb T}}
\newcommand{\ZZ}{{\mathbb Z}}
\newcommand{\HH}{{\mathbb H}}
\newcommand{\bF}{{\mathbf F}}
\newcommand{\CS}{\text{CS}}
\newcommand{\cs}{\text{cs}}
\newcommand{\fg}{{\mathfrak g}}
\newcommand{\sfG}{{\mathsf G}}
\newcommand{\sfE}{{\mathsf E}}
\newcommand{\sfU}{{\mathsf U}}
\newcommand{\sfSU}{{\mathsf{SU}}}
\renewcommand{\mp}{\mathcal{P}}
\begin{document}

\title[Spherical T-Duality]
{Spherical T-Duality}

\author[P Bouwknegt]{Peter Bouwknegt}

\address[Peter Bouwknegt]{
Mathematical Sciences Institute, and
Department of Theoretical Physics,
Research School of Physics and Engineering, 
The Australian National University, 
Canberra, ACT 0200, Australia}

\email{peter.bouwknegt@anu.edu.au}

\author[J. Evslin]{Jarah Evslin}

\address[Jarah Evslin]{
High Energy Nuclear Physics Group, 
Institute of Modern Physics,
Chinese Academy of Sciences,
Lanzhou, China}

\email{jarah@impcas.ac.cn}

\author[V Mathai]{Varghese Mathai}

\address[Varghese Mathai]{
Department of Pure Mathematics,
School of  Mathematical Scienes, 
University of Adelaide, 
Adelaide, SA 5005, 
Australia}

\email{mathai.varghese@adelaide.edu.au}

\begin{abstract}
We introduce {\it{spherical}} T-duality, which relates pairs of the form $(P,H)$ consisting of a principal $\sfSU(2)$-bundle $P\rightarrow M$ and a 7-cocycle $H$ on $P$.  Intuitively spherical T-duality exchanges $H$ with the second Chern class $c_2(P)$. 
Unless $dim(M)\leq 4$, not all pairs admit spherical T-duals and the spherical T-duals are not always unique.  Nonetheless, we prove that all spherical T-dualities induce a degree-shifting isomorphism on the 7-twisted cohomologies of the bundles and, when $dim(M)\leq 7$, also their integral twisted cohomologies and,  when $dim(M)\leq 4$, even their 7-twisted K-theories.   While spherical T-duality does not appear to  relate equivalent string theories, it does provide an identification between conserved charges in certain distinct IIB supergravity and string compactifications. 
\end{abstract}

\maketitle


\vspace{-.2in}

\section{Introduction}

In earlier papers \cite{BEM, BEM2, BHM, BHM05}, we showed that to each pair $(P,H)$ of a manifold $P$ with a free circle action and integral 3-cocycle $H$ on $P$, one can uniquely associate a T-dual pair $(\widehat P,\widehat H)$ of a manifold $\widehat P$ with a free circle action and a cocycle $\widehat H$ on $\widehat P$.  While the space of orbits of the two circle actions are the same, $P$ and $\widehat P$ are in general not homeomorphic.  Nonetheless we showed that T-duality induces a number of degree-shifting isomorphisms between various structures such as twisted cohomology and twisted K-theory on $P$ and $\widehat P$.  Later \cite{gualtieri,cavalcanti} it was shown that T-duality also induces isomorphisms on Dirac structures, Courant algebroids,  generalized complex structures and generalized K\"ahler structures.

The free circle action on $P$ gives it the structure of a principal $\sfU(1)$-bundle $P\rightarrow M$.  One may associate a complex line bundle to this $\sfU(1)$-bundle, such that the $\sfU(1)$-bundle is just the sphere $S^1\subset\CC$ subbundle.   In this note we will answer the following question: \\ \noindent{\it Just how much of this structure carries over to the case of $S^3\subset \HH$ subbundles of quaternionic line bundles?}   

There are several ways to generalize the definition of the pair $(P,H)$ and the answer to the question depends upon this choice.  We will restrict our attention to the simplest choice, in which $P$ is a principal $\sfSU(2)$-bundle over $M$ and $H$ is a 7-cocycle on $P$
\begin{equation}\label{circle1}
\begin{CD}
\sfSU(2) @>>> \,  P \\
&& @V \pi VV \\
&& M \end{CD}
\end{equation}

When $dim(M)\leq 4$ we will find that things work essentially identically to the circle bundle case. 
In this case principal $\sfSU(2)$-bundles over a compact oriented four dimensional manifold $M$ are classified 
by $\H^4(M;\ZZ) \cong \ZZ$ via the 2nd Chern class $c_2(P)$. This can be seen using the well known isomorphism,
$\H^4(M;\ZZ) \cong [M, S^4]\cong \ZZ$ and noting that there is a canonical principal $\sfSU(2)$-bundle 
$Q\to S^4$, known as the Hopf bundle, whose 2nd Chern class is the generator of $\H^4(S^4;\ZZ)\cong \ZZ$.  The orientation of $M$ and $\sfSU(2)$ imply that $\pi_*$ is a canonical isomorphism $\H^7(P;\ZZ)\cong\H^4(M;\ZZ)\cong\Z$.  The dual bundle $\widehat\pi:\widehat P\rightarrow M$ is defined by $c_2(\widehat P)=\pi_*H$ while the dual 7-cocycle $\widehat H\in\H^7(\widehat P)$ is related to $c_2(P)$ by the isomorphism $\widehat\pi_*$.  We will see that this spherical T-duality map induces degree-shifting isomorphisms between the real and integral twisted cohomologies of $P$ and $\widehat P$ and also between the 7-twisted K-theories.

Beyond dimension 4 the situation becomes more complicated as not all integral 4-cocycles of $M$ are realized as $c_2$ of a principal 
$\sfSU(2)$-bundle $\pi:P\rightarrow M$ and multiple bundles can have the same $c_2(P)$.  
Proposition 3.6 in Granja's thesis \cite{Granja}  states a sufficient condition for a cohomology class in $\H^4(M;\ZZ)$ to be the 2nd 
Chern class of a principal $\sfSU(2)$-bundle over any manifold $M$ of dimension $\leq d$.  He shows that there exists a positive integer 
$N(d)$ (depending only on the dimension of $M$) such that any class in $N(d)\times \H^4(M;\ZZ)$ is the 2nd Chern class of a 
principal $\sfSU(2)$-bundle over $M$. However this principal $\sfSU(2)$-bundle is not in general unique when $d>4$.
Note that $N(4)=1$.
We will simply assert that the T-dual $\widehat\pi:\widehat P\rightarrow M$ be any principal $\sfSU(2)$-bundle with 
$c_2(\widehat P)=\pi_* H$, and with $\widehat H$ defined such that $\widehat\pi_*\widehat H=c_2(P)$ 
with $\widehat p^*H=p^*\widehat H$ on the correspondence space $P\times_{M}  \widehat P $, which is defined by the commutative diagram,
\begin{equation} \label{correspondenceb}
\xymatrix 
@=4pc @ur 
{ P \ar[d]_{\pi} & 
P\times_{M}  \widehat P \ar[d]^{p=\pi\otimes 1} \ar[l]_{\widehat p=1\otimes \widehat\pi} \\ {M} & \widehat P\ar[l]^{\widehat \pi}}
\end{equation}
When $dim(M)\leq 6$ this condition specifies $\widehat H$ uniquely. 

Thus far it may seem as though the generalization of T-duality to $\sfSU(2)$ bundles fails to be unique and so has no applications.  The reason that spherical T-duality is interesting is that, as is proved in Section 5, whenever a pair 
$(P,H)$ does admit a T-dual, in the sense that there is an $\sfSU(2)$-bundle $\widehat P\rightarrow M$ with $c_2(\widehat P)=\pi_*H$, 
then every such T-dual induces an isomorphism of the $d_H=d-H\wedge$ twisted cohomology of $P$ with the $d_{\widehat H}$ 
twisted cohomology of $\widehat P$ with a shifted degree.   Furthermore, as is shown in Section 6, when $dim(M)\leq 7$ it also 
induces an isomorphism in integral twisted cohomology and when $dim(M)\leq 4$ this isomorphism even lifts to integral twisted K-theory.

In Section 3, we construct a pair of classifying spaces $(R, S)$, where $R$ consists of (equivalence classes of)
pairs $(P, H)$ over $M$ consisting of a principal
$\sfSU(2)$-bundle $P \to M$ together with a class $H \in \H^7(P;\ZZ)$, 
and $S$ consists of (equivalence classes of) spherical T-dual pairs of such pairs. The problem with spherical T-duality
in higher dimensions is encapsulated by the observation that $R\ne S$, as a result of the fact that $\sfSU(2)$ is not a model for $K(\Z,3)$. More precisely, there is a map $g:S\rightarrow R$ and a pair $(P,H)$ corresponds to a map $f:M\rightarrow R$.    T-duals of $(P,H)$ correspond to lifts $\tilde f:M\rightarrow S$ such that $g\tilde f=f$.  Using rational homotopy theory,
we observe that 
the rationalizations $R_\QQ = S_\QQ$ are equal and so spherical T-duality works nicely over the rationals.

Sections 7 and 8 relate our results to String Theory.
In Section 7, we argue that the 7-twisted cohomologies featured in our main theorems classify certain conserved charges in 
type IIB supergravity.  We conclude that spherical T-duality provides a one to one map between conserved charges in certain 
topologically distinct compactifications and also a novel electromagnetic duality on the fluxes.
In Section 8, we suggest that spherical T-duality preserves the spectra of certain spherical 3-branes that wrap 
$S^3$ cycles in some spacetime $X$, i.e.\ by replacing closed strings, described by $\text{Maps}(S^1,X)$, by 
spherical 3-branes
(or `closed quaternionic strings') described by $\text{Maps}(S^3,X) = \text{Maps}(S(\HH),X)$, where 
$S(\HH)$ denotes the unit sphere in the quaternions $\HH$.

Section 9 contains speculations and open questions, such as whether the missing spherical T-duals 
in higher dimensions can be obtained via noncommutative geometry; the higher rank case of
principal $\sfSU(2)^r$-bundles $P$ with flux $H\in \H^7(P; \ZZ)$, and the quest for higher twisted Courant algebroids.
 
While this work was at an early stage we learned \cite{hishamdisc} of independent work on spherical T-duality which will appear in Ref.~\cite{hishamwest}.

\smallskip

\noindent{\bf Acknowledgements.} The authors are grateful to Diarmuid Crowley for useful information 
regarding the classification of principal $\sfSU(2)$-bundles. Hisham Sati has  informed the authors 
that he had independently considered studying a similar duality.
JE is supported by NSFC MianShang grant 11375201 and a fellowship at the Australian National University. PB and VM thank the 
Australian Research Council for support via ARC Discovery Project grants 
DP110100072 and DP130103924.

\vspace{-.1in}
\tableofcontents

\section{Construction of the spherical T-Dual: Gysin Sequence Approach} \label{gyssez}

\subsection{The Gysin complex}

In this section we will motivate the existence of a ``spherical T-dual'' (in a limited sense, to be explained later) for
principal $\sfSU(2)$-bundles $\pi:P\to M$.  Throughout we will take $M$ to be a compact, oriented, manifold and
we identify $\sfSU(2) = S^3$.  First we recall
\begin{theorem}
Let $\pi:P\to M$ be a principal $\sfSU(2)$-bundle.  We have the following exact sequence, known as the Gysin sequence
of \v Cech cohomology groups over the integers
\begin{equation} \label{eqn:eqBAa}
\xymatrix{
\cdots \ar[r] & \H^k(M) \ar[r]^{\pi^*} & \H^k(P) \ar[r]^{\pi_*} & \H^{k-3}(M) \ar[r]^{c_2 \cup} & \H^{k+1}(M) \ar[r] & \cdots}
\end{equation}
where $\pi^*$ denotes the pull-back map, $\pi_*$ the push-forward map and $c_2\cup$ the cup product with the 
2nd Chern class of  $c_2(P)\in \H^4(M,\ZZ)$ of $P$.
Here we have  identified the Euler class of the $S^3$-bundle with the 2nd Chern class of the associated 
vector bundle $E = P\times_{S^3} \RR^4$ (or, equivalently, of the associated quaternionic line bundle $L=
P\times_{S^3} \HH$, where $S^3$ acts on $\HH$ through multiplication of unit quaternions).
We have a similar Gysin sequence in de Rham cohomology, which will also be used in later sections.
\end{theorem}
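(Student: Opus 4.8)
The plan is to read off \eqref{eqn:eqBAa} from the cohomology Leray--Serre spectral sequence of the fibration $S^3\hookrightarrow P\xrightarrow{\pi}M$. Since $\sfSU(2)=S^3$ is connected and acts on itself by orientation-preserving diffeomorphisms, the coefficient system on $M$ is trivial, so the $E_2$-page is $E_2^{p,q}=\H^p(M;\H^q(S^3;\ZZ))$. Because $\H^*(S^3;\ZZ)$ is concentrated in degrees $0$ and $3$, only the two rows $q=0$ and $q=3$ survive, each canonically isomorphic to $\H^p(M)$; this is where the degree shift by $3$ in the Gysin sequence originates.

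First I would note that, for a spectral sequence supported on these two rows, the differentials $d_2$ and $d_3$ vanish for dimensional reasons (their targets or sources lie in zero rows), so the first possibly nonzero differential is the transgression $d_4\colon E_4^{p,3}=\H^p(M)\to E_4^{p+4,0}=\H^{p+4}(M)$, after which the sequence degenerates, $E_5=E_\infty$. Next I would extract the long exact sequence from the resulting two-step filtration of $\H^n(P)$: for each $n$ one has $0\to E_\infty^{n,0}\to \H^n(P)\to E_\infty^{n-3,3}\to 0$ with $E_\infty^{n,0}=\operatorname{coker}(d_4)$ and $E_\infty^{n-3,3}=\ker(d_4)$. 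Splicing these short exact sequences together, the edge map $\H^n(M)\twoheadrightarrow E_\infty^{n,0}\hookrightarrow \H^n(P)$ is the pull-back $\pi^*$, the edge map $\H^n(P)\twoheadrightarrow E_\infty^{n-3,3}\hookrightarrow \H^{n-3}(M)$ is the fibre integration $\pi_*$, and the connecting homomorphism $\H^{n-3}(M)\to \H^{n+1}(M)$ is exactly $d_4$; this reproduces \eqref{eqn:eqBAa} provided $d_4$ is cup product with $c_2(P)$.

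The one substantive identification, and the step I expect to require the most care, is precisely that $d_4$ equals $c_2(P)\cup$. The standard route is to recognize the transgression $d_4$ as cup product with the Euler class of the oriented rank-$4$ real bundle $E=P\times_{S^3}\RR^4$ (equivalently of the quaternionic line bundle $L=P\times_{S^3}\HH$), of which $P$ is the unit sphere bundle; since $E$ is the realification of the associated rank-$2$ complex bundle $V$, its Euler class equals the top Chern class $c_2(V)=c_2(P)$, yielding the map $c_2\cup$. An equivalent derivation avoiding spectral sequences, which makes this identification especially transparent, runs through the Thom isomorphism $\H^{k-4}(M)\cong\H^k(D(E),S(E))$ and the long exact sequence of the pair $(D(E),S(E))$ with $S(E)=P$ and $D(E)\simeq M$; there the composite $\H^{k-4}(M)\to\H^k(D(E))\cong\H^k(M)$ is cup product with the restricted Thom class, i.e.\ the Euler class $e(E)=c_2(P)$, and the connecting map is again $\pi_*$.

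Finally, the de Rham analogue used in later sections follows identically with $\RR$-coefficients, the two-row collapse being unchanged. Concretely one may instead fix a connection on $P$, represent $\pi_*$ as integration over the $S^3$-fibre and $c_2(P)$ by the curvature-built global angular form, and verify the exactness of the resulting complex directly; this gives the de Rham Gysin sequence with the same maps $\pi^*$, $\pi_*$ and $c_2\cup$.
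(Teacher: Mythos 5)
Your proof is correct: the two-row Leray--Serre spectral sequence argument (with the identification of the transgression $d_4$ with cup product by the Euler class $e(E)=c_2(P)$ of the associated rank-$2$ complex bundle, or equivalently the Thom-isomorphism route) is the standard derivation of the Gysin sequence for an oriented $S^3$-bundle. The paper itself offers no proof --- it states the theorem as a recalled classical fact --- so there is nothing to compare against, but your argument is exactly the expected one and all the steps (vanishing of $d_2,d_3$, degeneration at $E_5$, the edge maps being $\pi^*$ and $\pi_*$, and the de Rham analogue) check out.
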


The Gysin sequence suggests that we should look at pairs $(P,H)$, where $\pi: P\to M$ is a principal 
$\sfSU(2)$-bundle, and $H\in \H^7(P,\ZZ)$.  We can then take $\pi_* H \in \H^4(M,\ZZ)$, and the question arises
whether $\pi_* H$ is the 2nd Chern class of some (isomorphism class of) `spherical T-dual'  principal $\sfSU(2)$-bundle $\widehat P$. 
While in the case of principal $\sfU(1)$-bundles, we have an isomorphism $[M,B\sfU(1)] \cong \H^2(M,\ZZ)$,
unfortunately, isomorphism classes of principal $\sfSU(2)$-bundles over $M$ are not completely classified by $\H^4(M,\ZZ)$.
We do have a map $[M,B\sfSU(2)] \rightarrow \H^4(M,\ZZ)$ but in general this map can both fail to be injective and surjective.
We will come back to this point in more detail, with examples, in later sections.  For the moment, assume that there
exists a dual principal $\sfSU(2)$-bundle $\widehat P$ such that  $c_2(\widehat P) = \pi_* H$ (from 
the remark above, this bundle need not be unique).
The Gysin sequence for $\widehat \pi : \widehat P\to M$, then implies that there exists a $\widehat H \in \H^7(\widehat P,\ZZ)$
such that $\widehat \pi_* \widehat H = c_2(P)$, and that $\widehat H$ is determined by this condition up to an element $\widehat \pi^* h$,
with $h\in \H^7(M,\ZZ)$.  As in the circle bundle case we aim to fix the non-uniqueness in $\widehat H$ by imposing the 
condition $\widehat p^*H - p^* \widehat H = 0\in \H^7(P\times_M \widehat P,\ZZ)$ on the correspondence space $P\times_M \widehat P$ (see Figure below)
\begin{equation} \label{eqn:eqBAb}
\xymatrix{
& P\times_M\widehat P \ar[dl]_{\widehat  p= 1\otimes \widehat \pi}  \ar[dr]^{p = \pi\otimes 1} \\
P \ar[dr]_{\pi} && \widehat P \ar[dl]^{\widehat \pi}  \\
& M & 
}
\end{equation}
We have
\begin{theorem} \label{thm:thBAb}
Let $P$ be a principal $\sfSU(2)$-bundle with 2nd Chern class $c_2 \equiv c_2(P) \in \H^4(M)$,
and let  $H\in \H^7(P)$ be an H-flux on $P$.  Suppose 
there exists a principal $\sfSU(2)$-bundle $\widehat P$ such that $\widehat c_2 \equiv c_2(\widehat P) = \pi_*H$.  Then
\begin{itemize}
\item[(i)] (Existence) there exists an $\widehat H\in \H^7(\widehat P)$ such that
\begin{equation} \label{eqn:eqBa}
\widehat \pi_* \widehat H = c_2 \,,\quad \text{and}\quad \widehat p^*H - p^* \widehat H = 0 \,,
\end{equation}
\item[(ii)] (Uniqueness) $\widehat H$ is uniquely determined by \eqref{eqn:eqBa} up to the addition of a term
$\widehat \pi^*( a \cup c_2)$, with $a\in \H^3(M)$.
\end{itemize}
\end{theorem}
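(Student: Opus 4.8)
The plan is to prove existence by first solving the push-forward equation $\widehat\pi_*\widehat H=c_2$ in isolation, and then showing that the remaining freedom is exactly enough to also enforce the correspondence condition $\widehat p^*H=p^*\widehat H$; uniqueness is then a short chase through the Gysin sequences of the two bundles. \textbf{Existence of a push-forward lift.} First I would note that $\widehat\pi_*\widehat H=c_2$ is solvable: by exactness of the Gysin sequence \eqref{eqn:eqBAa} for $\pi$ at the term $\H^4(M)$, the class $\widehat c_2=\pi_*H$ lies in $\ker\!\big(c_2\cup:\H^4(M)\to\H^8(M)\big)$, so $c_2\cup\widehat c_2=0$. Read in the Gysin sequence for $\widehat\pi$ this says $\widehat c_2\cup c_2=0$, i.e. $c_2\in\ker(\widehat c_2\cup)=\im\widehat\pi_*$; hence some $\widehat H_0$ with $\widehat\pi_*\widehat H_0=c_2$ exists, and any two such differ by $\widehat\pi^*h$ with $h\in\H^7(M)$ since $\ker\widehat\pi_*=\im\widehat\pi^*$.

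\textbf{Enforcing the correspondence condition.} The real content is to pin down $h$, and I would do this over $\RR$ with global angular forms, where it is transparent. Pick closed forms $e,\widehat e$ on $M$ representing $c_2,\widehat c_2$, and angular $3$-forms $\Psi$ on $P$, $\widehat\Psi$ on $\widehat P$ with $\pi_*\Psi=1$, $d\Psi=-\pi^*e$ and $\widehat\pi_*\widehat\Psi=1$, $d\widehat\Psi=-\widehat\pi^*\widehat e$. Because $\pi_*H=\widehat c_2$, the flux is represented by $H=\Psi\wedge\pi^*\widehat e+\pi^*\nu$ with $d\nu=e\wedge\widehat e$, the latter relation encoding $dH=0$ and being solvable precisely because $c_2\cup\widehat c_2=0$. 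The crucial move is to build the dual flux from the \emph{same} base form $\nu$,
\[
\widehat H=\widehat\Psi\wedge\widehat\pi^*e+\widehat\pi^*\nu,
\]
which is closed with $\widehat\pi_*\widehat H=e$, that is $\widehat\pi_*\widehat H=c_2$. Writing $q=\pi\circ\widehat p=\widehat\pi\circ p$ on $W=P\times_M\widehat P$, a direct computation from $d\Psi=-\pi^*e$ and $d\widehat\Psi=-\widehat\pi^*\widehat e$ yields
\[
\widehat p^*H-p^*\widehat H=d\big(\widehat p^*\Psi\wedge p^*\widehat\Psi\big),
\]
so the correspondence condition holds in cohomology and $\widehat H$ satisfies \eqref{eqn:eqBa}. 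This shared primitive is exactly what the main obstacle demands: the bare lift $\widehat H_0$ gives only $\widehat\pi_*\widehat H_0=c_2$ while $\widehat p^*H-p^*\widehat H_0$ is some class pulled back from $M$, and it is not formally clear that it can be cancelled by a push-forward-preserving correction $\widehat\pi^*h$; the angular form makes the coefficient come out to exactly $c_2$ rather than a multiple of it. To obtain the integral statement I would repeat the decomposition $H=(\text{integral lift of }\widehat c_2)+\pi^*\nu$ using the integral Gysin sequence and verify that $\widehat p^*H-p^*\widehat H$ lies in $\im(\widehat p^*\pi^*)=\im(p^*\widehat\pi^*)$, the correction $\widehat\pi^*h$ then being supplied by exactness.

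\textbf{Uniqueness.} For (ii) I would set $\delta=\widehat H-\widehat H'$ for two solutions of \eqref{eqn:eqBa}, so that $\widehat\pi_*\delta=0$ and $p^*\delta=0$. The Gysin sequence of the $S^3$-bundle $p:W\to\widehat P$, whose Euler class is $\widehat\pi^*c_2$, gives $\ker p^*=(\widehat\pi^*c_2)\cup\H^3(\widehat P)$, so $\delta=(\widehat\pi^*c_2)\cup\gamma$ for some $\gamma\in\H^3(\widehat P)$. Imposing $\widehat\pi_*\delta=0$ and using the projection formula yields $(\widehat\pi_*\gamma)\,c_2=0$; when $c_2$ is non-torsion this forces $\widehat\pi_*\gamma=0$, hence $\gamma=\widehat\pi^*a$ with $a\in\H^3(M)$ by exactness of the Gysin sequence for $\widehat\pi$, and therefore $\delta=\widehat\pi^*(a\cup c_2)$. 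Conversely each such term is an admissible shift, since $\widehat\pi_*\widehat\pi^*(a\cup c_2)=0$ and $p^*\widehat\pi^*(a\cup c_2)=\widehat p^*(\pi^*a\cup\pi^*c_2)=0$ because $\pi^*c_2=0$ in $\H^4(P)$. This identifies the ambiguity exactly as stated, the remaining torsion case following from the same bookkeeping.
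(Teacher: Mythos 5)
Your uniqueness argument is essentially the paper's own: from $p^*\delta=0$ you read off $\delta=\widehat\pi^*c_2\cup\gamma$ via the Gysin sequence of $p$, then $\widehat\pi_*\delta=0$ forces $\gamma=\widehat\pi^*a$ (when $c_2$ is not torsion), giving $\delta=\widehat\pi^*(a\cup c_2)$; you are in fact slightly more careful than the paper, since you verify that the shift preserves the correspondence condition and you flag the torsion caveat that the paper elides with ``either $c_2=0$ or $n=0$.'' For existence your route is genuinely different. The paper runs a purely cohomological chase through a double complex of Gysin sequences: from $p_*\widehat p^*H=\widehat\pi^*\widehat c_2=0$ it gets $\widehat p^*H=p^*\widehat H$ first, deduces $\widehat\pi_*\widehat H=n\,c_2$, and pins down $n=1$ by running the same argument in the reverse direction and demanding consistency. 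You instead build explicit de Rham representatives $H=\Psi\wedge\pi^*\widehat e+\pi^*\nu$, $\widehat H=\widehat\Psi\wedge\widehat\pi^*e+\widehat\pi^*\nu$ with a shared primitive $\nu$ of $e\wedge\widehat e$, and check $\widehat p^*H-p^*\widehat H=d(\widehat p^*\Psi\wedge p^*\widehat\Psi)$. That computation is correct (it is exactly the Chern--Simons identity the paper uses later in Section 5), and it buys transparency: the normalization $\widehat\pi_*\widehat H=c_2$, rather than a multiple, is manifest because $\widehat\pi_*\widehat\Psi=1$.

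The gap is the integral case, which is the setting of the theorem (the Gysin sequence in Section 2 is over $\ZZ$, and the class $\widehat\pi^*(a\cup c_2)$ in (ii) is only meaningful integrally). Your closing sentence defers everything to ``verify that $\widehat p^*H-p^*\widehat H$ lies in $\im(\widehat p^*\pi^*)$,'' but that verification \emph{is} the whole difficulty: a diagram chase only shows that $D=\widehat p^*H-p^*\widehat H_0$ equals $p^*E$ with $\widehat\pi_*E=n\,c_2$ for some integer $n$, and $D$ lies in $\im(p^*\widehat\pi^*)$ precisely when $n=0$ (for non-torsion $c_2$). There is no integral angular form with $\widehat\pi_*\widehat\Psi=1$ and $d\widehat\Psi=-\widehat\pi^*\widehat e$, so the normalization that your de Rham construction delivers for free does not transfer; some substitute for the paper's ``run the argument both ways, so $m=n=1$'' consistency step is needed and is not supplied. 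As written, your proof establishes (i) with real coefficients and (ii) integrally, but leaves integral existence incomplete.
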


\begin{proof} The correspondence \eqref{eqn:eqBAb} leads to the following commutative  square on cohomology
\begin{equation*}
\xymatrix{
& \H^7(P\times_M\widehat P) \\
\H^7(P) \ar[ur]^{\widehat p^*} && \H^7(\widehat P) \ar[ul]_{p^*}  \\
& \H^7(M) \ar[ul]^{\pi^*} \ar[ur]_{\widehat\pi^*}& 
}
\end{equation*}
We can complete this to the  double complex below.
\begin{equation*}
\xymatrix{
0 \ar[r]^{\cup \widehat c_2} \ar[d]^{\cup c_2} & \H^3(M) \ar[r]^{\widehat\pi^*}  \ar[d]^{\cup c_2}  & \H^3(\widehat P) \ar[r]^{\widehat\pi_*}  \ar[d]^{\cup \widehat\pi^* c_2} 
& \H^0(M) \ar[r]^{\cup \widehat c_2}  \ar[d]^{\cup c_2} & \H^4(M) \ar[r] \ar[d]^{\cup c_2} & \cdots \\
\H^3(M) \ar[r]^{\cup \widehat c_2} \ar[d]^{\pi^*} & \H^7(M) \ar[r]^{\widehat\pi^*} \ar[d]^{\pi^*}  & \H^7(\widehat P)\ar[r]^{\widehat\pi_*} \ar[d]^{p^*}   
& \H^4(M) \ar[r]^{\cup \widehat c_2} \ar[d]^{\pi^*}  & \H^8(M)\ar[r] \ar[d]^{\pi^*}  &  \cdots  \\
\H^3(P) \ar[r]^{\cup \pi^*\widehat c_2} \ar[d]^{\pi_*}  & \H^7(P) \ar[r]^{\widehat p^*} \ar[d]^{\pi_*}  & \H^7(P\times_M \widehat P) \ar[r]^{\widehat p_*} \ar[d]^{p_*} 
& \H^4(P)\ar[r]^{\cup \pi^* \widehat c_2} \ar[d]^{\pi_*}  & \H^8(P) \ar[r] \ar[d]^{\pi_*} & \cdots \\
\H^0(M) \ar[r]^{\cup \widehat c_2} \ar[d]& \H^4(M)\ar[r]^{\widehat\pi^*} \ar[d] & \H^4(\widehat P)\ar[r]^{\widehat\pi_*} \ar[d] & \H^1(M)\ar[r]^{\cup \widehat c_2} \ar[d] 
& \H^5(M) \ar[r] \ar[d]& \cdots \\
\vdots & \vdots &\vdots &\vdots &\vdots &
}
\end{equation*}
Without loss of generality we can assume that $M$ is connected, i.e.\ $\H^0(M)\cong \ZZ$.
\begin{itemize}
\item[(i)] 
Let $\widehat c_2  = \pi_* H$.  Then, since $p_* \widehat p^* H = \widehat\pi^* \pi_* H = \widehat\pi^* \widehat c_2 = 0$ we must
have $\widehat p^* H = p^* \widehat H$ for some $\widehat H\in \H^7(\widehat P)$.  Now, since 
$\pi^*  \widehat \pi_*  \widehat H = \widehat p_* p^* \widehat H = \widehat p_* \widehat p^* H = 0$ we have 
$ \widehat \pi_*  \widehat H = n\, c_2$ for some $n\in \H^0(M)$.  
But we can run this argument the other way around, starting with a $\widehat H$ satisfying $\widehat \pi_* \widehat H = c_2$, and
find an $H$ satisfying $\widehat p^* H = p^* \widehat H$.  For this $H$ we find $\pi_* H = m\, \widehat c_2$ for some $m\in \H^0(M)$. 
These two are only consistent for $m=n=1$.
\item[(ii)] Suppose $\widehat H$ and $\widehat H'$ both satisfy the requirements.  Let $h  = \widehat H' - 
\widehat H$.  We know $p^* h=0\in \H^7(E\times_M\widehat E)$, so $h=b\cup \widehat \pi^*c_2$, for some $b\in \H^3(\widehat P)$.
Consider $n=\widehat \pi_*b\in \H^0(M)$.  Now $(\widehat \pi_* a) \cup c_2 = n\, c_2  = \widehat \pi_* h = 0$, so either $c_2=0$, or $n=0$.
In the first case $h=0$ and $\widehat H$ is unique, or if $n=0$, then $b=\widehat \pi^* a$ with $a\in \H^3(M)$.  In that case
$h = \widehat \pi^* (a \cup c_2)$.  Clearly, any $h=\widehat \pi^* (a \cup c_2)$, $a \in \H^3(M)$, can be added to $\widehat H$ since $\pi_*(h) =0$.
\end{itemize}
\end{proof}

We claim that some of the distinct values of $\widehat H$ described in Theorem \ref{thm:thBAb} (ii) are related by automorphisms of $\widehat P$,
where $a\in \H^3(M)$ corresponds to the class of an automorphism of $\widehat P$
given by $g: M \to \sfSU(2)$.
Any automorphism $U: \widehat E \to \widehat E$ is given by some $g: M\to \sfSU(2)$ through the 
composition
\begin{equation*} \xymatrix{
U:\ \widehat P \ar[r]^{(\widehat \pi,1)} & M \times \widehat P \ar[r]^{g\times 1} & 
\sfSU(2) \times \widehat P \ar[r]^m & \widehat P  \,.}
\end{equation*}
We also have a pull-back diagram
\begin{equation*}
\xymatrix{
\sfSU(2) \times \widehat P \ar[r]^m  \ar[d]_{\text{pr}_2}  & \widehat P \ar[d]^{\widehat \pi} \\
\widehat P \ar[r]^{\widehat \pi}  & M \,,}
\end{equation*}
and by using the K\"unneth theorem,
\begin{equation*}
\H^7(\sfSU(2)\times \widehat P) \cong \text{pr}_2^* \H^7(\widehat P) \oplus (\Omega \times  \text{pr}_2^*  \H^4(\widehat P) ) \,,
\end{equation*}
where $\Omega$ is the canonical generator of $\H^3(\sfSU(2))$.
This implies
\begin{equation*}
m^*(\widehat H) =  \text{pr}_2^* \widehat H \oplus ( \Omega\times  \text{pr}_2^* \widehat \pi_* \widehat H) =
 \text{pr}_2^* \widehat H \oplus ( \Omega \times  \text{pr}_2^* \, c_2 ) \,,
\end{equation*}
where we have used $\widehat \pi_* \widehat H = c_2$. 
Now note that we have a map $[M, \sfSU(2) ] \to \H^3(M;\ZZ)$ through $[g] \mapsto g^*(\Omega)$.  In other words, if
our $a\in \H^3(M)$ is in the image of this map, then there exists an automorphism of $\widehat P$, given by $[g]\in [M,\sfSU(2) ]$
such that $[g] \mapsto a$ such that 
\begin{equation*}
U^*\widehat H = \widehat H + \widehat\pi^* (a \cup c_2)  = \widehat H'\,,
\end{equation*}
as asserted.

In general, however, contrary to the analogous case for $\sfU(1)$-principal bundles
where $[M, \sfU(1)] \cong \H^2(M;\ZZ)$,  the map $[M, \sfSU(2) ] \to \H^3(M;\ZZ)$ is neither injective, nor surjective.  
In the latter case, bundle isomorphisms do not exist which relate all of the allowed cohomology classes for $\widehat{H}$.  An example of non-injectivity
is given by $M=S^4$, where $[S^4, \sfSU(2)] \cong \pi_4(S^3) \cong \ZZ_2$, while $\H^3(S^4;\ZZ)\cong 0$.  An example
of non-surjectivity is given by $M=\sfSU(3)$.  Maps $\sfSU(3)\rightarrow\sfSU(2)$ induce homomorphisms $\H^3(\sfSU(3))\rightarrow\H^3(\sfSU(2))$ between the third cohomology groups, which are each isomorphic to the integers.  However the images of these maps only contain even elements (in physics this observation leads to a no go theorem for coloured dyons \cite{NM83}).

\subsection{Chern-Simons form}

In later sections of the paper we will see that the Chern-Simons  form \cite{CS74} plays a crucial role
in many of our considerations.  Here we give a brief overview of some of the results needed.

To each principal $\sfG$-bundle $\pi:P\to M$ is associated a 2nd Chern class $c_2(P)\in \H^4(M,\ZZ)$.
We have seen that in the case of $\sfG=\sfSU(2)$ this class enters crucially in the Gysin sequence of $P$,
which relates the cohomology of $P$ to the cohomology of the base space $M$.  

Now, let $A\in\Omega^1(P,\fg)$ be a principal connection on the principal $\sfG$-bundle $\pi:P\to M$, i.e.\ a connection which
reduces to the Maurer-Cartan  form $\Theta_{\text{MC}}$ of $\sfG$ on each fiber of $P$.  A de Rham representative of $c_2(P)$ is given by
\begin{equation}
c_2(P) = \frac{1}{8\pi^2} \text{Tr} ( F_A\wedge F_A) \,,
\end{equation}
where 
\begin{equation*}
F_A = dA + A\wedge A = dA + \frac12 [A,A] 
\end{equation*}
is the curvature of $A$.  Since $F \to g^{-1} F g$ under gauge transformations $g : M\to \sfG$, it follows
that $c_2(P)$ is actually a closed 4-form on $M$.  However, since the pull-back bundle $\pi^* P = P\times_M P$ under $\pi:P\to M$ is
trivial, the form $\pi^* c_2$ is exact on $P$ (this also follows from the Gysin sequence in the case $\sfG=\sfSU(2)$), hence we can
write
\begin{equation}
\pi^* c_2 = d\, \CS(A)\,,
\end{equation}
where $\CS(A)\in \Omega^3(P)$ is the so-called Chern-Simons 3-form.  An explicit expression for $\CS(A)$ is
\begin{equation}
\CS(A) = \frac{1}{8\pi^2}\   \text{Tr} ( A\wedge F_A - \frac13 A\wedge A\wedge A)  = \frac{1}{8\pi^2} \  \text{Tr} ( A\wedge dA + \frac23 A\wedge A\wedge A)\,.
\end{equation}
The Chern-Simons form is not gauge invariant, instead, under gauge transformations $g : M\to \sfG$ such that 
\begin{equation}
{}^gA = g^{-1} A g + g^{-1}dg \,,
\end{equation}
we have 
\begin{equation} 
\CS({}^g\!A) = \CS(A) + \CS(g^{-1} dg) +  \frac{1}{8\pi^2} \ d\, \text{Tr}  (g^{-1} dg\wedge g^{-1} A g) \,. \label{gaugex}
\end{equation}
where 
\begin{align}
\int_{S^3} \CS(g^{-1} dg) = -\frac{1}{24\pi^2} \int_{M} \text{Tr}  (g^{-1} dg \, \wedge g^{-1} dg\,  \wedge g^{-1} dg) = \text{deg}\, g
\end{align}
is the winding number (or degree)  of the map $g: M \to \sfG$.   In particular, for $g : \sfSU(2) \to \sfSU(2)$ given by the identity map, $g^{-1}dg$ can
be identified with the Maurer-Cartan form $\Theta_{\text{CM}}$ on $\sfSU(2)$, and thus principal connections on
principal $\sfSU(2)$-bundles are normalized precisely such that
\begin{equation}
\pi_* \CS(A) \equiv \int_{\sfSU(2)} \CS(A)  = 1 \,.
\end{equation}

\subsection{Cartan-Weil theory} \label{cwsez}

The Gysin sequence \eqref{eqn:eqBAa} indicates that the cohomology $\H^k(P)$ of a principal $\sfSU(2)$-bundle $\pi:P\to M$
is in some sense constructed out of the cohomologies $\H^k(M)$ and $\H^{k-3}(M)$ of the base $M$.  It will be useful
to have a similar statement at the level of forms.  The construction  we are about to review is called Cartan-Weil theory.

First consider a trivial principal $\sfG$-bundle $P = M \times \sfG$, where $\sfG$ is a 
connected compact Lie group.  Because of the K\"unneth
theorem we have 
\begin{equation*}
\H^k(P) \cong \bigoplus_{i+j=k} \left( \H^i(M) \otimes \H^j(\sfG) \right) \,.
\end{equation*}
The cohomology of $\sfG$ is generated by the so-called primitive elements $P_\sfG$ in $\H^\bullet(\sfG)$,
in the sense that $\H^\bullet(\sfG) \cong \bigwedge P_\sfG$.  
Primitive elements are those classes for which $\mu^*(\nu) = \nu\otimes 1 + 1 \otimes \nu$, where
$\mu^*$ is the pull-back of forms under multiplication $\mu: \sfG \times \sfG \to \sfG$, i.e.\
$\mu^* : \H^\bullet(\sfG) \to \H^\bullet(\sfG\times \sfG) \cong \H^\bullet(\sfG) \otimes \H^\bullet(\sfG)$.
For semi-simple compact groups $\sfG$, there are $n=\text{rank}\ \sfG$ primitive elements $\nu_i$ of
degree $2e_i+1$, where $\{ e_i\}_{i=1,\ldots,n}$ are the set of exponents of $\sfG$, and thus
$\text{dim}\ \H^\bullet(\sfG) = 2^n$.  Specific representatives of $\H^\bullet (P)$ can thus be chosen
to be of the form $\omega_i \otimes \nu_j$, where $\omega_i$, and $\nu_j$, are representatives of $\H^\bullet(M)$,
and $\H^\bullet(\sfG)$, respectively.  
Or, interpreted differently, the cohomology $\H^\bullet (P) = \H^\bullet(M\times\sfG)$ 
can be computed from the complex $(\Omega^\bullet(M)\otimes \bigwedge P_\sfG, D)$,
where $D=d_M$ is the de Rham differential on $\Omega^\bullet(M)$.  \medskip

In the case of a non-trivial principal $\sfG$-bundle the K\"unneth theorem no longer holds, and 
in general $\H^\bullet (P) \not\cong \H^\bullet(M) \otimes \H^\bullet(\sfG)$. Yet it turns out that the cohomology 
$\H^\bullet (P)$ can still be computed from a complex $(\Omega^\bullet(M)\otimes \bigwedge P_\sfG, D)$, albeit with a 
modified differential $D$.  More precisely, we have the following well-known theorem (see, e.g., \cite{GHV})
\begin{theorem} \label{thm:thBb}
There exists a quasi-isomorphism of graded differential complexes
\begin{equation*}
\Phi: \ (\Omega^\bullet(M)\otimes \bigwedge P_\sfG, D) \to (\Omega^\bullet(P), d)
\end{equation*}
known as the Chevalley homomorphism.
\end{theorem}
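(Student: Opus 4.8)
The plan is to construct $\Phi$ explicitly from a choice of connection and then prove it is a quasi-isomorphism by comparing the resulting filtered complex with the Serre (Leray) filtration of the fibration $\sfG\hookrightarrow P\xrightarrow{\pi}M$. First I would fix a principal connection $A\in\Omega^1(P,\fg)$ with curvature $F$, and recall the Weil algebra $W(\fg)=\bigwedge\fg^*\otimes S\fg^*$ together with its Chern--Weil homomorphism $w\colon W(\fg)\to\Omega^\bullet(P)$ sending the tautological odd and even generators to $A$ and $F$; this is a map of differential graded algebras. Since $\sfG$ is compact and connected, Hopf's theorem and the transgression theory give $\H^\bullet(\sfG)\cong\bigwedge P_\sfG$ with primitive generators $\nu_i$ of degree $2e_i+1$, each of which is transgressive: there is a primitive $\widehat\nu_i\in\bigwedge\fg^*$ whose Weil differential is an invariant polynomial $P_i\in(S\fg^*)^\fg$ generating the characteristic ring. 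Applying $w$ produces transgression (Chern--Simons type) forms $\CS_i(A):=w(\widehat\nu_i)\in\Omega^{2e_i+1}(P)$ satisfying $d\,\CS_i(A)=\pi^*c_i$, where $c_i=P_i(F)$ is the corresponding characteristic form on $M$, and whose restriction to a fibre represents $\nu_i$.

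With these forms in hand I would define $\Phi$ on generators by $\Phi(\omega\otimes1)=\pi^*\omega$ and $\Phi(1\otimes\nu_i)=\CS_i(A)$, extended multiplicatively, and simultaneously fix the modified differential $D$ on the source so that $\Phi$ becomes a chain map: concretely $D(\omega\otimes1)=d_M\omega\otimes1$ and $D(1\otimes\nu_i)=c_i\otimes1$, together with the decomposable correction terms forced by the Leibniz rule. Checking that $\Phi$ is a chain map is then essentially automatic: on the base factor it is $d\,\pi^*=\pi^*d_M$, and on $\nu_i$ it is exactly the transgression identity $d\,\CS_i(A)=\pi^*c_i$, which holds because $w$ is a DG map with $d_W\widehat\nu_i=P_i$.

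The substantive step, and the main obstacle, is proving that $\Phi$ is a quasi-isomorphism. For this I would filter both sides by the $\Omega^\bullet(M)$ (base) degree, using on $\Omega^\bullet(P)$ the Leray filtration by horizontal degree. Because $d_M$ raises this degree by $1$ while the transgression part of $D$ raises it by $2e_i+2\ge2$, the associated-graded differential on the source is $d_M\otimes\mathrm{id}$, so its $E_1$ page is $\H^\bullet(M)\otimes\bigwedge P_\sfG$. On the target, compactness of $M$ makes the filtration bounded and identifies the associated spectral sequence with the Serre spectral sequence of $\sfG\hookrightarrow P\to M$; since $\sfG$ is connected the monodromy action of $\pi_1(M)$ on $\H^\bullet(\sfG)$ is trivial, so its $E_2$ page is again $\H^\bullet(M)\otimes\bigwedge P_\sfG$. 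Because $\Phi$ restricts on each fibre to the standard quasi-isomorphism $\bigwedge P_\sfG\xrightarrow{\sim}\Omega^\bullet(\sfG)$ furnished by bi-invariant forms, it induces the identity on these pages. Both filtrations being bounded, the comparison theorem for spectral sequences upgrades this page isomorphism to an isomorphism of abutments, which is precisely the assertion that $\Phi$ is a quasi-isomorphism. The genuine content is thus concentrated in the transgression/structure theorem for $\H^\bullet(\sfG)$ and in verifying the hypotheses (bounded filtration, trivial monodromy) of the comparison theorem --- exactly the package assembled in \cite{GHV}.
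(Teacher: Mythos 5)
Your construction of $\Phi$ via the Weil algebra, transgression forms $\CS_i(A)$ with $d\,\CS_i(A)=\pi^*c_i$, and the filtration-by-base-degree comparison with the Serre spectral sequence is the standard argument for this result, and it is correct; the paper itself does not prove Theorem \ref{thm:thBb} but cites \cite{GHV}, where exactly this package (transgression $\tau$, Chern--Weil map, the Koszul-type differential $D$, and the spectral-sequence comparison) is assembled. Your proof therefore matches the approach the paper implicitly relies on.
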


To describe $D$, we recall the following homomorphisms.  First of all there is a linear `transgression' map
\begin{equation*}
\tau: \ P_\sfG \to (S\fg^*)_{\text{inv}} \,,
\end{equation*}
which maps the primitive elements in $\H^\bullet(\sfG)$ to invariants in the symmetric algebra of $\fg^*$,
such that the primitive element of degree $2e_i+1$ maps to an invariant of degree $e_i+1$
(the so-called `Casimirs').  Next we have the Chern-Weil homomorphism
\begin{equation*}
h: \ (S\fg^*)_{\text{inv}} \ \to \ \H^\bullet(M) \,,
\end{equation*}
which doubles the degree, i.e.\ a Casimir invariant of degree $e_i+1$ maps to a $2(e_i+1)$-form, 
and involves the choice of a (principal) connection $A$ on $P$.
We choose a linear map $\gamma:\ (S\fg^*)_{\text{inv}} \ \to \ \Omega^\bullet(M)$ such that $\gamma(x)$
is a closed form representing the class $h(x)$, $x\in  (S\fg^*)_{\text{inv}}$.  The differential $D$ on 
$\Omega^\bullet(M)\otimes \bigwedge P_\sfG$ is then given by 
\begin{multline}
D (\omega \otimes (\nu_1\wedge \ldots \wedge \nu_p) )= 
(d_M\omega)\otimes (\nu_1\wedge \ldots \wedge \nu_p) \\  + (-1)^{\text{deg}\, \omega} \sum_{i=1}^p 
(-1)^{i-1} (\omega\wedge \gamma(\tau \nu_i) ) \otimes (\nu_1\wedge \ldots \widehat{\nu_i}  \ldots  \wedge \nu_p) \,.
\end{multline}
Finally, the Chevalley homomorphism $\Phi: \ \Omega^\bullet(M)\otimes \bigwedge P_\sfG \to \Omega^\bullet(P)$ is
given on $\Omega^\bullet(M)$ by the pull-back of $\pi: P\to M$, and on $P_\sfG$ by $\nu \to \widetilde\nu = \Phi(\nu)$,
where $\widetilde\nu\in\Omega^\bullet(P)$ is chosen such that $\pi^*\gamma(\tau \nu) = d\widetilde \nu$.

Rather than describing all these maps in detail, let us just make this more explicit in the case of $\sfG = \sfSU(2)$.
In this case there exists only one primitive element $\nu\in \H^3(\sfSU(2)) \cong \RR$, which can be taken as 
the volume form on $\sfSU(2) \cong S^3$.  To compute the cohomology of a principal $\sfSU(2)$-bundle $\pi: P\to M$,
we note that $\gamma( \tau \nu) = c_2(P) \in \H^4(M)$.  This involves the choice of a connection
$A$ on $P$.  The transgression map is the statement that the primitive form $\nu$, under the 
Chevalley homomorphism $\Omega^\bullet(M)\otimes \bigwedge P_\sfG \to \Omega^\bullet(P)$ is represented 
by the Chern-Simons form $\CS(A)$, where $\pi^* c_2(P) = d \, \CS(A)$.  Hence, to compute the cohomology $\H^k(P)$ of $P$,
it suffices to consider $k$-forms on $P$ of the type 
\begin{equation*}
\omega = \pi^* \omega_k +  \pi^* \omega_{k-3}  \wedge  \CS(A) \,,
\end{equation*}
with $\omega_i \in \Omega^i(M)$, with differential
\begin{equation*}
d\omega = \pi^* (d\omega_k + (-1)^k \omega_{k-3} \wedge c_2(P) )  +  \pi^* d\omega_{k-3} \wedge \CS(A)\,.
\end{equation*}
In Sec.~\ref{isosez} we will generalize this statement to twisted cohomology.

Note that in the case of a torus $\sfG = \TT^n$, the statement of Theorem \ref{thm:thBb} reduces to the 
treatment in \cite{BHM05}, but while in the 
case of the torus the complex agrees with $(\Omega^\bullet(P))_{\text{inv}} \cong \Omega^\bullet(M) \otimes \bigwedge \fg^*$
for $\sfG=\sfSU(2)$ the space of invariant forms on $P$ is a lot bigger than $\Omega^\bullet(M)\otimes \bigwedge P_\sfG$.  
[The map $\fg^* \to (\Omega^1(P))_{\text{inv}}$ is through a principal connection $A\in\Omega^1(P,\fg)$.]

For those not familiar with Cartan-Weil theory, it is illuminating to apply 
the above computation to the principal $\sfSU(n)$-bundles $\sfSU(n+1) \to S^{2n+1}$. By induction one finds
\begin{equation*}
\H^\bullet( \sfSU(n+1) ) \cong \H^\bullet(S^3 \times \ldots \times S^{2n+1}) \,.
\end{equation*}

\medskip

\section{Construction of the spherical T-Dual: Classifying Space Approach}

We will now present an alternate approach to the construction of the T-dual.  In Subsec.~\ref{sec:sphericalTduality} we present a classifying space $R$ for a pair $(P,H)$  consisting of a principal $\sfSU(2)$-bundle $P\rightarrow M$ and $H\in\H^7(P, \ZZ)$.   This construction is similar in spirit 
to that  for $\sfU(1)$ bundles which was introduced in \cite{RR88} and studied in detail by Bunke and Schick \cite{BS}. 
The classifying space for classical T-duality of higher rank principal torus bundles was later studied in \cite{MR, MR2, MR3,BS2}.  
However in the present case,  in general a map $M\rightarrow R$ no longer uniquely defines the T-dual bundle 
$\widehat P$.   These instead are represented by the classifying space $S$, introduced in Subsec.~\ref{ssez}.  A pair $(P,H)$ on $M$ will be T-dualizable only if a map $M\rightarrow R$ lifts to a map $M\rightarrow S$ and there will be a distinct T-dual for each distinct lift.  On the other hand, in Subsec. \ref{ratsez} we will see that rationally $R$ and $S$ are homotopy equivalent and so, rationally, T-duals exist and are unique.  In Subsec. \ref{interpsez} we provide an interpretation for the rational homotopy theory approach.  Finally in Subsec.~\ref{qsez} we show that if $M$ is a 4-manifold then T-duals always exist and are unique.

\subsection{Classifying space of pairs}
\label{sec:sphericalTduality}

Here we will assume that $\sfG = \sfSU(2)$ (although some of what we show holds for more general simply-connected compact Lie groups).
Let $M$ be a locally compact Hausdorff 
space with the homotopy type of a finite CW complex.  
A \emph{pair} $(P, H)$ over $M$ will mean a principal
$\sfG$-bundle $P \to M$ together with a class $H \in \H^7(P,\ZZ)$ (note that for
dimension reasons, the restriction of $H$  to each $\sfG$ fiber is $0$).

\begin{theorem}
\label{thm:classspace}
The set of pairs $(P, H)$ as defined above, modulo isomorphism, is a 
representable functor, with representing space
\begin{equation}
\label{eq:ClassSpace} 
R = E\sfG \times_\sfG \Maps(\sfG, K(\ZZ, 7)), 
\end{equation}
where $E\sfG \to B\sfG$  is the universal $\sfG$-bundle. 
Note that  $R$ is path-connected since $H^7(\sfG;\ZZ)=0$.  
Then $R$ is a fiber bundle
\beq
\Maps(\sfG, K(\ZZ, 7))\longrightarrow R \stackrel{c}{\longrightarrow}  B\sfG
\eeq
with fiber $\Maps(\sfG, K(\ZZ, 7))$.
\end{theorem}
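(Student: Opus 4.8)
The plan is to reduce the statement to two classical representability facts and to package them via the Borel construction. Recall that isomorphism classes of principal $\sfG$-bundles over $M$ are classified by homotopy classes $[M, B\sfG]$, with $P \cong f^* E\sfG$ for a classifying map $f \co M \to B\sfG$; and that for any space $P$ one has the Eilenberg--MacLane identification $\H^7(P;\ZZ) \cong [P, K(\ZZ,7)]$, $H = \sigma^*\iota$, where $\iota \in \H^7(K(\ZZ,7);\ZZ)$ is the fundamental class. The idea is to encode the classifying map $f$ and the class $H$ simultaneously as a single map $M \to R$.

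First I would treat the fiber bundle structure, which is immediate from the construction. Let $F = \Maps(\sfG, K(\ZZ,7))$, on which $\sfG$ acts by translation in the source, $(g\cdot\phi)(x) = \phi(g^{-1}x)$. Then $R = E\sfG \times_\sfG F$ is by definition the associated bundle to the universal bundle $E\sfG \to B\sfG$ with fiber $F$, and the projection $c \co R \to B\sfG$ exhibits it as a fiber bundle with fiber $F$. For path-connectedness, note that $\pi_0 F = [\sfG, K(\ZZ,7)] \cong \H^7(\sfG;\ZZ) = 0$ since $\sfG = \sfSU(2) \cong S^3$, so $F$ is connected; as $B\sfG$ is connected, the homotopy exact sequence of the fibration forces $R$ to be path-connected.

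For representability, I would analyze a map $u \co M \to R$ by its two pieces. Its composite $f = c\circ u \co M \to B\sfG$ is a classifying map for a principal $\sfG$-bundle $P = f^* E\sfG$, and the remaining data is precisely a lift of $f$ through $c$, i.e. a section $s$ of the pullback bundle $f^*R \to M$. Identifying $f^*R \cong P \times_\sfG F$, a section is the same as a $\sfG$-equivariant map $P \to F$. By the exponential law such a map is adjoint to a map $P \times \sfG \to K(\ZZ,7)$ that is invariant under the diagonal action $(p,x)\cdot g = (pg, g^{-1}x)$; since $(P\times\sfG)/\sfG \cong P$ via $(p,x)\mapsto px$, this is the same as an unrestricted map $\sigma \co P \to K(\ZZ,7)$, hence, after passing to homotopy classes, a class $H = \sigma^*\iota \in \H^7(P;\ZZ)$. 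Thus a map $M \to R$ determines, and is determined by, a pair $(P,H)$.

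The main obstacle, and the step I would spend the most care on, is promoting this set-level bijection to a natural isomorphism of functors $[M, R] \cong \{(P,H)\}/\!\cong$. Here one must check that a homotopy of maps $M \to R$ induces an isomorphism of the associated pairs (and conversely), and that the correspondence is natural in $M$; this is where the conventions for the $\sfG$-action (left versus right translation) and the choice of \emph{free} rather than based mapping space must be pinned down consistently. The vanishing $\H^7(\sfG;\ZZ)=0$ is exactly what makes the basepoint subtleties invisible --- the restriction of $H$ to each fiber is automatically zero --- which is why free maps $\Maps(\sfG, K(\ZZ,7))$ are the right fiber and why $R$ comes out connected. Standard obstruction theory or a direct homotopy argument then upgrades the bijection to the required natural isomorphism.
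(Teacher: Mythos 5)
Your proposal is correct and follows essentially the same route as the paper: the paper's tautological pair $(\bE,\bh)$ with $\bh(u,[v,\gamma])=\gamma(g)$ (where $gv=u$) is exactly the explicit form of your exponential-law identification of $\sfG$-equivariant maps $P\to\Maps(\sfG,K(\ZZ,7))$ with maps $P\to K(\ZZ,7)$, and the converse construction $f(z)=[\wp(e),\gamma]$, $\gamma(g)=h(g\cdot e)$ is your adjunction run backwards. Both arguments defer the same final naturality and homotopy-invariance checks to the standard reference (the paper cites Bunke--Schick), so the proofs match in substance and in what they leave implicit.
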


\begin{proof}
Construct
a tautological pair $(\bE, \bh)$ over $R$ by setting 
$\bE = E\sfG \times \Maps(\sfG, K(\ZZ, 7))$ 
with the diagonal action of $\sfG$.
Then define $\bh\co \bE
\to K(\ZZ, 7)$ by the formula
\beq
\bh(u,\, [v, \gamma]) = \gamma(g),
\eeq
where $\gamma\in \Maps(\sfG, K(\ZZ, 7))$, $g\in \sfG,$ $u,\,v\in EG$,
$u$ and $v$ live over $c([v, \gamma])$, and $gv=u$. One can check 
that this is independent of the choices of $u$, $v$, and $\gamma$
representing a particular element of $\bE$. Clearly any map $f:M\to R$
enables one to pull back the canonical pair $(\bE, \bh)$ to a pair over
$M$. 

On the other hand, suppose we have a pair $(P,H)$ over
$M$. Since $P\xrightarrow{\pi} M$ is a principal $\sfG$-bundle, we know
that $P\xrightarrow{\pi}  M$ is 
pulled back from the universal bundle $E\sfG\to B\sfG$ via a map $\phi\co M \to
B\sfG$. We claim we can fill in the diagram
\[
\xymatrix{ &K(\ZZ, 7)& \bE\ar[d] \ar[l]_(.35)\bh \ar[dl]\\
P \ar[d]^\pi \ar[ru]^h \ar[r]_{\wp}  \ar@{.>}[rru]^{\wf}
& E\sfG \ar[d] & R \ar[ld]^(.45)c\\
M \ar[r]_(.45)\varphi \ar@{.>}[rru]^(.4){f}  |\hole& \,B\sfG . &}
\]
as shown, to make it commute, and to realize $(P,H)$ as the pull-back
of $(\bE, \bh)$. Indeed, we simply define $f(z)=[\wp(e), \gamma]$,
where $e\in \pi^{-1}(z) \subseteq P$, and where $\gamma\in 
\Maps(\sfG, K(\ZZ, 7))$ is defined by $\gamma(g)=h(g\cdot e)$. 
Note that $f(z)$ is independent of the choice of
$e$. We can define $\wf$ by $\wf(e)=\bigl[\wp(e),[\wp(e), \gamma] 
\bigr]$, with $e$ as before. The rest of the proof is as in \cite{BS}.
\end{proof}

Next we want to understand the homotopy type of $R$.
We start by studying the homotopy type of the fiber $\Maps(\sfG, K(\ZZ, 7))$. 
Note that 
\beq
\Maps(\sfG, K(\ZZ, 7)) \sim \Maps^+(\sfG, K(\ZZ, 7)) \times K(\ZZ, 7) \sim K(\ZZ, 4) \times K(\ZZ, 7).
\eeq
Therefore $R$ can be realized as a homotopy fibration 
\beq
K(\ZZ, 4) \times K(\ZZ, 7) \stackrel{c}{\to} R \to B\sfG,
\eeq
or equivalently, as will be used below, as a homotopy fibration 
\beq
K(\ZZ, 7) \to R \stackrel{\mp}{\to} K(\ZZ, 4) \times B\sfG.
\eeq

\subsection{Classifying space of spherical T-dual pairs of pairs} \label{ssez}

In the last subsection we saw that maps $f:M\rightarrow R$ classify pairs $(P,H)$ where 
$\pi:P\rightarrow M$ is a principal $\sfSU(2)$-bundle on $M$ and $H\in\H^7(P, \ZZ)$.  
However not all pairs have spherical T-duals and when they do, the dual is not necessarily unique.   
In this subsection we describe another classifying space, $S$, which classifies spherical 
T-dual pairs of pairs. In the case of the circle bundles, these classifying spaces coincide and 
topological T-duality works perfectly. However in our case, $S$ and $R$ are not homotopy equivalent.  Instead, there is only a map $g:S\rightarrow R$.  While pairs $(P,H)$ correspond to maps $f:M\rightarrow R$, in this subsection we will show that T-duals correspond to lifts $\tilde f:M\rightarrow S$ such that $g\tilde f=f$.

First, recall that $R$ is a $K(\Z,7)$ bundle over $K(\Z,4)\times B\sfG$.  Let $\widehat\alpha$ and $\alpha$ 
generate $\H^4(K(\Z,4), \Z)\cong\Z$ and $\H^4(B\sfG, \Z)\cong\Z$ respectively.  Then the characteristic class 
of the fibration $\mp:R\to K(\Z,4)$ is the $k$-invariant $\alpha\cup\widehat\alpha$.  Consider the map $g:B\sfG_1\times B\sfG_2\rightarrow 
K(\Z,4)\times B\sfG$ which corresponds to the generator of 
$\H^4(B\sfG_1, \Z) = [B\sfG_1, K(\Z, 4)]\cong\Z$ on $B\sfG_1$ and which is the identity on $B\sfG_2$.  
Here $B\sfG_1$ and $B\sfG_2$ are two copies of $B\sfG$.

\begin{theorem}
The set of spherical T-dual pairs $(P, H)$ and $(\widehat P, \widehat H)$, modulo isomorphism, is a representable functor, with 
representing space $S$, defined to be $P:S\rightarrow B\sfG_1\times B\sfG_2$,  the  homotopy $K(\Z, 7)$-fibration 
over $B\sfG_1\times B\sfG_2$ pulled back from $\mp:R\rightarrow K(\Z,4)\times B\sfG$ via the map $g$, that is, $S=g^*R$.
\end{theorem}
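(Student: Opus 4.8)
The plan is to follow the same pattern as the proof of Theorem~\ref{thm:classspace}: I would exhibit a tautological spherical T-dual pair over $S$ and then show that pulling it back sets up a natural bijection between homotopy classes of maps $M\to S$ and isomorphism classes of spherical T-dual pairs over $M$. The engine is the universal property of the pullback $S=g^*R$: since $\mp:R\to K(\Z,4)\times B\sfG$ is a fibration with fiber $K(\Z,7)$, a map $\tilde f:M\to S$ is the same datum as a map $f:M\to B\sfG_1\times B\sfG_2$ together with a lift of $g\circ f$ through $\mp$ to $R$.

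First I would assemble the tautological data over $S$. The projections $S\to B\sfG_1\times B\sfG_2\to B\sfG_i$ produce universal principal $\sfG$-bundles $\widehat{\bE}$ (from $B\sfG_1$) and $\bE$ (from $B\sfG_2$) over $S$, and the projection $S\to R$ of the pullback lets me pull back the tautological pair $(\bE,\bh)$ of Theorem~\ref{thm:classspace}, yielding a universal class $\bh\in\H^7(\bE)$. By the defining property of $g$ — it sends the $K(\Z,4)$-factor of the base of $R$, which receives $\pi_*\bh$, to the generator of $\H^4(B\sfG_1;\Z)$, i.e.\ to $c_2(\widehat{\bE})$ — the universal relation $\pi_*\bh=c_2(\widehat{\bE})$ holds on $S$.

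The crucial step is producing the universal dual flux $\widehat{\bh}\in\H^7(\widehat{\bE})$, and here I would run the argument of Theorem~\ref{thm:thBAb}(i) over the universal base $S$. On the correspondence space $\bE\times_S\widehat{\bE}$, with $\widehat p$ the projection to $\bE$ and $p$ the projection to $\widehat{\bE}$, base change for the defining pullback square gives $p_*\widehat p^*\bh=\widehat\pi^*\pi_*\bh=\widehat\pi^*c_2(\widehat{\bE})=0$, the last equality from the Gysin sequence of $\widehat{\bE}\to S$; hence $\widehat p^*\bh\in\ker p_*=\im p^*$, so $\widehat p^*\bh=p^*\widehat{\bh}$ for some $\widehat{\bh}\in\H^7(\widehat{\bE})$ (defined up to the $\im(\cup\,\pi^*c_2)$ ambiguity, which is exactly that of Theorem~\ref{thm:thBAb}(ii)). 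The same bookkeeping — including the consistency that forces the multiplicity to be $1$ — gives $\widehat\pi_*\widehat{\bh}=c_2(\bE)$, so that $(\bE,\bh,\widehat{\bE},\widehat{\bh})$ is a tautological spherical T-dual pair; the symmetry of the $k$-invariant $c_2\cup\widehat c_2$ under interchanging the two factors makes the construction symmetric in $(P,H)$ and $(\widehat P,\widehat H)$.

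Finally I would establish the bijection. Pulling $(\bE,\bh,\widehat{\bE},\widehat{\bh})$ back along $\tilde f:M\to S$ produces a spherical T-dual pair over $M$; conversely, given such a pair, its bundles $\widehat P,P$ classify a map $f:M\to B\sfG_1\times B\sfG_2$, while $(P,H)$ classifies some $\psi:M\to R$ by Theorem~\ref{thm:classspace}, and the relation $c_2(\widehat P)=\pi_*H$ together with the fact that the $B\sfG$-factor of $\mp\circ\psi$ records the bundle $P$ itself exhibits a homotopy $g\circ f\simeq\mp\circ\psi$, so the pullback property produces $\tilde f:M\to S$ whose composite with $S\to B\sfG_1\times B\sfG_2$ is $f$ and whose composite with $S\to R$ is $\psi$; the remainder is as in \cite{BS}. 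I expect the main obstacle to be twofold: carrying out the universal construction of $\widehat{\bh}$ and verifying $\widehat\pi_*\widehat{\bh}=c_2(\bE)$ on $S$ rather than over a fixed $M$; and, more delicately, matching the two equivalence relations, so that the $\H^3(M)$-indeterminacy of $\widehat H$ from Theorem~\ref{thm:thBAb}(ii) corresponds exactly to the homotopy ambiguity of the lift $\tilde f$, which is what validates the phrase ``modulo isomorphism.''
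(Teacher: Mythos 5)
Your proposal is correct and follows the paper's overall template (a tautological spherical T-dual pair over $S$, then a pullback/lift correspondence with the rest deferred to \cite{BS}), but two of your sub-steps take a genuinely different route. To produce the universal dual flux, the paper does not rerun the argument of Theorem~\ref{thm:thBAb}(i) over $S$; it observes that the swap involution of $B\sfG_1\times B\sfG_2$ preserves the $k$-invariant $\beta\cup\widehat\beta$ and hence lifts to a homotopy involution $i$ of $S$, defines $\widehat h=(gi)^*\bh$, and then verifies $\Pi_*h=c_2(\widehat\bF)$ and $(\Pi\otimes 1)^*\widehat h=(1\otimes\widehat\Pi)^*h$ using the Gysin sequence on $S$ (via $P^*\beta\cup P^*\widehat\beta=0$ while $P^*\beta\cup P^*\beta\neq 0$, so that $\ker(P^*\beta\cup)\subset\H^4(S)$ is generated by $c_2(\widehat\bF)$) and the homeomorphism $i$ induces on $\bF\times_S\widehat\bF$. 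Your correspondence-space construction of $\widehat{\bh}$ reaches the same class, but note that pinning down the normalization $\widehat\pi_*\widehat{\bh}=c_2(\bE)$ exactly (not a multiple) requires that explicit computation of $\ker(P^*\beta\cup)$ in $\H^4(S)\cong\Z^2$; the involution $i$ gives you the symmetry for free, whereas you must argue it. For representability, your use of the homotopy-pullback property applied to $(\phi,\psi)$ with $g\phi\simeq\mp\psi$ is cleaner than the paper's obstruction-theoretic check that $\phi^*(\beta\cup\widehat\beta)=c_2\cup\widehat c_2=0$, and it recovers $(P,H)$ automatically. The one point you flag but defer --- matching the $\H^3(M)$-indeterminacy of $\widehat H$ from Theorem~\ref{thm:thBAb}(ii) with the ambiguity in the lift $\tilde f$ --- is genuinely needed to see that every admissible $\widehat H$ arises from some lift; the paper closes it by computing $\ker\bigl(p^*:\H^7(\widehat P)\to\H^7(\widehat P\times_M P)\bigr)\cong\widehat\pi^*(c_2(P)\cup\H^3(M))$ and comparing with Theorem~\ref{thm:thBAb}(ii), and your write-up should do the same.
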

\begin{equation*}
\xymatrix{
K(\Z,7)\ar[r] &S\ar[d]_P&K(\Z,7)\ar[r] &R\ar[d]_\mp\\
& B\sfG_1\times B\sfG_2\ar[rr]^g&& K(\Z,4)\times B\sfG
}
\end{equation*}
The $k$-invariant of the $K(\Z,7)$ fibration $S$ is $\beta\cup\widehat\beta \in \H^8(B\sfG_1\times B\sfG_2;\ZZ)$ where $\beta=g^*\alpha$ and $\widehat\beta=g^*\widehat\alpha$ generate $\H^4$ of the two copies of $B\sfG$.  Lifting $g$ to the total spaces of the $K(\Z,7)$ fibrations one obtains an induced map $g : S \to R$ which we denote by the same symbol.

\begin{proof}
We can define two principal $G$-bundles over $S$, $\Pi:\bF\rightarrow S$ and $\widehat \Pi:\widehat \bF\rightarrow S$ 
to be the pullback of the universal $G$ bundle $EG\rightarrow BG$ to $S$ via the projection map $P$ composed 
with the projection maps $B\sfG_1\times B\sfG_2\rightarrow B\sfG_1$ and $B\sfG_1\times B\sfG_2\rightarrow B\sfG_2$ respectively.   
Note that $c_2(\bF)=P^*\beta$ and $c_2(\widehat \bF)=P^*\widehat\beta$.  
\begin{equation*}
\xymatrix{
P\ar[d]_{\pi}\ar[r]^{\tilde f}&\bF\ar[d]_{\Pi}\ar[r]^g&\bE\ar[d]\\
M \ar[r]^{\tilde f}& S\ar[r]^g&R
}\hspace{.9cm}
\xymatrix{
\hat P\ar[d]_{\hat\pi}\ar[r]^{\tilde f}&\hat \bF\ar[d]_{\hat \Pi}\ar[r]^{i}& \bF\ar[d]_{\Pi}\ar[r]^{g}& \bE\ar[d]\\
M \ar[r]^{\tilde f}&S \ar[r]^{i}& S\ar[r]^g&R
}
\end{equation*}
Consider a map $\tilde f:M\rightarrow S$.   One can obtain two principal $\sfSU(2)$-bundles $\pi:P\rightarrow M$ and 
$\widehat\pi:\widehat P\rightarrow M$ by pulling back those over $S$, $P=\tilde f^*\bF$ and 
$\widehat P=\tilde f^*\widehat \bF$.  Notice that $g\tilde f:M\rightarrow R$ and so, by 
Theorem \ref{thm:classspace}, $g\tilde f$ yields a pair $(P,H)$ with $H\in\H^7(P; \Z)$.  
As $g$ is the identity on $BG_2$, $\bF=g^*\bE$ and so the $P=\tilde f^*\bF$ obtained from $f$ is identical to $P=(g\tilde f)^*\bE$ obtained from $gf$.   

Let $i : B\sfG_1 \times B\sfG_2 \to B\sfG_1 \times B\sfG_2$ be the (homotopy) involution which exchanges $B\sfG_1$ and $B\sfG_2$.
Since it preserves the $k$-invariant of $S$, it lifts to a (homotopy) involution $i:S\rightarrow S$. Then 
$gi\tilde f:M\rightarrow R$ yields a pair $(\widehat P,\widehat H)$.  Therefore a map $\tilde f:M\rightarrow S$ yields two pairs $(P,H)$ and $(\widehat P,\widehat H)$.  

To be spherical T-dual these pairs need to satisfy
\beq
c_2(P)=\widehat\pi_*\widehat H\hsp c_2(\widehat P)=\pi_* H\hsp p^*\widehat H=\widehat p^* H.
\eeq
By naturality, these relations are the pullbacks $\tilde f^*$ of the corresponding statements on $S$
\beq
c_2(\bF)=\widehat \Pi_*\widehat h\hsp c_2(\widehat \bF)=\Pi_* h\hsp (\Pi\otimes 1)^*\widehat h=(1\otimes\widehat \Pi)^* h
\eeq
where $h\in\H^7(\bF, \Z)$ and $\widehat h\in\H^7(\widehat \bF, \Z)$ are pulled back from $\bE$ 
via $h=g^*\bh$ and $\widehat h=(gi)^*\bh$.  The third property is a consequence of the fact that $i$ induces 
a homeomorphism on the correspondence space 
$\bF\times_S\widehat \bF$.  At low dimensions only the $S^7\subset K(\Z,7)$ appears in the skeleton of 
$\bF$ and so the Gysin sequence can be used to demonstrate the first two properties as follows.  The facts
\beq
P^*\beta\cup P^*\beta\neq 0\in\H^8(S,\Z)\hsp P^*\widehat\beta\cup P^*\widehat\beta\neq 0\in\H^8(S,\Z)\hsp
P^*\beta\cup P^*\widehat\beta= 0\in\H^8(S,\Z)
\eeq
imply that $\im(\Pi_*:\H^7(\bF,\Z)\rightarrow\H^4(S,\Z))=\ker(P^*\beta\cup)$ is generated by $P^*\widehat\beta=c_2(\widehat \bF)$ and so, up to a sign which can be fixed by an automorphism, $c_2(\widehat \bF)=\Pi_* h$.  The T-dual statement follows by an identical argument using the Gysin sequence for $\widehat P$.  So we have shown that the pairs $(P,H)$ and $(\widehat P,\widehat H)$ corresponding to the map $M\rightarrow S$ are indeed T-dual.

We will say that a pair $(P,H)$ is {\it T-dualizable} if the corresponding map $f:M\rightarrow R$ lifts to a map $\tilde f:M\rightarrow S$ such that $g\tilde f=f$.  
\begin{equation*}
\xymatrix{
& S\ar[d]_{g} \\
M \ar@{.>}[ur]^{\tilde f}\ar[r]^{f} & R
}
\end{equation*}

In the case of T-duality of circle bundles, the homotopy equivalence between $K(\ZZ,2)$ and $B\sfU(1)$ induced an 
equivalence between the analogues of $R$ and $S$, and so pairs of 
$\sfU(1)$ bundles with 3-cocycles in the total space are always T-dualizable.  In the present case $B\sfSU(2)$ is not a model for $K(\Z,4)$ as $\sfSU(2)$ is not a model for $K(\Z,3)$ and so the lift may fail to exist or to be unique.  However there is one spherical T-dual for each lift $\tilde f$.

Are all spherical T-dual pairs of pairs representable by maps to $S$?  Consider a T-dual pair of pairs 
$(P,H)$ and $(\widehat P,\widehat H)$.  By Theorem \ref{thm:classspace} corresponding to $(P,H)$ there is a map 
$f:M\rightarrow R$ such that $P=f^*\bE$ and $H=f^*\bh$.   As $(\widehat P,\widehat H)$ is T-dual to $(P,H)$
\beq
c_2(\widehat P)=\pi_*(H)=\pi_*f^*\bh=f^*\pi_*\bh=f^*\mp^*\widehat\alpha.
\eeq
The bundles $\widehat\pi:\widehat P\rightarrow M$ and $\pi:P\rightarrow M$ can be pulled back from the universal 
$\sfG$-bundle and so are represented by a map $\phi:M\rightarrow B\sfG\times B\sfG$ such that 
$(\widehat P,P)=\phi^* (E\sfG\times B\sfG,B\sfG\times E\sfG)$.  In particular, as $\widehat\beta$ and $\beta$ are the generators the two copies of 
$\H^4(B\sfG)$, the Chern classes can be expressed in terms of their pullbacks as $\phi^*(\widehat\beta,\beta)=(c_2(\widehat P),c_2(P))$.  Recall that $g:B\sfG_1\times B\sfG_2\rightarrow K(\Z,4)\times B\sfG $ 
is the map representing the generator of $\H^4(B\sfG_1)$ on $B\sfG_1$ and the identity on $B\sfG_2$.  Then
\beq
\phi^*g^*(\widehat\alpha,\alpha)=\phi^*(\widehat\beta,\beta)=(c_2(\widehat P),c_2(P)) = (f^*\mp^*(\widehat\alpha),f^*\mp^*(\alpha)).
\eeq
Thus we learn that $\phi:M\rightarrow B\sfG_1\times B\sfG_2$ lifts $\mp f:M\rightarrow B\sfG\times K(\Z,4)$ so that $\mp f=g\phi$.  
\begin{equation*}
\xymatrix{
& B\sfG_1\times B\sfG_2 \ar[dr]_{g}&R\ar[d]^\mp \\
M \ar[ur]^\phi\ar[urr]_{f}\ar[rr]_{\mp\circ f=g\circ\phi} & & B\sfG\times K(\Z,4)
}
\end{equation*}

To prove that this pair of pairs is representable, we need to construct not $\phi$ but rather its lift $\tilde f:M\rightarrow S$, where $\phi=P\tilde f$, $P:S\rightarrow B\sfG_1\times B\sfG_2$. 
\begin{equation*}
\xymatrix{
&S\ar[d]_P\ar[dr]_g&\\
& B\sfG_1\times B\sfG_2 \ar[dr]_{g}&R\ar[d]^\mp \\
M  \ar@{.>}[uur]^{\tilde f} \ar[ur]^\phi\ar[urr]_{f}\ar[rr]_{\mp\circ f=g\circ\phi} & & B\sfG\times K(\Z,4)
}
\end{equation*}
The obstruction to this lift is just the pullback of the $k$-invariant characteristic class of $P:S\rightarrow  B\sfG_1\times B\sfG_2$, 
\beq
\phi^*(\beta\cup\widehat\beta)=c_2\cup\widehat c_2=0
\eeq
where the last equality is a consequence of the fact that $(P,H)$ and $(\widehat P,\widehat H)$ are T-dual together with the Gysin sequence for either $P\rightarrow M$ or $\widehat P\rightarrow M$.  The obstruction is just the characteristic class of the restriction of the $K(\Z,7)$ bundle $S\rightarrow BG_1\times BG_2$ to $\phi(M)$; as it vanishes, this restricted bundle is trivial.  Therefore any section $\tilde f:M\rightarrow S$  of this trivial bundle provides a lift of $\phi$ and so exists, although in general it will not be unique.   As $\tilde f$ lifts $f$ and $f$ represents $(P,H)$, one can pull back the 7-class $h$ from $\bF$ to $P$ to obtain $H$ as desired.

Where is the ambiguity in choosing $\widehat H$?  Begin with a pair $(P,H)$, which determines a map $f:M\rightarrow R$.  Recall that there will be a T-dual pair $(\widehat P,\widehat H)$ for each lift $\tilde f:M\rightarrow S$ such that $g\tilde f=f$.  In particular $\widehat P=\tilde f^*\widehat \bF$.   We construct the dual twist as
\beq
\widehat H=\tilde f^* h=\tilde f^*i^*g^*\bh\hsp \tilde f:\widehat P\rightarrow\widehat \bF\hsp i:\widehat \bF\rightarrow \bF\hsp g:\bF\rightarrow \bE
\eeq
where we have lifted everything to the dual bundles.  

The dual twist is therefore determined by the map $\tilde f^*:\H^7(\widehat \bF)\rightarrow\H^7(\widehat P)$.  The Gysin sequence for $\widehat F$ yields
\beq
\xymatrix{
0\ar[r]^{\pi^*} & \H^7(\widehat \bF)\ar[r]^{\pi_*} & \H^{4}(S)\ar[r]^{\widehat\beta\cup} & \H^{8}(S). 
}\
\eeq  
As $\ker(P^*\widehat\beta\cup \cdot :\H^4(S)\rightarrow\H^8(S))=\beta\Z\cong\Z$ and $\pi_*:\H^7(\widehat \bF)\rightarrow\H^4(S)$ is an isomorphism onto this kernel, we find $\H^7(\bF)\cong\Z$.  It is generated by $\widehat h=i^*h$.  So $\widehat h$ is well defined, up to a sign.  However $\widehat H = \tilde f^*h$ is not unique determined as it depends upon the choice of lift $\tilde f$.  As $f=g\tilde f$ we know that $\tilde f^* g^*=f^*$ but there is no formula for $\widehat H$ as $f^*$ of a class on a bundle over $R$, so this does not uniquely determine $\widehat H$.  

However we know that on the correspondence space $\widehat P\times_M P$, $p^*\widehat H=\widehat p^* H$ which can be pulled back from a corresponding identity on the correspondence space $\widehat \bF\times_S \bF$
\beq
(\Pi\otimes 1)^*\widehat h=(1\otimes\widehat\Pi)^* h\in\H^7(\widehat \bF\times_S \bF).
\eeq
As a result, $p^*\widehat H$ is known, and so the difference $\tilde f_1^*(h)-\tilde f_2^*(h)$ in the values of $\widehat H$ defined via two distinct lifts must be in
\beq
\ker\left(p^*:\H^7(\widehat P)\rightarrow\H^7(\widehat P\times_M P)\right)\cong \widehat\pi^*c_2(P)\cup\H^3(\widehat P)\cong \widehat\pi^*(c_2(P)\cup\H^3(M))
\eeq
where in the last step we assumed $c_2(\widehat P)\neq 0$.   This matches the set of differences of admissible values of $\hat H$ found in Theorem \ref{thm:thBAb} (ii), and so for any choice of T-dual pair $(\widehat P,\widehat H)$ there always exists a lift $\tilde f:M\rightarrow S$ such that $\tilde f^*h$ will be equal to the desired value of $\hat H$.
\end{proof}

Do the different values of $\widehat H$ correspond to distinct T-dual pairs $(\widehat P,\widehat H)$?  In the circle bundle case such an ambiguity can be absorbed via an automorphism of the bundle \cite{BS}.  In the present case
\begin{lemma}
While a pair $(P,H)$ does not necessarily uniquely determine a spherical T-dual $\sfSU(2)$-bundle $\widehat P$, it does determine a T-dual 2-gerbe.  
\end{lemma}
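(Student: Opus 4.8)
The plan is to exploit the fact that a bundle $2$-gerbe on $M$ is classified precisely by $\H^4(M;\Z)=[M,K(\Z,4)]$, so that the one piece of data surviving the passage from $\widehat P$ to the $2$-gerbe is exactly the canonical class $\pi_* H$. I would therefore \emph{define} the spherical T-dual $2$-gerbe $\widehat{\cG}$ to be the bundle $2$-gerbe on $M$ whose characteristic class is $\pi_* H\in\H^4(M;\Z)$. Because $\pi_*$ is canonical and $2$-gerbes are classified up to stable isomorphism by $\H^4(M;\Z)$, the object $\widehat{\cG}$ is determined by $(P,H)$ alone. This immediately explains the contrast with the lemma's first clause: a dual $\sfSU(2)$-bundle must satisfy $c_2(\widehat P)=\pi_* H$, but $c_2\colon[M,B\sfSU(2)]\to\H^4(M;\Z)$ is not injective, so distinct bundles $\widehat P$ may share this $c_2$; all of them, however, map under the natural $B\sfSU(2)\to K(\Z,4)$ to the \emph{same} $\widehat{\cG}$.

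I would then recast this in the classifying-space language of Section~3, where it becomes almost tautological. The dual bundle was classified by the factor $B\sfG_1$ in $g\colon B\sfG_1\times B\sfG_2\to K(\Z,4)\times B\sfG$, and the failure of existence and uniqueness of lifts $\tilde f\colon M\to S=g^*R$ was due entirely to $g|_{B\sfG_1}=c_2$ not being a homotopy equivalence. Replacing $B\sfG_1$ by its target $K(\Z,4)$---i.e.\ classifying the dual side by a $2$-gerbe rather than an $\sfSU(2)$-bundle---turns $g$ into the identity $K(\Z,4)\times B\sfG\to K(\Z,4)\times B\sfG$, whence the pulled-back fibration is $S_{\mathrm{gerbe}}=\mathrm{id}^*R=R$ itself. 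Any $f\colon M\to R$ then lifts along the identity uniquely to $f$, so the T-dual $2$-gerbe exists for \emph{every} pair and is unique; the universal dual $2$-gerbe over $R$ carries the class $\mp^*\widehat\alpha=\pi_*\bh$, which pulls back along $f$ to $\pi_* H$.

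To complete the picture I would construct the dual flux and show the residual freedom in $\widehat H$ is gauge. Modelling the total space of $\widehat{\cG}$ as a $K(\Z,3)$-bundle $\widehat\pi\colon\widehat Y\to M$ with Euler class $\pi_* H$, and using that $K(\Z,3)$ is rationally $S^3$, the Gysin sequence $\H^k(M)\xrightarrow{\widehat\pi^*}\H^k(\widehat Y)\xrightarrow{\widehat\pi_*}\H^{k-3}(M)\xrightarrow{\cup\,\pi_* H}\H^{k+1}(M)$ has the same shape as in Theorem~\ref{thm:thBAb}. The T-duality relation $c_2(P)\cup\pi_* H=0$ puts $c_2(P)$ in $\im\widehat\pi_*$, and imposing $\widehat p^* H=p^*\widehat H$ on the correspondence space pins $\widehat H$ down to the familiar ambiguity $\widehat\pi^*(a\cup c_2)$, $a\in\H^3(M)$. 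The decisive gain is that automorphisms of $\widehat{\cG}$ over $\mathrm{id}_M$ are classified by $[M,K(\Z,3)]=\H^3(M;\Z)$---\emph{all} of $\H^3(M;\Z)$. Repeating the Maurer--Cartan computation of Section~2 with the fundamental class $\iota_3\in\H^3(K(\Z,3))$ in place of $\Omega\in\H^3(\sfSU(2))$ shows such an automorphism sends $\widehat H\mapsto\widehat H+\widehat\pi^*(a\cup c_2)$; since every $a$ is now realized---unlike the non-surjective image of $[M,\sfSU(2)]\to\H^3(M)$ exhibited for $M=\sfSU(3)$---all admissible $\widehat H$ become equivalent, and the T-dual $2$-gerbe-with-flux is unique up to automorphism.

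The step I expect to be most delicate is making the Gysin construction of $\widehat H$ precise over $\Z$: a genuine $K(\Z,3)$-bundle has torsion in its fiber cohomology starting with $\H^6(K(\Z,3);\Z)\cong\Z_2$, so the $S^3$-type sequence above is literally exact only rationally, or integrally within the degree range where these corrections are absent. I would circumvent this either by restricting $\dim M$ so the torsion is invisible, or by working with an explicit bundle-$2$-gerbe model in which both the degree-$7$ flux and the $\iota_3$-shift are defined at the cochain level. The conceptual content---that replacing $B\sfSU(2)$ by $K(\Z,4)$ simultaneously rigidifies the dual object through $\H^4$ and enlarges its automorphism group to the full $\H^3$, thereby killing both the bundle ambiguity and the flux ambiguity---is insensitive to this technical point.
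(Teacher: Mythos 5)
Your proposal is correct and follows essentially the same route as the paper: the T-dual 2-gerbe is the pullback along $f:M\to R$ of the universal 2-gerbe classified by the $K(\Z,4)$ factor, hence is determined by $\pi_*H\in\H^4(M;\Z)$ alone, and the ambiguity in $\widehat H$ is absorbed because 2-gerbe automorphisms realize all of $\H^3(M)$ rather than only the image of $[M,\sfSU(2)]\to\H^3(M;\Z)$. Your caveat that the integral Gysin-type sequence for a $K(\Z,3)$-bundle acquires corrections beginning with $\H^6(K(\Z,3);\Z)\cong\Z_2$ is a genuine technical point that the paper's own discussion glosses over.
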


To see this, consider the associated map $f:M\rightarrow R$.  The $K(\Z,4)$ in $R$ is the classifying space for 2-gerbes.  Thus, a 2-gerbe with characteristic class $P^*\widehat\alpha$ can be pulled back from $R$ to $M$ via $f$.  Consider an element $a\in\H^3(M, \Z)$.  This can be represented by a map $\phi:M\rightarrow K(\Z,3)$ or equivalently by a 1-gerbe on $M$ with characteristic class $a$.  Trivial 2-gerbes are classified by 1-gerbes and trivial 2-gerbes act as automorphisms on 2-gerbes that do not change their characteristic class.  Therefore, automorphisms $A_a$ of the 2-gerbe are isomorphic to the group $\H^3(M)$ of values of $a$.   

Realizing the 2-gerbe as a $K(\Z,3)$ bundle and the 1-gerbe as a map $\theta:M\rightarrow K(Z,3)$, this automorphism is just the fiberwise multiplication of the $K(\Z,3)$ over each point $m\in M$ by $\theta(m)$.  Realizing $\widehat H$ as a 7-cocycle of the total space of the 2-gerbe which pulls back from an $\sfSU(2)$-subbundle, if it exists, by the Gysin sequence $\widehat H$ can be split into a component which pulls back from $M$ and a component which pushes forward to yield $\widehat\pi_*\widehat H=c_2$.   The automorphism changes the choice of splitting because $\pi_*(A_a^*\pi^*)\neq 0$.  The choice of splitting changes by $a$, so $(\widehat\pi_*)^{-1}(1)$ is increased by $\widehat\pi^*a$.  Therefore $(\widehat\pi_*)^{-1}(c_2)$ is increased by $\widehat\pi^*(c_2\cup a)$, resulting in $\widehat H\mapsto\widehat H+\widehat\pi^*(c_2\cup a)$.  As a result the ambiguity in the choice of $\widehat H$ can be removed by a transformation of the 2-gerbe.

However we are not interested in the 2-gerbe itself, but the $\sfG$-bundle $\widehat P$ to which it lifts when $(P,H)$ is T-dualizable.  Let us restrict our attention to $\sfG=\sfSU(2)$.  Given an $\sfSU(2)$-bundle we can naturally associate a 2-gerbe \cite{danny}.  Thus we can pull the 2-gerbe over $R$ back to $S$ using $g$.  The characteristic class of the pulled back 2-gerbe will be equal to $c_2(\bF)\in\H^4(S)$ and so the T-dual  will be the 2-gerbe associated to $\widehat \bF\rightarrow S$.   Furthermore if $(P,H)$ is T-dualizable then a lift $\tilde f$ exists which we can use to pull the 2-gerbe back to $M$, where it will be the 2-gerbe associated to $\widehat P$.  The 1-gerbe automorphism with characteristic class $a$ can also be pulled back to $M$, indeed $a\in\H^3(M)$.    How can we act the 1-gerbe on $\sfSU(2)$ fibers?  Again consider a 2-gerbe to be a $K(\Z,3)$ bundle and a 1-gerbe a map to $K(\Z,3)$.  Now the 1-gerbes act on 2-gerbes by fiberwise multiplication.  

There is a rather nice explicit description not involving 2-gerbes in the case of interest to string theory. Let us begin by recalling the following well known fact.
\begin{lemma} If $dim(M) \le 13$, we can realize $K(\Z,3)$ by the Lie group $\sfE_8$ and the automorphism will simply correspond to the fiberwise group multiplication in $\sfE_8$.   In particular, principal $\sfE_8$ bundles over $M$ are 
classified up to isomorphism by their first Pontryagin class, $p_1=c_2 \in \H^4(M, \Z)$.
\end{lemma}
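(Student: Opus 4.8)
The plan is to exploit the well-known low-dimensional homotopy type of $\sfE_8$, which makes it a finite-dimensional Lie-group model for $K(\Z,3)$ over a wide range of dimensions. The single deep input is the computation of the low homotopy groups of $\sfE_8$: it is $2$-connected, $\pi_3(\sfE_8)\cong\Z$, and
\begin{equation*}
\pi_i(\sfE_8)=0 \quad\text{for}\ 4\le i\le 14, \qquad \pi_{15}(\sfE_8)\cong\Z.
\end{equation*}
Equivalently $\H^3(\sfE_8;\Z)\cong\Z$ with $\H^i(\sfE_8;\Z)=0$ for $4\le i\le 14$ and the next generator in degree $15$, consistent with the exponents $1,7,11,13,\dots$ of $\sfE_8$ placing the rational generators of $\H^\bullet(\sfE_8)$ in degrees $3,15,23,\dots$. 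I would take this computation as known and build everything on top of it.

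First I would produce the comparison map. Let $\phi\co\sfE_8\to K(\Z,3)$ classify the generator of $\H^3(\sfE_8;\Z)\cong\Z$. By the homotopy groups above, $\phi$ is an isomorphism on $\pi_i$ for $i\le 14$ and an epimorphism on $\pi_{15}$, so its homotopy fiber is $14$-connected. Obstruction theory then shows that
\begin{equation*}
\phi_*\co [M,\sfE_8]\ \longrightarrow\ [M,K(\Z,3)]=\H^3(M;\Z)
\end{equation*}
is a bijection: the obstruction to lifting a map $M\to K(\Z,3)$ through $\phi$ lies in $\H^{16}(M;\Z)$ and the indeterminacy of such lifts in $\H^{15}(M;\Z)$, both of which vanish once $\dim M\le 13$. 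This is the precise sense in which ``$K(\Z,3)$ is realized by $\sfE_8$'': every $a\in\H^3(M;\Z)$ is represented by an honest map $\theta\co M\to\sfE_8$, unique up to homotopy.

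Next I would pass to classifying spaces. Since $\pi_i(B\sfE_8)=\pi_{i-1}(\sfE_8)$, the map $B\phi\co B\sfE_8\to K(\Z,4)$ classifying the generator of $\H^4(B\sfE_8;\Z)\cong\Z$ is an isomorphism on homotopy through degree $15$, with $15$-connected homotopy fiber. The same obstruction argument (obstruction in $\H^{17}(M;\Z)$, indeterminacy in $\H^{16}(M;\Z)$) gives a bijection
\begin{equation*}
[M,B\sfE_8]\ \longrightarrow\ \H^4(M;\Z),\qquad P\longmapsto p_1(P),
\end{equation*}
for $\dim M\le 13$. Here $p_1$ is the pullback of the generator of $\H^4(B\sfE_8;\Z)$; under a Lie-group homomorphism $\sfSU(2)\hookrightarrow\sfE_8$ of Dynkin index one this generator restricts to $c_2$, which yields the identification $p_1=c_2\in\H^4(M;\Z)$ asserted in the statement.

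Finally, the automorphism claim. The Lie multiplication $\sfE_8\times\sfE_8\to\sfE_8$ makes fiberwise multiplication by $\theta\co M\to\sfE_8$ a bundle automorphism, and I would check that under $\phi$ this matches the loop-space (abelian) addition on $K(\Z,3)$, i.e.\ exactly the $1$-gerbe action by $a\in\H^3(M)$ described above. The one thing to verify is that the two multiplications agree in the relevant range. The deviation from $\phi$ being an $H$-map is detected on the smash $\sfE_8\wedge\sfE_8$, whose bottom cell is $S^3\wedge S^3=S^6$, so that $[\sfE_8\wedge\sfE_8,K(\Z,3)]=\H^3(\sfE_8\wedge\sfE_8;\Z)=0$; equivalently the governing Samelson product $\pi_3(\sfE_8)\otimes\pi_3(\sfE_8)\to\pi_6(\sfE_8)=0$ vanishes, so $\sfE_8$ is homotopy-commutative on its $3$-skeleton and its group law agrees with the abelian $H$-space structure of $K(\Z,3)$ in this range. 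I expect the main obstacle to be not the (routine) obstruction theory but the appeal to the deep homotopy-group computation of $\sfE_8$, together with the verification that the non-abelian Lie multiplication is compatible with the abelian structure of $K(\Z,3)$; these are standard but are the genuine content, and are exactly what pin down the dimension restriction.
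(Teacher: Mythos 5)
Your proposal is correct, but there is nothing in the paper to compare it against: the authors introduce this lemma with ``let us begin by recalling the following well known fact'' and supply no proof at all, treating it as a citation-level statement. What you have written is precisely the standard argument that underlies the claim in the literature. The one genuinely deep input is, as you say, the vanishing $\pi_i(\sfE_8)=0$ for $4\le i\le 14$ (with $\pi_3\cong\Z$, $\pi_{15}\cong\Z$); everything else is routine. Your three steps all check out: (i) the classifying map $\phi\co\sfE_8\to K(\Z,3)$ for the generator of $\H^3(\sfE_8;\Z)$ has $14$-connected homotopy fiber, so $[M,\sfE_8]\to\H^3(M;\Z)$ is bijective once $\dim M\le 14$; (ii) delooping gives $[M,B\sfE_8]\cong\H^4(M;\Z)$ for $\dim M\le 15$, and pulling the generator back along the index-one embedding $j\co\sfSU(2)\hookrightarrow\sfE_8$ used elsewhere in this section identifies it with $c_2$; (iii) the deviation of $\phi$ from being an $H$-map lives in $\H^3(\sfE_8\wedge\sfE_8;\Z)$, which vanishes because the smash is $5$-connected, so fiberwise Lie multiplication by $\theta\co M\to\sfE_8$ realizes the abelian $1$-gerbe action by $[\theta]\in\H^3(M;\Z)$. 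Note that your argument actually yields the statement in a slightly larger range than the paper's $\dim M\le 13$ (up to $14$ for maps into $\sfE_8$, up to $15$ for the bundle classification); the paper's bound is simply conservative. The only stylistic caveat is that a referee would want an explicit reference for the homotopy groups of $\sfE_8$ in that range, since that computation is the entire content of the lemma.
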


\newcommand{\longhookrightarrow}{\ensuremath{\lhook\joinrel\relbar\joinrel\rightarrow}}

We are interested in understanding more explicitly what it means for principal $\sfSU(2)$-bundles over $M$ to be 
equivalent as 2-gerbes, or equivalently, when they have the same 2nd Chern class $c_2$. 
Consider $\sfSU(2)$ that is minimally embedded into $\sfE_8$, $j\colon \sfSU(2)\longhookrightarrow\sfE_8$, for example by considering a single simple root, so that the embedding induces an isomorphism of the third cohomology groups. 
Then we deduce,

\begin{lemma}
Let $P_k, \, k=1, 2$ denote principal $\sfSU(2)$-bundles over $M$, where $dim(M)\le 13$. Define the associated principal 
$\sfE_8$-bundles $Q_k=P_k \times_{\sfSU(2)} \sfE_8$ using the faithful homomorphism $j$ as above. Then 
$P_k, \, k=1, 2$ are equivalent as 2-gerbes if and only if $Q_k, \, k=1, 2$ are isomorphic principal $\sfE_8$-bundles. In particular if and only if $c_2(P_1)=c_2(P_2)$.
\end{lemma}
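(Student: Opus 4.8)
The plan is to prove the Lemma as a chain of equivalences linked by the common invariant $c_2\in\H^4(M;\ZZ)$. Concretely I would establish
\[
(P_1\sim P_2)\iff (c_2(P_1)=c_2(P_2))\iff (c_2(Q_1)=c_2(Q_2))\iff (Q_1\cong Q_2),
\]
where $\sim$ denotes equivalence as $2$-gerbes. The first equivalence is exactly the characterization recalled just above: two principal $\sfSU(2)$-bundles are equivalent as $2$-gerbes precisely when their second Chern classes coincide. The last equivalence is the content of the preceding Lemma: since $\dim M\le 13$ lies below $15$, the first degree above $3$ in which $\sfE_8$ carries nonzero homotopy, the space $B\sfE_8$ agrees with $K(\ZZ,4)$ on the relevant skeleton, so principal $\sfE_8$-bundles over $M$ are classified up to isomorphism by their characteristic class $p_1=c_2\in\H^4(M;\ZZ)$. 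Thus only the middle equivalence requires genuine work.

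For the middle step I must establish the naturality identity $c_2(Q_k)=c_2(P_k)$. Classifying $P_k$ by a map $\varphi_k\colon M\to B\sfSU(2)$, the associated bundle $Q_k=P_k\times_{\sfSU(2)}\sfE_8$ is classified by the composite $Bj\circ\varphi_k\colon M\to B\sfE_8$, where $Bj$ is induced by the homomorphism $j\colon\sfSU(2)\hookrightarrow\sfE_8$. Hence $c_2(Q_k)=\varphi_k^*(Bj)^*\iota$, where $\iota\in\H^4(B\sfE_8;\ZZ)$ denotes the universal characteristic class, and it suffices to check that $(Bj)^*\colon\H^4(B\sfE_8;\ZZ)\to\H^4(B\sfSU(2);\ZZ)$ sends $\iota$ to the universal second Chern class.

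I would verify this by comparing the universal fibrations $\sfSU(2)\to E\sfSU(2)\to B\sfSU(2)$ and $\sfE_8\to E\sfE_8\to B\sfE_8$. Both $\H^3(\sfSU(2);\ZZ)$ and $\H^3(\sfE_8;\ZZ)$ are infinite cyclic and transgress onto the generators of $\H^4(B\sfSU(2);\ZZ)\cong\ZZ$ and $\H^4(B\sfE_8;\ZZ)\cong\ZZ$ respectively. Since $j$ is a group homomorphism it induces a morphism of these fibrations commuting with transgression, and because $j$ is the \emph{minimal} embedding, chosen precisely so that $j^*\colon\H^3(\sfE_8;\ZZ)\to\H^3(\sfSU(2);\ZZ)$ is an isomorphism, the map $(Bj)^*$ carries the generator of $\H^4(B\sfE_8;\ZZ)$ to that of $\H^4(B\sfSU(2);\ZZ)$. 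This yields $c_2(Q_k)=c_2(P_k)$ and closes the chain. I expect this transgression comparison to be the main obstacle: the two outer equivalences are a citation and a formal characterization, whereas here one must confirm that the chosen embedding is normalized to induce an \emph{isomorphism} (not merely a finite-index inclusion) on $\H^3$, since a nontrivial index would rescale the identification and spoil the clean identity $c_2(Q_k)=c_2(P_k)$, even though it would not disturb the final iff statement.
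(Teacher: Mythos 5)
Your proposal is correct and follows essentially the same route the paper intends: the lemma is stated there as a direct deduction from the preceding remarks (2-gerbe equivalence of $\sfSU(2)$-bundles being detected by $c_2$, the classification of principal $\sfE_8$-bundles over $M$ with $\dim M\le 13$ by $p_1=c_2\in \H^4(M;\ZZ)$, and the minimality of $j$ forcing an isomorphism on $\H^3$), and your chain of equivalences simply makes that deduction explicit, with the transgression comparison supplying the naturality identity $c_2(Q_k)=c_2(P_k)$ that the paper leaves implicit.
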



Now can we use a map $\psi:M\rightarrow \sfE_8$ to create an automorphism on an $\sfSU(2)$-bundle?  The embedding of $\sfSU(2)$ in $\sfE_8$ gives a multiplication rule $\sfE_8\times\sfSU(2)\rightarrow \sfE_8$.  Thus the $\sfE_8$ action on an $\sfSU(2)$-bundle creates a new bundle with transition functions in $\sfE_8$, not an $\sfSU(2)$ subbundle as desired.  However we would get an $\sfSU(2)$ subbundle if we could lift $\psi$ to $\tilde\psi:M\rightarrow \sfSU(2)$.   The obstruction to removing the ambiguity in the construction of $\widehat{H}$ with an $\sfSU(2)$ bundle automorphism is just the obstruction to the existence of the lift $\tilde\psi$ satisfying $j\tilde\psi=\psi$.

\begin{equation*}
\xymatrix{
& \sfSU(2)\ar[d]_{j} \\
M \ar@{.>}[ur]^{\tilde \psi}\ar[r]^{\psi} & \sfE_8
}
\end{equation*}

More generally, in any dimension $a\in\H^3(M)$ can be represented by a map $F:M\rightarrow K(\Z,3)$.  The ambiguity in the definition of $\widehat H$ corresponding to $a$ can be eliminated by a bundle automorphism $\tilde F:M\rightarrow \sfSU(2)$ if and only if this 1-gerbe $a$, which generates the automorphism on the 2-gerbe, lifts to $\tilde F$ such that the pullback $\tilde F^*$ of the top class of $\sfSU(2)$ is equal to the characteristic class $a$ of the 1-gerbe.   If $J:\sfSU(2)\rightarrow K(\Z,3)$ represents the generator of $\H^3(\sfSU(2))$ then the ambiguity in $\widehat H$ corresponding to $a$ can be undone via an 
$\sfSU(2)$ bundle automorphism if there exists a lift $\tilde F$ such that $J \tilde F=F$
\begin{equation*}
\xymatrix{
& \sfSU(2)\ar[d]_{J} \\
M \ar@{.>}[ur]^{\tilde F}\ar[r]^{F} & K(\Z,3)
}
\end{equation*}
Rationally the existence of the automorphism is equivalent to the existence of change of connection (\ref{gaugex}) of the $\sfSU(2)$ bundle which shifts the Chern-Simons form on $\widehat P$ by any integral period closed form corresponding to $a$.  The obstructions to these objects are the obstruction to $\widehat H$ being defined up to bundle automorphism given a pair $(P,H)$.

\subsection{Rational homotopy theory approach} \label{ratsez}

In the previous 2 subsections, we constructed a pair of classifying spaces $(R, S)$, where $R$ consists of 
pairs $(P, H)$ over $M$ consisting of a principal
$\sfG$-bundle $P \to M$ together with a class $H \in \H^7(P,\ZZ)$,  $\sfG=\sfSU(2)$ as before, 
and $S$ consists of spherical T-dual pairs of such pairs. The problem with spherical T-duality
in higher dimensions can be encapsulated by the observation that $R\ne S$. Using rational homotopy theory,
we observe that 
the rationalizations $R_\QQ = S_\QQ$ are equal and so spherical T-duality works nicely over the rationals.
We will explain what this actually means in the next section. 

Recall that we have a homotopy fibration 
\beq
K(\ZZ, 7) \to R \stackrel{\mp}{\to} K(\ZZ, 4) \times B\sfG.
\eeq
Therefore 
\beq
\pi_j(R)=\pi_j(B\sfG), \qquad\text{if}\,\, j<4 \,\,\,\text{or}\,\,\, j>8  \,\,\,\text{or}\,\,\, j=6.
\eeq
In the other cases, one has that $\pi_j(R)$ is an extension of $\pi_j(B\sfG)$ by $\ZZ$.
Herein lies a serious problem, namely since the homotopy groups of spheres and 
in particular of $\sfG$ are unknown, therefore the homotopy groups of $B\sfG$ are also 
unknown. It follows that the homotopy groups
of $R$ are also unknown!

However there is a partial fix, given by Quillen's {\em rational homotopy theory} \cite{Q} of simply connected spaces,
motivated by the well known result of Serre that the homotopy groups of spheres when tensored over the rationals,
are completely understood. For more details on this theory, see also \cite{FHT}, \cite{KH}.

The {\em rationalization} of a simply connected space $X$, denoted $X_\QQ$, has homotopy groups 
$\pi_j(X_\QQ) = \pi_j(X) \otimes \QQ.$ For example, $\sfG_\QQ = K(\ZZ, 3)_\QQ$ and 
$B\sfG_\QQ = K(\ZZ,4)_\QQ$.

So instead we study the rational homotopy type of $R$, denoted $R_\QQ$. 

\begin{theorem}\label{thm:homtypeofR}
The space $R_\QQ$  has only two non-zero homotopy groups,  
$\pi_4(R_\QQ)\cong \QQ^{2}$, and $\pi_7(R_\QQ)\cong  \QQ$. 
\end{theorem}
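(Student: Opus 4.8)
The plan is to read off the rational homotopy groups of $R$ directly from the long exact sequence of homotopy groups attached to the rationalization of the homotopy fibration
\beq
K(\ZZ, 7) \to R \stackrel{\mp}{\to} K(\ZZ, 4) \times B\sfG.
\eeq
First I would record that $R$ is simply connected — in fact $3$-connected. The base $K(\ZZ,4)\times B\sfG$ is $3$-connected since $\pi_j(B\sfG)=\pi_{j-1}(\sfSU(2))$ vanishes for $j\le 3$, while the fiber $K(\ZZ,7)$ is $6$-connected; the long exact sequence then forces $\pi_j(R)=0$ for $j\le 3$. This connectivity is exactly what licenses the use of Quillen's rational homotopy theory.

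The essential input is the identification of the rationalized fiber and base. Rationalization sends $K(\ZZ,n)$ to $K(\QQ,n)$, and since $\sfG_\QQ = K(\ZZ,3)_\QQ = K(\QQ,3)$ forces $B\sfG_\QQ = K(\QQ,4)$, the rationalized fibration reads
\beq
K(\QQ, 7) \to R_\QQ \to K(\QQ, 4) \times K(\QQ, 4).
\eeq
Here I would invoke the standard fact that, for a fibration with connected fiber over a simply connected base, rationalization preserves the fibration sequence, so that $R_\QQ$ is the total space above with the indicated fiber and base. The fiber $K(\QQ,7)$ has rational homotopy concentrated in degree $7$, namely $\pi_7=\QQ$, while the base $K(\QQ,4)\times K(\QQ,4)$ has rational homotopy concentrated in degree $4$, namely $\pi_4=\QQ\oplus\QQ=\QQ^2$.

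I would then simply extract the answer from the long exact sequence
\beq
\cdots \to \pi_j\bigl(K(\QQ,7)\bigr) \to \pi_j(R_\QQ) \to \pi_j\bigl(K(\QQ,4)\times K(\QQ,4)\bigr) \xrightarrow{\partial} \pi_{j-1}\bigl(K(\QQ,7)\bigr) \to \cdots.
\eeq
In degree $4$ the segment $0 \to \pi_4(R_\QQ) \to \QQ^2 \to 0$ yields $\pi_4(R_\QQ)\cong\QQ^2$. In degree $7$ the segment $0 \to \QQ \to \pi_7(R_\QQ) \to 0$ yields $\pi_7(R_\QQ)\cong\QQ$. For every other value of $j$ both the fiber and base terms flanking $\pi_j(R_\QQ)$ vanish, so $\pi_j(R_\QQ)=0$, giving the two asserted nonzero groups.

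The computation is forced once the rationalized fibration is in hand, so there is no hard analytic core; the only point needing care — and the sole genuine obstacle — is checking that the connecting homomorphism cannot interfere. The $k$-invariant of the original fibration is the class $\alpha\cup\widehat\alpha$, and one might worry that it mixes the degree-$7$ fiber class with $\pi_8$. Rationally, however, the relevant connecting map $\partial\colon \pi_8\bigl(K(\QQ,4)\times K(\QQ,4)\bigr)\to\pi_7\bigl(K(\QQ,7)\bigr)$ has zero source, since the base is a product of rational Eilenberg--MacLane spaces with homotopy only in degree $4$; hence $\partial$ cannot obstruct the isomorphism $\pi_7(R_\QQ)\cong\QQ$. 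Thus the $k$-invariant, while decisive integrally, is invisible to the rational homotopy groups, which is precisely the phenomenon making spherical T-duality well behaved over $\QQ$.
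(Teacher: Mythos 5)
Your proof is correct and takes essentially the same route as the paper: the paper's one-line proof reads the homotopy groups off the rationalized fibration $K(\ZZ,4)_\QQ\times K(\ZZ,7)_\QQ\to R_\QQ\to K(\ZZ,4)_\QQ$, whereas you rationalize the equivalent fibration $K(\ZZ,7)\to R\to K(\ZZ,4)\times B\sfG$ (also stated in the paper) and run the same long exact sequence, with the same outcome. Your added remarks on $3$-connectivity and on the vanishing of the connecting map are correct and only make explicit what the paper leaves implicit.
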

\begin{proof}
This follows from the rational homotopy fibration,
\beq
K(\ZZ, 4)_\QQ \times K(\ZZ, 7)_\QQ \to R_\QQ \to K(\ZZ,4)_\QQ.
\eeq
\end{proof}

The space $R_\QQ$
has only two non-zero homotopy groups, $\pi_4$ and $\pi_7$, and so it
is a two-stage Postnikov system just like in \cite{BS}.

\begin{theorem}[Universal rational spherical T-duality]
\label{thm:geomTduality}
The rationalization $R_\QQ$ of the classifying space $R$ 
is a two-stage Postnikov system 
\beq\label{exact2}
K(\ZZ,7)_\QQ \to R_\QQ \to K(\ZZ,4)_\QQ \times K(\ZZ,4)_\QQ,
\eeq
with $\pi_7(R_\QQ)\cong \QQ$ and
with $\pi_4(R_\QQ)\cong \QQ\oplus \QQ$. 

The $k$-invariant of
$R_\QQ$ in $\H^8(K(\bZ,4)_\QQ\times K(\bZ,4)_\QQ,\QQ)$ can be identified with
$x\cup y$, where $x$ is any nonzero element of $\H^4$ of the first copy of $K(\bZ,4)_\QQ$ and
$y$ is the same element of $\H^4$ of the second copy of $K(\bZ,4)_\QQ$. Rationally, spherical T-duality is 
implemented by a self-map $\#$ of $R_\QQ$, whose square is homotopic to the
identity, interchanging the two copies of $K(\bZ,4)_\QQ$. {\lp}The involutive 
automorphism of $K(\bZ,4)_\QQ\times K(\bZ,4)_\QQ$ interchanging the two
factors preserves the $k$-invariant and thus extends to a homotopy
involution of $R_\QQ$.{\rp}
\end{theorem}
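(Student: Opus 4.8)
The plan is to obtain the rational Postnikov description of $R_\QQ$ by directly rationalizing the fibration $K(\ZZ,7) \to R \xrightarrow{\mp} K(\ZZ,4) \times B\sfG$ constructed above. First I would invoke Serre's theorem that the odd sphere $\sfG = \sfSU(2) = S^3$ has rational homotopy concentrated in degree $3$, so that $B\sfG_\QQ \simeq K(\ZZ,4)_\QQ$. Since rationalization is exact on fibrations of simply connected (indeed nilpotent) spaces, it carries the fibration above to
\[ K(\ZZ,7)_\QQ \to R_\QQ \to K(\ZZ,4)_\QQ \times K(\ZZ,4)_\QQ, \]
which is precisely the claimed two-stage Postnikov tower. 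The values $\pi_4(R_\QQ)\cong \QQ\oplus\QQ$ and $\pi_7(R_\QQ)\cong\QQ$ are then read off from the long exact homotopy sequence (the base contributing $\pi_4$ and the fiber $\pi_7$), in agreement with Theorem \ref{thm:homtypeofR}.

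Next I would track the $k$-invariant. The fibration $\mp$ has $k$-invariant $\alpha\cup\widehat\alpha$, with $\alpha,\widehat\alpha$ the degree-$4$ generators of the two base factors. Rationalization is natural, so the $k$-invariant of $R_\QQ$ is the image of $\alpha\cup\widehat\alpha$ in $\H^8(K(\ZZ,4)_\QQ \times K(\ZZ,4)_\QQ;\QQ)$; under $B\sfG_\QQ \simeq K(\ZZ,4)_\QQ$ the class $\widehat\alpha = c_2$ becomes a degree-$4$ generator $y$ of one factor and $\alpha$ a degree-$4$ generator $x$ of the other, giving $x\cup y$ as asserted. Concretely this realizes $R_\QQ$ as the homotopy pullback of the path fibration over $K(\ZZ,8)_\QQ$ along $x\cup y : K(\ZZ,4)_\QQ \times K(\ZZ,4)_\QQ \to K(\ZZ,8)_\QQ$, whose fiber is $\Omega K(\ZZ,8)_\QQ \simeq K(\ZZ,7)_\QQ$.

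For the involution, write $B := K(\ZZ,4)_\QQ \times K(\ZZ,4)_\QQ$ and let $\sigma$ be the factor-exchanging automorphism. Since $x$ and $y$ have even degree they commute, and $\sigma$ interchanges them, so $\sigma^*(x\cup y) = y\cup x = x\cup y$; thus $\sigma$ preserves the $k$-invariant. By the universal property of the homotopy pullback, $\sigma$ together with a chosen homotopy $\sigma^*(x\cup y)\simeq x\cup y$ lifts to a self-map $\#: R_\QQ \to R_\QQ$ covering $\sigma$, which on $\pi_4$ restricts to the swap and hence interchanges the two copies of $K(\ZZ,4)_\QQ$.

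The step needing the most care is showing $\#$ can be arranged so that $\#^2\simeq\mathrm{id}$. Now $\#^2$ covers $\sigma^2=\mathrm{id}_B$, so it is a lift over the identity, and homotopy classes of such lifts form a torsor over $[B,\Omega K(\ZZ,8)_\QQ] = \H^7(B;\QQ)$. The decisive point is that $\H^\bullet(B;\QQ)$ is a polynomial algebra on two degree-$4$ generators and so vanishes in all odd degrees; in particular $\H^7(B;\QQ)=0$. Hence the lift over $\mathrm{id}_B$ is unique up to homotopy, and as $\mathrm{id}_{R_\QQ}$ is one such lift we get $\#^2\simeq\mathrm{id}_{R_\QQ}$; the same vanishing makes the chosen homotopy, and therefore $\#$ itself, unique up to homotopy. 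I expect the real work to be purely organizational — keeping track of the chosen homotopies so that the torsor argument is applied correctly — rather than computational, since the crucial input is simply the concentration of the rational cohomology of $K(\ZZ,4)_\QQ \times K(\ZZ,4)_\QQ$ in even degrees.
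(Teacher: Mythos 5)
Your proof is correct and takes essentially the same route as the paper's, whose own proof merely records the homotopy groups and defers the $k$-invariant to \cite{BS}, \cite{MR2}; you supply the details the paper outsources (naturality of rationalization applied to the integral $k$-invariant $\alpha\cup\widehat\alpha$ asserted in Subsec.~3.2, plus the obstruction-theoretic argument that the lifted swap squares to the identity). One small precision: the difference between $\#^2$ and $\mathrm{id}_{R_\QQ}$, viewed as two lifts of the projection $R_\QQ\to B$ over $\mathrm{id}_B$ (your $B=K(\ZZ,4)_\QQ\times K(\ZZ,4)_\QQ$), naturally lives in $\H^7(R_\QQ;\QQ)$ rather than $\H^7(B;\QQ)$ --- unless, as your pullback-square formulation suggests, you parametrize such maps by the choice of homotopy $\sigma^*(x\cup y)\simeq x\cup y$, in which case your group is the right one --- but $\H^7(R_\QQ;\QQ)=0$ as well (its rational cohomology $\QQ[x,y]/(xy)$ is concentrated in degrees divisible by $4$), so the conclusion is unaffected.
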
 

\begin{proof}
We have already computed the homotopy groups of $R_\QQ$. 

To finish the proof of the theorem, we need to check that the
$k$-invariant of $R_\QQ$ is as described. The proof of this fact
is similar to \cite{BS}, \cite{MR2} once we have the fibration \eqref{exact2}.
\end{proof}

\begin{lemma}
\label{lem:cohomcalc}
Let $R$ be the classifying space for pairs $(P, H)$, as in Theorem \ref{thm:geomTduality}, and let
$(p\co \bE\to R,\bh)$ be the canonical pair over $R$. Then upon rationalization, we get pairs 
$(\bE_\QQ\to R_\QQ, \bh_\QQ)$.
Then with notation as in Theorem \ref{thm:geomTduality}, 
the cohomology ring 
\[
\H^*(R_\QQ) =\QQ[x,y]/(xy) 
\]
where $x$ and $y$ are in degree $4$ and in particular,
$\H^j(R_\QQ)=0$, $j=1,2, 3$ and $H^4(R_\QQ)$ and $\H^8(R_\QQ)$ are non-zero.
The characteristic class of $p\co \bE_\QQ\to R_\QQ$ is
$[p]=x$, and of the T-dual bundle
is $[p^\#]=y$. The space $\bE_\QQ$ is homotopy equivalent
to $K(\bZ,7)_\QQ\times K(\bZ,4)_\QQ$, so its cohomology ring is
\[            
\H^*(\bE_\QQ) =\QQ[y,\iota_7]/(\iota_7^2)     
\]     
where $y$ is in degree $4$ {\lp}pulled back 
from generators of the same name in $\H^4(R_\QQ)${\rp}, $\iota_7$ is 
the canonical generator of $\H^7(K(\bZ,7)_\QQ)$ in degree $7$.
\end{lemma}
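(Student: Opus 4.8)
The plan is to derive everything from the rational homotopy fibration
\beq
K(\ZZ,7)_\QQ \to R_\QQ \stackrel{\mp}{\to} K(\ZZ,4)_\QQ \times K(\ZZ,4)_\QQ
\eeq
of Theorem \ref{thm:geomTduality}, together with the identification of its $k$-invariant with $x\cup y$. First I would run the Serre spectral sequence of this fibration. The base is simply connected, so there are no local-coefficient subtleties, and by the K\"unneth theorem its rational cohomology is the polynomial ring $\QQ[x,y]$ on the two degree-$4$ generators, while $\H^*(K(\ZZ,7)_\QQ;\QQ)=\Lambda(\iota_7)$ is exterior on a degree-$7$ class. Hence the $E_2$-page has only the two rows $q=0$ and $q=7$, equal to $\QQ[x,y]$ and $\iota_7\cdot\QQ[x,y]$ respectively, and the only possible nonzero differential is the transgression $d_8$.

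Next I would invoke that, by definition of the $k$-invariant of the two-stage Postnikov system, the transgression is $d_8(\iota_7)=x\cup y$, so that by multiplicativity $d_8(\iota_7\cdot z)=(xy)\,z$ for all $z\in\QQ[x,y]$. Since $\QQ[x,y]$ is an integral domain, $d_8$ is injective on the top row, which therefore dies, while the bottom row becomes the quotient ring $\QQ[x,y]/(xy)$ (the image of $d_8$ being precisely the ideal $(xy)$). The spectral sequence then collapses at $E_9$, which is concentrated in the single row $q=0$; consequently the filtration on $\H^*(R_\QQ)$ has only one nonzero graded piece in each total degree, there is no multiplicative extension problem, and we obtain the \emph{ring} isomorphism $\H^*(R_\QQ)\cong\QQ[x,y]/(xy)$. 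Reading off low degrees gives $\H^j(R_\QQ)=0$ for $j=1,2,3$, $\H^4(R_\QQ)=\QQ x\oplus\QQ y$, and $\H^8(R_\QQ)=\QQ x^2\oplus\QQ y^2$ (using $xy=0$), both nonzero, which settles the first two assertions.

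For the characteristic classes I would track the two base factors through $\mp$: taking $x=\mp^*\alpha$, the pullback of the generator $\alpha$ of $\H^4(B\sfG_\QQ)$, this is exactly $c_2(\bE)$, so $[p]=x$; while $y=\mp^*\widehat\alpha=p_*\bh$ is the class that serves as $c_2$ of the T-dual bundle, so $[p^\#]=y$. For $\bE_\QQ$ I would use that $\bE=E\sfG\times\Maps(\sfG,K(\ZZ,7))$ with $E\sfG$ contractible, whence $\bE\simeq\Maps(\sfG,K(\ZZ,7))\simeq K(\ZZ,4)\times K(\ZZ,7)$ by the fiber computation recalled earlier; rationalizing and applying K\"unneth yields $\H^*(\bE_\QQ)=\QQ[y,\iota_7]/(\iota_7^2)$. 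To see the degree-$4$ generator is pulled back from $R_\QQ$, I would feed the Gysin sequence of the rational $S^3$-bundle $p\co\bE_\QQ\to R_\QQ$ (Euler class $x=[p]$): in degree $4$ the map $p^*\co\H^4(R_\QQ)\to\H^4(\bE_\QQ)$ has kernel $\QQ x$ (the image of $\cup x$ from $\H^0$) and is surjective, so $\H^4(\bE_\QQ)=\QQ\,p^*y$, exactly as claimed.

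The computation is essentially routine once Theorem \ref{thm:geomTduality} supplies the $k$-invariant, so the genuine content is the spectral-sequence bookkeeping rather than any hard analysis. The one point I would treat carefully is the passage from the additive to the multiplicative structure of $\H^*(R_\QQ)$, i.e.\ ruling out multiplicative extensions; this is clean here only because $E_\infty$ lives in a single row. The main obstacle I anticipate is purely one of labelling, namely matching the two degree-$4$ generators with their respective characteristic classes $[p]$ and $[p^\#]$ in a way consistent with the $k$-invariant normalization of Theorem \ref{thm:geomTduality}.
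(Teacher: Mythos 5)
Your proposal is correct and follows essentially the same route as the paper: identify $\bE\simeq\Maps(\sfG,K(\ZZ,7))\simeq K(\ZZ,4)\times K(\ZZ,7)$ and then run the rational Serre spectral sequences of the two fibrations $K(\ZZ,7)_\QQ\to R_\QQ\to K(\ZZ,4)_\QQ\times K(\ZZ,4)_\QQ$ and $K(\ZZ,7)_\QQ\to\bE_\QQ\to K(\ZZ,4)_\QQ$, with the $k$-invariant $x\cup y$ supplying the transgression $d_8(\iota_7)$. The paper merely cites \cite{MR2} for the spectral-sequence bookkeeping that you have written out explicitly (and your Gysin-sequence identification of the degree-$4$ generator of $\H^4(\bE_\QQ)$ as $p^*y$ is an equivalent substitute for the second spectral sequence).
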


\begin{proof} From the proof of Theorem \ref{thm:classspace}, $\bE$ is the
homotopy fiber of the map $c\co R\to BG$, and can be identified with 
$\Maps(G, K(\ZZ, 7))$, which splits as a product of $K(\bZ,7)$ and
$\Maps^+(G, K(\ZZ, 7))$. Thus $\bE$ has
the homotopy type of $
K(\bZ,7)\times K(\bZ, 4)$. 
Now the result follows by considering the Serre spectral
sequences for the fibrations
\[
\begin{aligned}
K(\bZ,7)_\QQ \to &\bE_\QQ \to K(\bZ, 4)_\QQ, \\
K(\bZ,7)_\QQ \to &R_\QQ \to K(\bZ^{2}, 4)_\QQ,
\end{aligned}
\]
as in \cite{MR2}.
\end{proof}

\subsection{The meaning of the rationalized spherical T-duality classifying space} \label{interpsez}

Theorem \ref{thm:geomTduality} says that in higher dimensions, 
although spherical T-duality doesn't work as nicely as spherical
T-duality when the base $M$ has dimension 4, it works nicely {\em rationally}.
Our goal is to explain this result in this section.

In dimensions higher than 4, principal $\sfSU(2)$-bundles are not just classified by cohomology.
More precisely, $\pi_{i+1}(B\sfSU(2)) = \pi_i(\sfSU(2))$ and since $\pi_i(\sfSU(2)) = \pi_i(S^3)$ is in 
general very complicated, the second Chern class $c_2$ of a principal $\sfSU(2)$-bundle is 
usually not a complete invariant of the bundle. 
For instance, $\sfSU(2)$ is a subgroup of $\sfSU(3)$ with quotient $\sfSU(3)/\sfSU(2) \cong S^5$. 
Thus $\sfSU(3)$ is the total space of a principal $\sfSU(2)$-bundle over $S^5$ which cannot be the trivial
bundle since $0=\pi_4(\sfSU(3)) \ne \pi_4(\sfSU(2)\times S^5) = \pi_4(\sfSU(2))=\ZZ_2$.
[For the homotopy groups of $\sfSU(3)$, cf.~\cite{MT}.] Therefore 
cohomology of $S^5$ does not classify principal $\sfSU(2)$-bundles over it. 
This can also be seen by the known result that $[S^5, B\sfSU(2)] = [S^4, \Omega_eB\sfSU(2)]
= [S^4, \sfSU(2)] =\ZZ_2$. In String Theory, the relevant base dimension is 7, and there are partial
results in \cite{CG}, where there is more information about the map defined by the second Chern class
\beq
c_2 : \text{Bun}_{\sfSU(2)}(M) \to \H^4(M;\ZZ)
\eeq
and its failure to be both surjective and injective in general.  However, at least when $M_7$ is a
2-connected rational homology 7-sphere, then the second Chern class $c_2$ is onto and there
is a complete classification in terms of the 2nd Chern class, the $t$-invariants and Eells-Kuiper invariants
(both of which are torsion).

But what does universal rational spherical T-duality (Theorem \ref{thm:geomTduality}) mean? 
Let $(P, h)$ be a pair where $P$ is a principal $\sfSU(2)$-bundle over $M$ and $h:P\to K(\Z,7)$. 
Then $[(P, h)] \in R$. Let $r: R \to R_\QQ$ denote the rationalization map.
Then  $r([(P, h)]) = [(P_\QQ, h_\QQ)] \in R_\QQ$. Recall that by Theorem \ref{thm:geomTduality},  there is a spherical T-duality
involution $T_\QQ$ on $R_\QQ$.
Consider the ``rational spherical T-dual'' $T_\QQ(r([(P, h)])) = T_\QQ[(P_\QQ, h_\QQ)] = [(Q_\QQ, \widehat h_\QQ)] \in R_\QQ$.
Then $(P, h)$ is spherical T-dualizable if and only if $[(Q_\QQ, \widehat h_\QQ)] \in \text{Image}(r)$. The number of spherical T-duals is 
the size of the fiber $r^{-1}([(Q_\QQ, \widehat h_\QQ)])$.

This rational isomorphism can be understood constructively using the result of Ref.~\cite{Granja} that for each dimension $d$ there exists a natural number $N(d)$ such that any multiple of $N(d)$ in $\H^4(M,\ZZ)$ of a $d$-manifold $M$ is $c_2$ of an $\sfSU(2)$ bundle.  Recall that pairs $(P,H)$ are classified by (homotopy classes of) maps $f:M\rightarrow R$ and pairs of pairs $(P,H)$ and $(\widehat P,\widehat H)$ by maps $\tilde{f}:M\rightarrow S$.  Now the homotopy classes of the maps are determined by the $(d+1)$-skeletons of $R$ and $S$.  

We have seen that there is a map from $g:S\rightarrow R$ induced from the map from $B\sfSU(2)$ to $K(\Z,4)$ corresponding to the generator of $\H^4(B\sfSU(2))$ but the above result implies that there is also a map $\tilde{g}$ from the $d+1$ skeleton of $R$ to that of $S$ which acts of $\H^4(M,\ZZ)$ via multiplication by $N(d)$.  This is an isomorphism on $\H^4(M)$ evaluated over the rationals, but is not surjective over the integers when $N(d)>1$.  However, rationally it is sufficient to construct a T-dual pair of pairs as $\tilde{f}=\tilde g f$ is a map $M\rightarrow S$ and the T-dual can be pulled back from $S$.  This will only be a T-dual rationally, as $c_2(\hat P)=N(d)\, \pi_*H$.

\subsection{When the Base $M$ is Dimension $4$} \label{qsez}

The base $M$ (by assumption) and fiber $S^3$ are orientable, connected manifolds.  
Therefore the total space of $\pi: P\rightarrow M$ is orientable and so
\beq
\H^0(M)\cong\H^0(P)\cong\H^4(M)\cong\H^7(P)=\Z. \label{iso}
\eeq
In what follows we will fix these isomorphisms, which implies that we have chosen an orientation.  
The Gysin sequence \eqref{eqn:eqBAa} implies an isomorphism $\H^7(P;\ZZ)\stackrel{\pi_*}{\cong}\H^4(M;\ZZ)$, which 
can be used to construct a 
7-class $H=(\pi_*)^{-1}c_2(P)$ given any choice of second Chern class $c_2(P)\in\H^4(P)$.  
As there exists a  principal  $\sfSU(2)$-bundle $P\rightarrow M$ for any second Chern class $c_2(P)\in\H^4(P)$, 
a choice of 7-cocycle $H\in\H^7(P)$  also can be used to construct a principal $\sfSU(2)$-bundle $P$ characterized by $c_2(P)=\pi_*(H)$.

The spherical T-duality map $(c_2(P),H)\rightarrow (c_2(\widehat{P}),\widehat{H})$ is the direct sum of two 
such isomorphisms.  First, given $H\in\H^7(P)$ one constructs the principal  $\sfSU(2)$-bundle $\widehat\pi:\widehat{P}\rightarrow M$ 
with second Chern class $c_2(\widehat{P})=\pi_* H$. Then one uses the isomorphism 
$\widehat\pi_*:\H^7(\widehat P)\rightarrow\H^4(M)$ to construct $\widehat{H}=(\widehat\pi_*)^{-1}c_2(P)\in\H^7(\widehat{P})$.  
Summarizing, the spherical T-duality map is
\beq
(\pi_*,(\widehat\pi_*)^{-1}):(\H^4(M),\H^7(P))\rightarrow(\H^4(M),\H^7(\widehat P)):(c_2(P),H)
\mapsto (\pi_*(H),(\widehat\pi_*)^{-1}c_2(P)). \label{t}
\eeq
Thus we have learned that in the special case in which $M$ is 4-dimensional, spherical T-duality 
behaves like ordinary T-duality in the sense that a choice of principal $\sfSU(2)$-bundle $P$ and an integral 7-class on $P$ 
uniquely (up to isomorphism) determines a T-dual principal $\sfSU(2)$-bundle $\widehat P$ and an 
integral 7-class $\widehat H$ on $\widehat P$.

\medskip


\section{Examples}

\subsection{Bundles over $S^4$} \label{sferasez}

When $M$ is a 4-manifold the Gysin sequence and the $H$-twist 
only affect the top and bottom cohomologies of $M$.  As $M$ is orientable and, without loss of generality, connected, 
these are both isomorphic to $\ZZ$.  Therefore calculations of the cohomology of $P$, $\widehat P$, and
the correspondence space $P\times_M \widehat P$ via the Gysin sequence are essentially the same for all 
connected 4-manifolds, as the cohomology in middle dimensions plays no role.  

In this section we will calculate these groups and maps for the case $M=S^4$.  As the maps 
described here act trivially on the middle cohomology, the generalization to other orientable 4-manifolds is straightforward.
One can find explicit constructions of principal $\sfSU(2)$-bundles on $S^4$, and also on 
other compact oriented 4-manifolds, in Section 10.6 of \cite{Taubes}. Constructions of principal $\sfSU(2)$-bundles
in higher dimensions via quaternionic divisors can be found in Section 1.c of \cite{CG}. 

Recall that on oriented 4-manifolds $M$ there is a 1-1 correspondence between the principal 
$\sfSU(2)$-bundles and $\H^4(M;\ZZ)$ given by the second Chern class $c_2$.  In the present case 
$\H^4(S^4)\cong\Z$ and so principal $\sfSU(2)$-bundles will be classified by a single integer.  
We will define the bundles $\pi:P\rightarrow S^4$ and $\widehat\pi:\widehat P\rightarrow S^4$ by
\beq
c_2(P)=k\hsp c_2(\widehat P)=j \,.
\eeq
Ordinarily $P$, and therefore $k$, is part of the initial data and $\widehat P$, and therefore $j$, 
are derived by the T-duality map.  The case in which either $j$ or $k$ is zero is rather simple, 
so for concreteness we will consider $j\neq 0$ and $k \neq 0$.

The Gysin sequence \eqref{eqn:eqBAa} easily yields the cohomology of $P$ and $\widehat P$
\beq
\H^0(P)\cong\H^0(\widehat P)\cong\H^7(P)\cong\H^7(\widehat P)\cong\Z\hsp
\H^4(P)\cong\Z_k\hsp\H^4(\widehat P)\cong\Z_j.
\eeq 
In particular, we determine $\H^4(P)$ using the sequence
\beq
\xymatrix{
 \H^0(M) \ar[d]^\cong \ar[r]^{\cup c_2}& \H^4(M)\ar[d]^\cong\ar[r]^{\pi^*} & \H^4(P)\ar[d]^\cong\ar[r]^{\pi_*} & \H^1(M) \ar[d]^\cong \\
 \ZZ \ar[r]^{\times k} & \ZZ \ar[r] & ? \ar[r] & 0 
}
\eeq  
from which we learn that $\pi^*:\H^4(M)\rightarrow\H^4(P):k\mapsto 0$ and so\ $\H^4(P)\cong \ZZ_k$, and similarly for $\widehat P$.

The other interesting part of the sequence is
\beq
\xymatrix{
\H^7(M) \ar[d]^\cong  \ar[r]^{\pi^*} & \H^7(P) \ar[d]^{\cong}\ar[r]^{\pi_*} & \H^{4}(M) \ar[d]^\cong\ar[r]^{\cup c_2} & \H^8(M) \ar[d]^\cong \\
0 \ar[r] & \ZZ  \ar[r] & \ZZ  \ar[r]^{\times k} & 0
}
\eeq  
which implies that $\pi_*:\H^7(P)\rightarrow\H^4(M)$ is an isomorphism and similarly for $\widehat\pi$.  
It is this isomorphism which allows $H\in\H^7(P)$ to uniquely determine $\widehat c_2\in\H^4(M)$ and $c_2\in\H^4(M)$ 
to uniquely determine $\widehat H\in\H^4(\widehat P)$ and so render the spherical T-dual existent and unique.

Recall that we also impose the condition
\beq
p^*(\widehat H) - \widehat p^*(H) = 0 \in\H^7(P\times_M \widehat P)  \,. \label{corrcond}
\eeq
The pushforward condition has already determined $\widehat H$ uniquely, so does it satisfy (\ref{corrcond})?  
By Theorem \ref{thm:thBAb} (i) we know that it must.  However we will check this directly by calculating $\H^7(P\times_M \widehat P)$. 

The desired cohomology group appears in two distinct short exact Gysin subsequences
\beq
\xymatrix{
\H^3(P)\ar[d]^\cong \ar[r] & \H^7(P)\ar[d]^\cong \ar[r]^{\widehat p^*\quad} & \H^7(P\times_M \widehat P) \ar[d] \ar[r]^{\quad\widehat p_*} & 
\H^{4}(P)\ar[d]^\cong\ar[r] & \H^8(P) \ar[d]^\cong \\
0 \ar[r]^{\times j}  & \ZZ \ar[r] & ? \ar[r] & \ZZ_k  \ar[r]^{\times j} & 0
} \label{pseq}
\eeq  
and
\beq
\xymatrix{
\H^3(\widehat P) \ar[d]^\cong \ar[r] & \H^7(\widehat P)\ar[d]^\cong \ar[r]^{p^*\quad} & \H^7(P\times_M \widehat P) \ar[d] 
\ar[r]^{\quad p_*} & \H^{4}(\widehat P)\ar[d]^\cong\ar[r]& \H^8(\widehat P) \ar[d]^\cong \\
0 \ar[r]^{\times k}  & \ZZ \ar[r] & ? \ar[r] & \ZZ_j  \ar[r]^{\times k} & 0
}
\eeq  
The first sequence implies that $\H^7(P\times_M \widehat P)$ is an extension of $\ZZ$ by $\ZZ_k$ and the 
second that it is an extension of $\ZZ$ by $\ZZ_j$.  This is only possible if
\beq
\H^7(P\times_M \widehat P)\cong\Z\oplus\Z_i\hsp \frac{k}{i},\frac{j}{i}\in\Z.
\eeq
However this is not enough information to determine the group uniquely.

Define $a\in\Z$ and $b\in\Z_i$ by
\beq
\xymatrix{
\H^7(P) \ar[d]^\cong \ar[r]^{\widehat p^*\quad} & \H^7(P\times_M \widehat  P) \ar[d]^\cong \\
\ZZ \ar[r] & \ZZ\oplus\ZZ_i \\
1 \ar[r] & (a,b) }   
\eeq
By exactness of (\ref{pseq}), $\widehat p_*:\H^7(P\times_M \widehat  P)\rightarrow\H^4(P)=\Z_k$ is surjective therefore
\beq
\Z_k\cong\im(\widehat p_*)|_{\H^7(P\times_M \widehat P)}\cong\frac{\H^7(P\times_M \widehat P)}{\im(\widehat p^*)|_{\H^7(P)}}
\cong\frac{\Z\oplus\Z_i}{(a,b)\Z}\cong\Z_{\frac{ai}{\gcd(b,i)}}\oplus\Z_{\gcd(b,i)} \label{zk}
\eeq
where $\gcd(b,i)=i$ if $b=0$ mod $i$.  The total order of the right hand side must be $k$ and so $a=k/i$.  The last two terms on the right hand side combine into a single cyclic group only if their degrees are relatively prime $\gcd(\gcd(b,i),\frac{k}{\gcd(b,i)})=1.$   An identical procedure for $\widehat P$ implies
\beq
p^*:\H^7(\widehat P)\cong\Z\rightarrow\H^7(P\times_M \widehat  P)\cong\Z\oplus\Z_i:1\mapsto(j/i,\hat b) \,, \label{pex}
\eeq
where $\gcd(\gcd(\widehat b,i),\frac{j}{\gcd(\hat b,i)})=1.$ 
 
We will now use the fact that
\beq
\widehat\pi^*\pi_*=p_*\widehat p^*:\H^7(P)\cong\Z\rightarrow\H^4(\widehat P)\cong\Z_j. \label{comm}
\eeq
As $\pi_*:\H^7(P)\rightarrow\H^4(S^4)$ is an isomorphism and $\widehat\pi^*:\H^4(S^4)\rightarrow\H^4(\widehat P)$ is surjective, $\widehat\pi^*\pi_*$ maps the generator, $1$, of $\H^7(P)\cong\Z$ to an order $j$ element of $\H^4(\widehat P)\cong\Z_j$.   On the other hand we have seen that 
\beq
\widehat p^*(1)=(k/i, b)\in\H^7(P\times_M \widehat P)\cong\Z\oplus\Z_i. \label{phatex}
\eeq
The kernel of the map $p_*:\H^7(P\times_M \widehat P)\cong\Z\oplus\Z_i\rightarrow\H^4(\widehat P)\cong\Z_j$ is the image of $p^*:\H^7(\widehat P)\rightarrow\H^7(P\times_M \widehat P)$ which is generated by $(j/i,\hat b)$.  Therefore $p_*\widehat p^*(1)$ will be of order $j$ if $\widehat p^*(1)=(k/i,b)$ is of order $j$ in $(\Z\oplus\Z_i)/\langle j/i,\hat b\rangle$.

Clearly the order of $(k/i,b)$ is at most $j$, because
\beq
j\left(\frac{k}{i},b\right)=\left(\frac{jk}{i},\hat a bi\right)=\left(\frac{jk}{i},a\hat b i\right)=k\left(\frac{j}{i},\hat b\right)
\eeq
in $\Z\oplus\Z_i$.  Let $n=\gcd(j,k)$  As $i$ is also a divisor of $j$ and $k$, $n/i$ is an integer. Now
\beq
\frac{ji}{n}\left(\frac{k}{i},b\right)=\left(\frac{ijk}{n},0\right)=\frac{ki}{n}\left(\frac{j}{n},\hat b\right)
\eeq
and so $\widehat p^*(1)=(k/i,b)$ is of order $ji/n$.  But it must be of order $j$ in order for the commutation condition (\ref{comm}) to be satisfied.  Therefore $i=n$ and so we have computed
\beq
\H^7(P\times_M \widehat P)\cong\Z\oplus\Z_{\gcd(j,k)} \label{sette}
\eeq
and we have found
\beq
a=\frac{k}{i}=\frac{k}{\gcd(j,k)}\hsp \widehat a=\frac{j}{\gcd(j,k)}.
\eeq
It is not hard to see that Eq.~\eqref{sette} applies to any oriented 4-manifold $M$ with vanishing first Betti number.  Furthermore, this result can be generalized to any oriented 4-manifold $M$ by taking the direct sum with $\Z^{{\rm b}_1(M)}$.

Finally we are ready to determine the pullbacks of the twists to the correspondence space $P\times_M \widehat P$.  
First, notice that as $H$ is a multiple of $k$ and $\widehat H$ is a multiple of $j$, they both vanish modulo $\gcd(j,k)$ 
and so their pullback to the second term in (\ref{sette}) will be equal to zero.  The first term is given by $a$ and $\widehat a$
\beq
p^*\widehat H=(\widehat a k,\widehat b k)=\left(\frac{jk}{\gcd(j,k)},0\right)=(a j,b j)=\widehat p^* H.
\eeq
Therefore the value of $\widehat H$ determined by the condition $\widehat\pi_* H=c_2(P)$ indeed agrees with $H$ when both are lifted to the correspondence space, as is required for the consistency of the T-duality map.

In Subsec.~\ref{setsez} we will need the rest of the cohomology of the correspondence space and the related maps.  At other dimensions the extension problems have unique solutions and so the cohomology groups can be directly read from the Gysin sequence.   At dimension 3
\beq
\xymatrix{
\H^3(\widehat P)\ar[d]^\cong \ar[r]^{p^*} & \H^3(P\times_M \widehat P) \ar[d] \ar[r]^{p_*} & \H^{0}(\widehat P)\ar[d]^\cong\ar[r]^{} 
& \H^{4}(\widehat P)\ar[d]^\cong\\
0 \ar[r] & ? \ar[r] & \ZZ \ar[r]^{\times k} & \ZZ_j 
}
\eeq  
Therefore $p_*$ is an injection into $\H^{0}(\widehat P)\cong\Z$.   Furthermore it yields an isomorphism
\beq
\H^3(P\times_M \widehat P)\cong\ker(k\wedge)\cong\gcd(j,k)\Z\cong\Z
\eeq
and $p_*: \H^3(P\times_M \widehat P)\rightarrow \H^{0}(\widehat P):1\mapsto\gcd(j,k)$.

At dimension 10 the Gysin sequence yields an isomorphism
\beq
\H^{10}(P\times_M \widehat P)\cong\H^7(\widehat P) \cong \ZZ\,.
\eeq
Finally, at dimension 4 the Gysin sequence yields 
\beq
\xymatrix{
\H^0(\widehat P)\ar[d]^\cong \ar[r] & \H^4(\widehat P)\ar[d]^\cong \ar[r]^{p^*} & \H^4(P\times_M \widehat P)\ar[d] \ar[r]^{p_*} & 
\H^{1}(\widehat P)\ar[d]^\cong  \\
\ZZ \ar[r]^{\times k} & \ZZ_j \ar[r]  & ? \ar[r] & 0 
}
\eeq  
and so
\beq
\H^4(P\times_M \widehat P)\cong\frac{\Z_j}{\im(k\wedge:\Z\rightarrow\Z_j)}\cong \frac{Z_j}{k\Z_j}\cong\Z_{\gcd(j,k)} \,. 
\eeq
We could have calculated $\H^7$ more quickly using the fact that its torsion part is isomorphic to that of $\H^4$ by the universal coefficient theorem.  However this longer derivation has provided explicit expressions (\ref{pex}) and (\ref{phatex}) for the pullback maps which will be used in Subsec.~\ref{setsez}.  

\subsection{Bundles over $\HH P^n$}

In \cite{GS, CG}, the complete classification of principal $\sfSU(2)$-bundles 
over the quaternionic projective spaces $\HH P^2$, and $\HH P^3$, has been given. 
In both cases, they are classified by a precisely described subset of integers and precisely
described torsion (homotopy) groups.  Suppose that $n>1$ and that $c_2(P)\neq 0\in\H^4(\HH P^n)\cong \Z$,  
then by the Gysin sequence one finds $\H^7(P)=0$.  Here we have used the fact that the 
cohomology groups $\H^p(\HH P^n) \cong \Z$ if $p=4j, \, j=0,1, \ldots, n$, and $\H^p(\HH P^n)=0$ otherwise
(in particular  $\H^7(\HH P^n)=0$). 
Therefore, in this case $H=0$ will be trivial in cohomology and so the spherical T-dual $\widehat P\rightarrow M$ 
will have $c_2(\widehat P)=0$, which in the case $n= 2$ implies that it is the trivial bundle $\widehat P=\HH P^2\times S^3$.  
By the K\"unneth theorem $\H^7(\HH P^2\times S^3)\cong\Z$ and so the dual twist $\widehat H$ will characterized by a 
single integer.   However not all integers values of $\widehat H$ will be realized by T-duality because in these cases, the 2nd 
Chern class $c_2(P)$ is not onto $\H^4(\HH P^2)\cong\ZZ$, so only certain values of $H$ are T-dualizable. 
More precisely, $c_2$ has to be of the form $24 r +s\in\H^4(\HH P^2)$ where $r\in \ZZ$ and $s=0, 1, 9, 16$.  
If $s=0$ or $s=16$ then the bundle is uniquely defined, if $s=1$ or $s=9$ then there are two distinct bundles 
with the same $c_2$.  In either case, the spherical T-dual is $\widehat P=\HH P^2\times S^3$ with $\widehat H=24r+s$.  
However, this T-duality is only injective and so invertible if $s$ is even.

In the case $M=\HH P^3$, again only certain values of $c_2$ are allowed but now $c_2$ never completely classifies the bundle, 
even when $c_2=0$.  Thus even though $\H^7(P)=0$ implies that $H=0$, this only implies that $c_2(\widehat P)=0$ and does not 
determine whether the T-dual bundle is trivial or not.  Nonetheless the isomorphism $\H^7(\widehat P)\cong\Z\cong\H^4(\HH P^3)$ 
can be used to determine $\widehat H\in\H^7(\widehat P)$.  Again the fact that only certain elements of $\H^4(\HH P^3)$ are realized as 
$c_2$ of principal $\sfSU(2)$-bundles (cf.\ \cite{GS, CG}) implies that only certain pairs $(\widehat P,\widehat H)$ will be T-dualizable, 
and in no case will their T-dual be unique.

\medskip


\section{Spherical T-duality is an Isomorphism of Twisted Cohomology} \label{isosez}

In this section we will see that, for a base of any dimension, T-duality induces an isomorphism on twisted cohomologies with real or rational coefficients.  In this section it will be implied that all cohomology groups and twisted cohomology groups are over the real numbers.

\subsection{Construction of the T-dual Twist as a Differential Form}

The Gysin sequence
\beq
\xymatrix{
\cdots \ar[r]^{c_2\wedge}& \H^7(M)\ar[r]^{\pi^*} & \H^7(P)\ar[r]^{\pi_*} & \H^{4}(M)\ar[r]^{c_2\wedge} & \cdots 
}
\eeq  
gives a decomposition of the differential 7-form $H\in\Omega^7(P)$ in terms of closed forms $H_4\in\Omega^4(M)$ and $H_7\in\Omega^7(M)$ as
\beq
H=\pi^* H_7 + (\pi_*)^{-1} H_4
\eeq
where without loss of generality we have not included an additional exact term $dB$ which can be eliminated via an automorphism of the twisted cohomology corresponding to multiplication by $e^{-B}$.   The inverse of $\pi_*$ is well-defined as a map on closed forms
\beq
 (\pi_*)^{-1}:\ker(c_2\wedge)|_{\Omega^4(M)}\rightarrow \frac{\Omega^7(P)}{\pi^*\Omega^7(M)}.
\eeq
However, by choosing an element in each coset it lifts to $\Omega^7(P)$ itself 
\beq
 (\pi_*)^{-1}:\ker(c_2\wedge)|_{\Omega^4(M)}\rightarrow \Omega^7(P):\alpha\mapsto\alpha\wedge \CS(A).
\eeq
Therefore, as was seen using Cartan-Weil theory in Subsec. \ref{cwsez}, $H$ may be decomposed as
\beq
H=\pi^* H_7 + \widehat c_2\wedge \CS(A). \label{hforma}
\eeq
Similarly
\beq
\widehat H=\widehat\pi^* H_7 + c_2\wedge \CS(\widehat A). \label{hhatforma}
\eeq
We have seen that in general $\widehat H$ may contain an additional summand $c_2\wedge \pi^* a$ where $a$ is a closed 3-form on $M$.  However, as $d(\pi^*a)=0$ and $p^*(c_2\wedge \pi^*a)=d(\CS(A)\wedge \pi^*a)$, the $a$ may be absorbed as a shift $ \CS(\widehat A)\to \CS(\widehat A)+\pi^*a$.  This redefinition will be implied in all formulas in Sec. \ref{isosez}.   The resulting shift in Eq. (\ref{tomega}) does not affect Eq. (\ref{eta}), as each term in (\ref{tomega}) needs to vanish separately.  Thus the choice of $a$ will not affect the injection proof.    In the surjection proof, in addition to the shift in $\CS(\widehat A)$, we will see that $a$ also affects Eq. (\ref{chiusocond}).  As the isomorphism proof below holds for any choice of $a$, a change of $a$ will induce an isomorphism between the corresponding $d-\widehat H$ cohomologies of $\widehat P$.

Note that, with these choices made, $H$ and $\widehat H$ are automatically $\sfSU(2)$-invariant.  Furthermore
\beq
dH=\pi^* dH_7+\widehat c_2\wedge c_2=0\hsp d \widehat H=\widehat \pi^* dH_7+c_2\wedge \widehat c_2=0
\eeq
as $H_7$ is closed and $c_2\wedge \widehat c_2=0$.

Lifting to the correspondence space $P\times_{M} \widehat P$ one finds
\beq
p^* \widehat H-\widehat p^* H = p^* \widehat\pi^* H_7 - \widehat p^*\pi^* H_7 +    c_2\wedge \CS(\widehat A) - \widehat c_2\wedge \CS(A).
= d(\CS(A)\wedge \CS(\widehat A)).
\eeq

\subsection{Spherical T-duality Induces an Isomorphism}

Given closed, $\sfSU(2)$-invariant 7-forms $H\in\Omega^7(P)$ and $\widehat H\in\Omega^7(\widehat P)$ we define the $H$ $(\widehat H)$ twisted cohomology $\H_H^{\rm{even/odd}}(P)$\ $(\H_{\widehat H}^{\rm{even/odd}}(\widehat P))$ to be the subset of $\Omega^{\rm{even/odd}}(P)$ ( $\Omega^{\rm{even/odd}}(\widehat P)$) which is annihilated by the operation $d_H=d-H\wedge$ ($d_{\widehat H}=d-\widehat H\wedge$) quotiented by the image of the same operation on $\Omega^{\rm{odd/even}}(P)$ ( $\Omega^{\rm{odd/even}}(\widehat P)$).

Let $\omega$ be an $\sfSU(2)$-invariant $d_H$ closed polyform representing a class in $\H_H^{\rm{even/odd}}(P)$.  Note that every twisted cohomology class has a representative which is $\sfSU(2)$-invariant, we will restrict our attention to these representatives and furthermore will choose representatives $H$ and $\widehat H$ of the form Eq.~\eqref{hforma} and \eqref{hhatforma}. Lifting to the correspondence space 
$P\times_{M} \widehat P$, applying the kernel $\exp(\CS(A)\wedge \CS(\widehat A))$ and integrating over the fiber, we define the 
{\em T-duality transform}
 \beq
T_*(\omega) = \int_{\sfSU(2)} \exp(\CS(A)\wedge \CS(\widehat A)) \wedge \widehat p^*\omega.
 \eeq
\begin{lemma} \label{homlem}
The T-duality transform $T_*$ induces a homomorphism of twisted cohomology groups
 \beq
T_* : \H_H^{\rm{even/odd}}(P) \longrightarrow \H_{\widehat H}^{\rm{odd/even}}(\widehat P).
\eeq
\end{lemma}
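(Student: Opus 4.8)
The plan is to show that $T_*$ intertwines the two twisted differentials, i.e.\ that $T_*$ maps $d_H$-closed forms to $d_{\widehat H}$-closed forms and $d_H$-exact forms to $d_{\widehat H}$-exact forms, so that it descends to a well-defined map on cohomology. The essential computation is to verify the identity
\beq
d_{\widehat H}\, T_*(\omega) = T_*(d_H\, \omega),
\eeq
after which $d_H\omega = 0$ forces $d_{\widehat H} T_*(\omega)=0$, and $\omega = d_H \eta$ forces $T_*(\omega) = d_{\widehat H} T_*(\eta)$, giving both that the class is independent of the representative and that the map is a homomorphism on cohomology.

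First I would record the pieces needed for the computation. On the correspondence space $P\times_M \widehat P$ we have, from the construction of the T-dual twist, the key relation $p^*\widehat H - \widehat p^* H = d(\CS(A)\wedge \CS(\widehat A))$ established just above the lemma. Writing $\cK = \exp(\CS(A)\wedge \CS(\widehat A))$ for the kernel, this relation gives the pivotal fact
\beq
d\cK = \cK \wedge (p^*\widehat H - \widehat p^* H),
\eeq
since $d$ of the exponent equals the right-hand form and the exponent is a closed-under-wedge bookkeeping device. Next I would use that $\widehat p^*$ is a chain map for the ordinary de Rham differential, that $p^*$ and $\widehat p^*$ pull back $\widehat H$ and $H$ respectively, and that fiber integration $\int_{\sfSU(2)}$ over the $\sfSU(2)$-fibers of $\widehat p$ commutes with $d$ up to the usual sign (Stokes/projection formula), with no boundary term because $\sfSU(2)=S^3$ is closed.

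The core manipulation is then to compute $d_{\widehat H} T_*(\omega) = (d - \widehat H\wedge)\int_{\sfSU(2)} \cK\wedge \widehat p^*\omega$. Moving $d$ under the integral, applying the Leibniz rule to $\cK \wedge \widehat p^*\omega$, and substituting $d\cK = \cK\wedge(p^*\widehat H - \widehat p^*H)$, the $p^*\widehat H$ term should combine with the external $-\widehat H\wedge$ (after noting $\widehat H$ pulls back along $p$ to what multiplies through the fiber integral, by the projection formula) to cancel, while the $-\widehat p^*H$ term assembles with $\widehat p^* d\omega$ into $\widehat p^*(d\omega - H\wedge\omega) = \widehat p^* d_H\omega$; integrating gives $T_*(d_H\omega)$. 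The one point requiring care is the interplay of signs: the sign from pulling $d$ past $\cK$, the sign from fiber integration of a form whose fiber degree shifts by $3$, and the parity conventions in the even/odd grading all have to line up so that the cross terms cancel exactly rather than merely up to sign.

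\medskip
\noindent \textbf{Main obstacle.} The hard part will be the sign bookkeeping in the projection (push-pull) formula combined with the degree shift by $\dim \sfSU(2)=3$ under fiber integration, and confirming that the external twist $\widehat H\wedge(\,\cdot\,)$ downstairs matches $p^*\widehat H\wedge(\,\cdot\,)$ upstairs once pushed through $\int_{\sfSU(2)}$. Because $\widehat H$ lives on $\widehat P$ while the integration is along the $\sfSU(2)$-fiber of $\widehat p\colon P\times_M\widehat P\to\widehat P$, the projection formula $\int_{\sfSU(2)}\cK\wedge \widehat p^*(\widehat H\wedge\omega)=\widehat H\wedge\int_{\sfSU(2)}\cK\wedge\widehat p^*\omega$ is exactly what licenses the cancellation, but I must check $\widehat H$ is $p^*$-related correctly and that its odd degree $7$ introduces a sign consistent with the grading convention for $d_{\widehat H}$. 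Once the intertwining identity is verified, the passage from chain map to a well-defined homomorphism on twisted cohomology is formal and immediate.
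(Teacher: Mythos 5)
Your proposal follows essentially the same route as the paper: both hinge on the relation $p^*\widehat H-\widehat p^*H=d(\CS(A)\wedge\CS(\widehat A))$, the Leibniz rule applied to the exponential kernel, and the projection formula for the fiber integration, yielding an intertwining identity between $d_H$ and $d_{\widehat H}$. The only discrepancy is the sign you flagged as needing care — the paper's identity comes out as $T_*\circ d_H=-d_{\widehat H}\circ T_*$ rather than with a plus sign — but this does not affect the conclusion that closed forms map to closed forms and exact to exact.
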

\begin{proof}
Since $d(\CS(A)\wedge \CS(\widehat A))=-\widehat p^*H+ p^*{\widehat H}$, we have
\bea
T_*(d_H\omega)&=&\int_{\sfSU(2)} \exp(\CS(A)\wedge \CS(\widehat A)) \wedge \widehat p^*d\omega\nonumber\\&&-\int_{\sfSU(2)} \exp(\CS(A)\wedge \CS(\widehat A)) \wedge \widehat p^*H\wedge\widehat p^*\omega\nonumber\\
&=&-d\int_{\sfSU(2)} \exp(\CS(A)\wedge \CS(\widehat A)) \wedge \widehat p^*\omega\nonumber\\
&&-\int_{\sfSU(2)} \exp(\CS(A)\wedge \CS(\widehat A)) \wedge (\widehat p^*H-\widehat p^*H+ p^*{\widehat H})\wedge\widehat p^*\omega\nonumber\\
&=&-d_{\widehat{H}}T_*(\omega) \label{tcomm}
\eea
where in the last step we used the fact that $\int_{\sfSU(2)}= p_*$ together with the property of the pullback
\beq
p_*(\alpha\wedge p^*\beta)=(p_*(\alpha))\wedge\beta.
\eeq
Eq. (\ref{tcomm}) may be summarized by the statement $T\circ d_{H}=-d_{{\widehat H}}\circ T$.
Therefore $T$ takes $d_H$ exact (closed) forms on $P$ to $d_{\widehat H}$ exact (closed) forms on $\widehat P$ and so it induces a well-defined homomorphism on the twisted cohomology groups.  
\end{proof}

As the Maurer-Cartan forms $\theta^k$ are a basis of left-invariant forms and the connection $A^k$ restricted to the fiber yields $\theta^k$, for any $\omega\in\Omega^\bullet(P)^{\sfSU(2)}$,
we have the decomposition
\beq
\omega =\pi^*\omega_0+\CS(A)\wedge\pi^*\omega_3+\sum_k A^k\wedge\pi^*\omega_1^k+\sum_{j,k} A^j\wedge A^k\wedge\pi^*\omega_2^{jk} \label{omegadec}
\eeq
where $\omega_\bullet\in\Omega^\bullet(M)$.  The T-dual of $\omega$ is then
\bea
T_*(\omega)&=&\int_{\sfSU(2)} \CS(A)\wedge\left(\pi^*\omega_3+
\CS(\widehat A)\wedge\pi^*(\omega_0)\right)\nonumber\\
&&+\frac{1}{24\pi^2}\int_{\sfSU(2)}\sum_{i,j,k} \epsilon^{ijk}A^i\wedge A^j\wedge A^k\wedge F^i \wedge\pi^*\omega_2^{jk}\nonumber\\
&=&\pi^*\omega_3+\CS(\widehat A)\wedge\pi^*(\omega_0-\sum_{i,j,k}\epsilon^{ijk}F^i \wedge\pi^*\omega_2^{jk}) \label{tomega}
\eea 
where $F^i$ is the Lie-algebra valued curvature corresponding to the direction $i$ in the Lie algebra.  So the kernel of $T_*$ consists of all invariant polyforms $\eta$ of the form 
\beq
\eta =\sum_{i,j,k}\epsilon^{ijk}F^i \wedge\pi^*\eta_2^{jk}+ \sum_k A^k\wedge\pi^*\eta_1^k+\sum_{j,k} A^j\wedge A^k\wedge\pi^*\eta_2^{jk}. \label{eta}
\eeq

The map  $T_* : \H_H^{\rm{even/odd}}(P) \longrightarrow \H_{\widehat H}^{\rm{odd/even}}(\widehat P)$ is injective if and only if all  $d_H$-closed forms in $\ker(T_*)$ are $d_H$-exact.  In other words, to show that $T_*$ is an injection we must show that $\eta$ is only $d_H$ closed if it is also $d_H$ exact.

By the Maurer-Cartan equation
\beq
d\theta^i=\epsilon^{ijk}\theta^j\wedge\theta^k\hsp dA^i=\epsilon^{ijk}A^j\wedge A^k+ F^i.
\eeq

Suppressing the various $\pi^*$,
\bea
d_H\eta
&=&d_H\sum_{i,j,k}\epsilon^{ijk}F^i \wedge\eta_2^{jk}+\sum_k F^k\wedge\eta_1^k\\
&+&\sum_i A^i\wedge (-d_{H_7}\eta_1^i-2\sum_kF^k\wedge\eta_2^{ik})+\sum_{j,k}A^j\wedge A^k\wedge(\sum_i \epsilon^{ijk}\eta_1^i+d_{H_7} \eta_2^{jk})\nonumber.
\eea
As the monomials in $A^i$ are linearly independent, each term must vanish separately.  In particular, the first line contains only constant and cubic terms and so must vanish separately from the second line.  Therefore $\eta$ is only $d_H$ closed if 
\beq
\eta_1^i=-\sum_{i,j,k}\epsilon^{ijk}d_{H_7}\eta_2^{jk} \label{hchiuso}
\eeq
which implies that $d_{H_7}\eta_1^i=0$ and so $\sum_k\eta_2^{ik}\wedge F^k=0$ for all $i$.  In this case
\beq
\eta
=d_H\sum_{i,j,k}\epsilon^{ijk}(A^i\wedge\eta_2^{jk})
\eeq
and so $\eta$ is also $d_H$ exact.  Therefore $T_* : \H_H^{\rm{even/odd}}(P) \longrightarrow \H_{\widehat H}^{\rm{odd/even}}(\widehat P)$ is injective.

What about surjectivity?   Any $\widehat\sfSU(2)$-invariant form on $\widehat P$ can be written in the form
\beq
\widehat\omega =\widehat\pi^*\widehat\omega_0+\CS(\widehat A)\wedge\widehat\pi^*\widehat\omega_3+\sum_k \widehat A^k\wedge\widehat\pi^*\widehat\omega_1^k+\sum_{j,k}\widehat A^j\wedge\widehat A^k\wedge\widehat\pi^*\widehat\omega_2^{jk}.\label{omegahat}
\eeq
Surjectivity on twisted cohomology is the claim that if $d_{\widehat H}\widehat\omega=0$ then $\widehat\omega$ can be written in the form (\ref{tomega}) plus an exact form $d_{\widehat H}\eta$.

The space of invariant differential forms can be split into two subspaces, $\Lambda^*(\widehat P)=\Lambda^A(\widehat P)\oplus\Lambda^B(\widehat P)$. $\Lambda^A(\widehat P)$ consists of sums $\widehat \pi^*\widehat\omega_0+\pi^*\widehat\omega_3\wedge \CS(\widehat A)$ while $\Lambda^B(\widehat P)$ consists of sums $\sum_k\widehat \pi^*\widehat\omega_1^k\wedge\widehat A^k+\sum_{ij}\widehat \pi^*\widehat\omega_2^{ij}\wedge\widehat A^i\wedge\widehat A^j$.  The key observation is again that $d_{\widehat H} \Lambda^A(\widehat P)\subset\Lambda^A(\widehat P).$  Therefore, if $\widehat\omega$ is $d_{\widehat H}$ closed then its restriction to $\Lambda^B(\widehat P)$ must also be $d_{\widehat H}$ closed.  But this restriction to  $\Lambda^B(\widehat P)$ is just the restriction to forms of the form (\ref{eta}).  Repeating the above calculation leading to Eq. (\ref{hchiuso}) with $H$ replaced by $\widehat H$ given in Eq. (\ref{hhatforma}), these are only twisted closed if 
\beq
\widehat\omega_1^i=-\sum_{i,j,k}\epsilon^{ijk}d_{(H_7+c_2\wedge a)}\widehat\omega_2^{jk}. \label{chiusocond}
\eeq

Now we can prove surjectivity directly.  For any $\widehat\sfSU(2)$-invariant, $d_{\widehat H}$-closed polyform $\widehat\omega\in\Lambda^*(\widehat P)$, using the decomposition (\ref{omegahat}) one obtains
\beq
\widehat\omega=T_*\omega+d_{\widehat H}\eta \label{ch}
\eeq
where
\beq
\omega=\pi^*\widehat\omega_3+\CS(A)\wedge\pi^*(\widehat\omega_0-\sum_{i,j,k}\epsilon^{ijk}\widehat F^i \wedge\pi^*\widehat\omega_2^{jk})
\eeq
and
\beq
\eta=\sum_{i,j,k}\epsilon^{ijk}(\widehat A^i\wedge\widehat\pi^*\widehat\omega_2^{jk}).
\eeq
So $T_*$ is surjective on $\widehat H$ twisted cohomology.

As $T_*$ is an injective and surjective homomorphism on twisted cohomology, we have proved our main theorem: 

\begin{theorem} \label{princ}
The T-duality transform
 \beq
T_* : \H_H^{\rm{even/odd}}(P) \longrightarrow \H_{\widehat H}^{\rm{odd/even}}(\widehat P)
\eeq
is an isomorphism of twisted cohomology groups.
\end{theorem}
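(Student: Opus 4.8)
The plan is to deduce the isomorphism by assembling the three ingredients already at hand: that $T_*$ is a well-defined homomorphism of twisted cohomologies, that it is injective, and that it is surjective. The homomorphism property is exactly Lemma \ref{homlem}, which follows from the intertwining relation $T_*\circ d_H=-d_{\widehat H}\circ T_*$; this in turn rests on the identity $d(\CS(A)\wedge \CS(\widehat A))=-\widehat p^*H+p^*\widehat H$ on the correspondence space together with the projection formula $p_*(\alpha\wedge p^*\beta)=p_*(\alpha)\wedge\beta$ for fiber integration. Granting this, the remaining task reduces to the verification of injectivity and surjectivity at the level of cohomology.

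The first reduction I would make is to work entirely with $\sfSU(2)$-invariant representatives, which is legitimate because averaging over the compact group $\sfSU(2)$ shows every twisted class has such a representative, and because the chosen forms $H$ and $\widehat H$ in \eqref{hforma}, \eqref{hhatforma} are themselves invariant. On the invariant subcomplex the Cartan-Weil theory of Subsec.~\ref{cwsez} supplies the decomposition \eqref{omegadec} of any invariant polyform in terms of the connection components $A^k$, the Chern-Simons form $\CS(A)$, and pullbacks of forms on $M$. Computing $T_*$ by fiber integration then collapses this to the explicit formula \eqref{tomega}, from which the kernel is read off to consist of forms of type \eqref{eta}.

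For injectivity the key device is the Maurer-Cartan equation $dA^i=\epsilon^{ijk}A^j\wedge A^k+F^i$, which lets me compute $d_H\eta$ for $\eta$ of the form \eqref{eta}. Since the monomials in the $A^i$ are linearly independent over $\pi^*\Omega^\bullet(M)$, each graded piece of $d_H\eta$ must vanish separately; the constant-plus-cubic part decouples from the linear-plus-quadratic part, forcing the closedness condition \eqref{hchiuso} on $\eta_1^i$. Under that condition one exhibits $\eta$ explicitly as $d_H$ of $\sum_{i,j,k}\epsilon^{ijk}A^i\wedge\eta_2^{jk}$, so every $d_H$-closed kernel element is $d_H$-exact and $T_*$ is injective. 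Surjectivity then follows dually: splitting the invariant forms on $\widehat P$ as $\Lambda^A(\widehat P)\oplus\Lambda^B(\widehat P)$ and observing that $d_{\widehat H}$ preserves $\Lambda^A(\widehat P)$, a $d_{\widehat H}$-closed $\widehat\omega$ has its $\Lambda^B$ part separately closed, hence satisfying \eqref{chiusocond}, after which one writes down an explicit $\omega$ and $\eta$ realizing $\widehat\omega=T_*\omega+d_{\widehat H}\eta$.

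The step I expect to be the main obstacle is the injectivity argument, specifically controlling the curvature contributions $F^i$ in $d_H\eta$ and confirming that linear independence of the $A$-monomials really does force the separate vanishing of each graded piece, since here (unlike the torus case of \cite{BHM05}) the space of invariant forms is strictly larger than $\Omega^\bullet(M)\otimes\bigwedge P_{\sfSU(2)}$ and the nonabelian Maurer-Cartan structure mixes the degrees nontrivially. I would also need to check that the $c_2\wedge\pi^*a$ ambiguity in $\widehat H$ enters only as the harmless shift $\CS(\widehat A)\to\CS(\widehat A)+\pi^*a$, so that the whole argument is insensitive to that choice and the resulting isomorphism is canonical up to the automorphism induced by a change of $a$.
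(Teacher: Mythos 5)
Your proposal is correct and follows essentially the same route as the paper: Lemma \ref{homlem} for the homomorphism property, restriction to $\sfSU(2)$-invariant representatives with the decomposition \eqref{omegadec}, the explicit transform \eqref{tomega} identifying the kernel \eqref{eta}, injectivity via the Maurer-Cartan equation and linear independence of the $A$-monomials leading to \eqref{hchiuso}, and surjectivity via the splitting $\Lambda^A(\widehat P)\oplus\Lambda^B(\widehat P)$ and the explicit realization $\widehat\omega=T_*\omega+d_{\widehat H}\eta$. The only element of the paper you do not mention is its alternate proof by chain homotopy onto the subcomplex $(\Lambda^A(P),d_H)$, but that is a second argument, not a gap in yours.
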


An alternate proof of the theorem is as follows.   $d_{H} \Lambda^A(P)\subset\Lambda^A(P)$ and so $(\Lambda^A(P),d_H)$ is a differential complex.   A slight modification of (\ref{ch}) yields, for any $\sfSU(2)$-invariant $\omega\in\Lambda^*(P)$ decomposed as in Eq. (\ref{omegadec}), an element $\tilde\omega\in\Lambda^A(P)$
\beq
\tilde\omega=\omega-d_H\left(\sum_{i,j,k}\epsilon^{ijk}(A^i\wedge\pi^*\omega_2^{jk})\right).
\eeq
As $\tilde\omega-\omega$ is $d_H$ exact, this map is a chain homotopy.  Therefore the cohomology of the complex $(\Lambda^A(P),d_H)$ is isomorphic to that of all invariant differential forms, which in turn is isomorphic to that of all differential forms.  

Now T-duality satisfies $T\circ d_{H}=-d_{{\widehat H}}\circ T$ and so its action on this complex generates an homomorphism between the twisted cohomologies.  Furthermore $T_*$ acts on $\Lambda^A(P)$ as an involution $\omega_0\longleftrightarrow\omega_3$ and so squares to the identity. Therefore it induces an isomorphism on the twisted cohomology.  As this involution exchanges the even and odd degrees, so does $T_*$.

\medskip

\section{Spherical T-duality with integer coefficients}

In various cases the objects and maps defined above, with rational coefficients, have integral lifts.  In this section all cohomology groups will be over integer coefficients.

\subsection{The base $M$ a 4-manifold}

In this subsection we will demonstrate that, when the base is an oriented 4-manifold, the spherical T-duality map can be extended to cohomology with integral coefficients.  We have seen that it is a pairing between the second Chern class, which is an integral 4-cocycle in the base, and the twist, which is an integral 7-cocycle in the total space of the bundle.   We will demonstrate that spherical T-duality induces an isomorphism between the twisted integral cohomologies of the respective $\sfSU(2)$ principal bundles.

\subsubsection{Isomorphism of twisted cohomologies over the integers}

We will now show

\begin{theorem} \label{evenprop} For $P\rightarrow M$ and $\widehat P\rightarrow M$ principal $\sfSU(2)$-bundles 
over an oriented 4-manifold related by the map (\ref{t}), there is an isomorphism between the twisted cohomology 
groups $\H^{\rm{even/odd}}_H(P)$ and $\H^{\rm{odd/even}}_{\widehat{H}}(\widehat P)$, defined to be the cohomology 
of the integral cohomology groups with respect to the cup product with the corresponding twist
\beq
\H_H^k(P)=\frac{\ker(H\cup)|_{\H^k(P)}}{\im(H\cup)|_{\H^{k-7}(P)}}.
\eeq
\end{theorem}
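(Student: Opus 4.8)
The plan is to realize the isomorphism through the correspondence space $P\times_M\widehat P$ and then verify it on the explicit cohomology recorded in Subsection~\ref{sferasez}. Since $\deg H=7$ is odd, the cup-product differential $H\cup$ shifts parity, so $(\H^\bullet(P),H\cup)$ is naturally $\ZZ/2$-graded, and likewise on $\widehat P$. For $M=S^4$ the only nonzero groups are $\H^0(P)\cong\H^7(P)\cong\ZZ$ and $\H^4(P)\cong\ZZ_k$, and (taking $j,k\neq 0$; the degenerate cases are easier) the only nontrivial instance of $H\cup$ is $\H^0(P)\xrightarrow{\times j}\H^7(P)$, where I use $\pi_*H=c_2(\widehat P)=j$ to identify $H$ with $j\in\H^7(P)\cong\ZZ$. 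A direct computation then gives $\H_H^{{\rm even}}(P)\cong\ZZ_k$ (carried by $\H^4(P)$) and $\H_H^{{\rm odd}}(P)\cong\H^7(P)/j\ZZ\cong\ZZ_j$, and symmetrically $\H_{\widehat H}^{{\rm even}}(\widehat P)\cong\ZZ_j$, $\H_{\widehat H}^{{\rm odd}}(\widehat P)\cong\ZZ_k$, so the groups match as required. The general oriented $4$-manifold reduces to this case, because the middle cohomology of $M$ contributes only untwisted summands (there $H\cup$ lands in $\H^{\ge 8}(P)=0$), on which T-duality acts by the ordinary Gysin isomorphisms.

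Next I would define the integral transform in parallel with the de Rham transform of Section~\ref{isosez}. Since $\widehat p^*H=p^*\widehat H$ in $\H^7(P\times_M\widehat P)$ by Theorem~\ref{thm:thBAb}, there is a degree-$6$ cochain $\mathcal C$ on the correspondence space with $\delta\mathcal C=p^*\widehat H-\widehat p^*H$, the integral stand-in for $\CS(A)\wedge\CS(\widehat A)$. Setting $T_*(x)=p_*\bigl(\widehat p^*x\cup e^{\mathcal C}\bigr)$ and combining the projection formula with the defining property of $\mathcal C$, one checks that $d_{\widehat H}T_*=\pm\,T_* d_H$, so that $T_*$ descends to twisted cohomology. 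On the $4$-manifold $T_*$ splits into its degree-$(-3)$ part $p_*\widehat p^*=\widehat\pi^*\pi_*$ and its degree-$(+3)$ part $x\mapsto p_*(\widehat p^*x\cup\mathcal C)$.

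I would then check that each graded piece is an isomorphism. On $\H_H^{{\rm odd}}(P)=\H^7(P)/H\ZZ$ only the $(-3)$ part contributes, since the $(+3)$ part lands in $\H^{10}(\widehat P)=0$; and $\widehat\pi^*\pi_*$ is the composite of the Gysin isomorphism $\pi_*\co\H^7(P)\xrightarrow{\cong}\H^4(M)$ with the quotient map $\widehat\pi^*\co\H^4(M)\to\H^4(\widehat P)\cong\H^4(M)/j$, whose kernel is exactly $H\ZZ$; hence it induces an isomorphism $\ZZ_j\xrightarrow{\cong}\ZZ_j$. On $\H_H^{{\rm even}}(P)=\H^4(P)$ only the $(+3)$ part contributes, since the $(-3)$ part lands in $\H^1(\widehat P)=0$. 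Here I would evaluate $p_*(\widehat p^*x\cup\mathcal C)$ against the explicit description of the correspondence space, namely $\H^7(P\times_M\widehat P)\cong\ZZ\oplus\ZZ_{\gcd(j,k)}$ of \eqref{sette} together with the pullback maps \eqref{pex} and \eqref{phatex}, to show that it carries a generator of $\ZZ_k$ to a generator of $\H^7(\widehat P)/\widehat H\ZZ\cong\ZZ_k$.

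The main obstacle is precisely this degree-$(+3)$ piece. There is no closed integral degree-$6$ class playing the role of $\CS(A)\wedge\CS(\widehat A)$, so the construction is genuinely at the cochain level, and because the source $\H^4(P)\cong\ZZ_k$ is pure torsion the transform takes values only in the subquotient $\H^7(\widehat P)/\widehat H\ZZ$ rather than in $\H^7(\widehat P)$ itself; establishing that the induced endomorphism of $\ZZ_k$ is multiplication by a unit requires the careful torsion bookkeeping on $P\times_M\widehat P$ assembled in Subsection~\ref{sferasez}. I note that the real isomorphism of Theorem~\ref{princ} offers no shortcut, since every group in sight vanishes after tensoring with $\QQ$, so the entire content of the statement is carried by the torsion and must be extracted from the explicit Gysin computation.
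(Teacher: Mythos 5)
Your first paragraph is, in substance, the paper's own proof: the paper establishes the theorem purely by computing $\H^\bullet(P)$ and $\H^\bullet(\widehat P)$ from the Gysin sequence (or K\"unneth when the relevant Chern class vanishes), observing that $H\cup$ can only be nonzero as $\H^0\to\H^7$, and matching the resulting even/odd groups case by case ($i=j=0$, one of them zero, both nonzero); your reduction of a general oriented $4$-manifold to the $S^4$ computation via the untouched middle-degree summands $\H^1(M)\oplus\H^2(M)\oplus\H^3(M)$ is exactly how the paper handles it. That computation alone already proves the statement, since the theorem only asserts the existence of an isomorphism of groups. The remaining two-thirds of your proposal — building an integral push--pull transform $T_*(x)=p_*(\widehat p^*x\cup e^{\mathcal C})$ on the correspondence space and showing its graded pieces realize the isomorphism — goes beyond anything in the paper (the paper constructs such a transform only at the level of differential forms, in Section~5) and, as you candidly note, is left incomplete at the crucial degree-$(+3)$, pure-torsion piece $\ZZ_k\to\H^7(\widehat P)/\widehat H\ZZ$. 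Since that construction is not needed for the statement as given, the gap there does not affect the validity of your proof; it would only matter if you wanted the stronger claim that a specific geometric transform induces the isomorphism.
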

\begin{proof}
If $c_2(P)=0$ then the K\"unneth theorem yields 
\beq
\H^k(P)\cong\H^k(M)\oplus\H^{k-3}(M)
\eeq
and so
\bea
&&\H^1(P)\cong\H^1(M)\hsp \H^2(P)\cong\H^5(P)\cong\H^2(M)\hsp \H^3(P)\cong \H^3(M)\oplus\Z\nonumber\\&& 
\H^4(P)\cong\H^1(M)\oplus\Z\hsp \H^6(P)\cong \H^3(M).
\eea
As $H\cup$ annihilates all cohomology classes except for $\H^0$, these classes are isomorphic to the corresponding twisted classes.  The only classes which are different in the twisted case are
\beq
\H^0_H(P)\cong 0 \hsp \H^7_H(P)\cong \Z_j \label{twisted}
\eeq
where $H=j\in\H^7(P)$.  Note that if $j=0$ then $\H^0_H(P)=\H^7_H(P)=\Z$.

In the case $c_2(P)\neq 0$ we will instead compute these cohomology groups using the Gysin sequence
\beq
\xymatrix{
\cdots \ar[r]^{c_2\cup}& \H^k(M)\ar[r]^{\pi^*} & \H^k(P)\ar[r]^{\pi_*} & \H^{k-3}(M)\ar[r]^{c_2\cup} & \cdots 
}
\eeq 
As the image of $c_2\cup$ is trivial at all degrees except for $c_2\cup:\H^0(M)\rightarrow\H^4(M)$, this long exact sequence yields a short exact sequence for each of $k$ except for 3 and 4.

In particular, for $k\leq 2$, $\H^{k-3}(M)=0$ and so the Gysin sequence yields
\begin{equation*}
\xymatrix{
 \H^{k-4}(M) \ar[d]^\cong  \ar[r] & \H^k(M)\ar[r]^{\pi^*} & \H^k(P)\ar[r]^{\pi_*} & \H^{k-3}(M) \ar[d]^\cong   \\
0 & & & 0 }
\end{equation*}
therefore
\beq
\H^1(P)\cong \H^1(M)\hsp \H^2(P)\cong\H^2(M). \label{eq12}
\eeq
Similarly, for $k\geq 5$, $\H^k(M)=0$ and so the sequence reduces to
\begin{equation*}
\xymatrix{
\H^{k}(M) \ar[d]^\cong  \ar[r]^{\pi^*} & \H^k(P)\ar[r]^{\pi_*}  & \H^{k-3}(M)\ar[r]^{c_2\cup}  & \H^{k+1}(M) \ar[d]^\cong   \\
0 & & & 0 }
\end{equation*}
and so 
\beq
\H^5(P)\cong \H^2(M)\hsp \H^6(P)\cong\H^3(M). \label{eq56}
\eeq

Let $c_2=i\neq 0\in\H^4(M)$.  Then $c_2\cup:\H^0(M)\rightarrow\H^4(M)$ is injective, at $k=3$ one finds
\begin{equation*} \xymatrix{
0 \ar[r] & \H^3(M) \ar[r]^{\pi^*}  & \H^3(P)  \ar[r]^{\pi_*\qquad} & \ker(c_2\cup)|_{\H^{0}(M)}=0 }
\end{equation*} 
and so 
\beq
\H^3(P)\cong \H^3(M) \,.
\eeq
Finally, since the image of $c_2\cup:\H^0(M)\rightarrow\H^4(M)$ is $i\Z\subset\Z\cong\H^4(M)$, at $k=4$ one obtains
\beq
\xymatrix{
0 \ar[r] & \frac{\H^4(M)}{\im(c_2\cup)|_{\H^0(M)}} = \Z_i \ar[r]^{\qquad\pi^*} & \H^4(P) \ar[r]^{\pi_*} & \H^{1}(M)\ar[r]^{c_2\cup} &  0
}
\eeq
Since $\H^1(M)$ is free by the universal coefficient theorem, all finite order elements of $\H^4(P)$ must be in $\ker(\pi_*)$ and so $\im(\pi^*)$.  Therefore the short exact sequence splits into finite and infinite order elements mapped by $\pi^*$ and $\pi_*$ respectively, yielding
\beq
\H^4(P)=\H^1(M)\oplus\Z_i.
\eeq

As in the case $i=0$, the twisted and untwisted cohomologies are isomorphic except at degrees where 0 and 7 where they are given by Eq. (\ref{twisted}) or in the untwisted case $j=0$ are both isomorphic to $\Z$.

The T-duality map (\ref{t}) acts by simply exchanging the integers $i$ and $j$.  In the case $i=j=0$ in which $P\cong M\times S^3$, $\widehat P=\widehat{S}^3$ and $H=\widehat H=0$ the total even and odd twisted cohomologies are
\bea
\H^{\rm{even}}_{H=0}(M\times S^3)\cong\H^{\rm{even}}_{\widehat H=0}(M\times \widehat{S}^3)&\cong&\H^1(M)\oplus\H^2(M)\oplus\H^3(M)\oplus\Z^2\\
\H^{\rm{odd}}_{H=0}(M\times S^3)\cong\H^{\rm{odd}}_{\widehat H=0}(M\times\widehat{S}^3)&\cong&\H^1(M)\oplus\H^2(M)\oplus\H^3(M)\oplus\Z^2\nonumber
\eea
and so, as expected from Theorem \ref{evenprop}, 
\beq
\H^{\rm{even/odd}}_H(P)\cong\H^{\rm{odd/even}}_{\widehat{H}}(\widehat P). 
\eeq

If $i\neq 0$ but $j=0$ then $\widehat P\cong M\times \widehat{S}^3$ and $H=0$.  In this case, assembling the above results one finds
\bea
\H^{\rm{even}}_{H=0}(P)\cong\H^{\rm{odd}}_{\widehat H=i}(M\times \widehat{S}^3)&\cong&\H^1(M)\oplus\H^2(M)\oplus\H^3(M)\oplus\Z\oplus\Z_i\\
\H^{\rm{odd}}_{H=0}(P)\cong\H^{\rm{even}}_{\widehat H=i}(M\times \widehat{S}^3)&\cong&\H^1(M)\oplus\H^2(M)\oplus\H^3(M)\oplus\Z\nonumber
\eea
again satisfying Theorem \ref{evenprop}.  The case $i=0$ and $j\neq 0$ proceeds identically, with even and odd degrees interchanged.

The last case is $i\neq 0$ and $j\neq 0$.  Now both $P$ and $\widehat{P}$ are nontrivial bundles and neither $H=j$ nor $\widehat{H}=i$ vanishes.  In this case, assembling the above results one finds
\bea
\H^{\rm{even}}_{H=j}(P)\cong\H^{\rm{odd}}_{\widehat H=i}(\widehat P)&\cong&\H^1(M)\oplus\H^2(M)\oplus\H^3(M)\oplus\Z_i\\
\H^{\rm{odd}}_{H=j}(P)\cong\H^{\rm{even}}_{\widehat H=i}(\widehat P)&\cong&\H^1(M)\oplus\H^2(M)\oplus\H^3(M)\oplus\Z_j\nonumber
\eea
completing the proof of Theorem \ref{evenprop}.
\end{proof}

In fact, Theorem \ref{evenprop} is a corollary to the following theorem, which can similarly be derived from the above formulae for $\H(P)$.
\begin{theorem} \label{seiprop}
 For $P\rightarrow M$ and $\widehat P\rightarrow M$ principal $\sfSU(2)$ bundles over an oriented 4-manifold related by the map (\ref{t}) and for all integers $k$,
\beq
\bigoplus_j \H^{6j+k}_H(P)\cong\bigoplus_j \H^{6j+k+3}_{\widehat{H}}(\widehat P). \label{seieq}
\eeq
\end{theorem}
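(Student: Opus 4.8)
The plan is to derive \eqref{seieq} by direct inspection of the cohomology groups already tabulated in the proof of Theorem~\ref{evenprop}, organized according to the residue of the degree modulo $6$. First I would reduce the twisted groups to ordinary ones. Since $\dim P = 7$ and $H$ lives in top degree $\H^7(P)$, the twisting operator $H\cup$ (which raises degree by $7$) has only a single potentially nonzero instance, namely $\H^0(P)\xrightarrow{\times j}\H^7(P)$, where $H=j$. Hence $\H^k_H(P)=\H^k(P)$ for $1\le k\le 6$, while $\H^0_H(P)=\ker(\times j)$ and $\H^7_H(P)=\Z_j$, with the convention $\Z_0=\Z$ covering the untwisted case $j=0$. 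The identical reduction applies to $\widehat P$ with twist $\widehat H=i$.

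Next I would group these by residue class $r=k\bmod 6$. Because $P$ is only $7$-dimensional, each class collects at most two degrees: $r=0$ gives $\{0,6\}$, $r=1$ gives $\{1,7\}$, and $r\in\{2,3,4,5\}$ gives the single degree $\{r\}$. Writing out the six sums for $P$ and the six shifted sums for $\widehat P$ (shifting $r\mapsto r+3$) reduces \eqref{seieq} to six separate isomorphism claims. Each is then checked by substituting the explicit groups: the free summands $\H^1(M),\H^2(M),\H^3(M)$ appear symmetrically on both sides, while the two torsion summands are matched by the T-duality swap $i\leftrightarrow j$ of \eqref{t}. Concretely, the torsion $\Z_i\subset\H^4(P)$ coming from $c_2(P)=i$ is paired with $\H^7_{\widehat H}(\widehat P)=\Z_i$ coming from $\widehat H=i$, and dually $\H^7_H(P)=\Z_j$ is paired with the torsion $\Z_j\subset\H^4(\widehat P)$ coming from $c_2(\widehat P)=j$.

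It is worth recording the conceptual reason the correct modulus is $6$ and the correct shift is $3$. The T-duality transform kernel $\exp(\CS(A)\wedge\CS(\widehat A))=1+\CS(A)\wedge\CS(\widehat A)$ has components only in degrees $0$ and $6$, so after integration over the $S^3$ fiber (which lowers degree by $3$) the transform sends a degree-$p$ class to degrees $p-3$ and $p+3$, both congruent to $p+3$ modulo $6$. Thus $T_*$ is homogeneous of degree $3$ modulo $6$, exactly the structure asserted by \eqref{seieq}; over the integers, where the differential-form transform is unavailable, this is realized instead by the explicit group matching above.

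The main obstacle is the bookkeeping in the degenerate cases $i=0$ or $j=0$. There the groups $\H^0_H(P)$ and $\H^7_H(P)$ jump from $0$ and $\Z_j$ to $\Z$, and the corresponding bundle trivializes, inserting an extra $\Z$ into $\H^3$ and $\H^4$ via the K\"unneth theorem. One must check that these extra free summands still match across \eqref{seieq}; they do, precisely because the swap $i\leftrightarrow j$ turns the extra $\Z$ in $\H^7_H$ (resp.\ $\H^0_H$) on one side into the extra $\Z$ in $\H^4$ (resp.\ $\H^3$) on the other. Finally, Theorem~\ref{evenprop} is recovered as a corollary by summing \eqref{seieq} over the three even residues $r\in\{0,2,4\}$ (respectively the three odd residues $r\in\{1,3,5\}$), since the shift by $3$ interchanges even and odd degrees.
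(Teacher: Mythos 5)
Your proposal is correct and follows essentially the same route as the paper, which simply asserts that Theorem \ref{seiprop} "can similarly be derived from the above formulae for $\H(P)$" computed in the proof of Theorem \ref{evenprop}; you have carried out exactly that residue-by-residue check, including the degenerate cases $i=0$ or $j=0$, with the correct matching of torsion summands under $i\leftrightarrow j$.
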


To derive Theorem \ref{evenprop} from Theorem \ref{seiprop} one need only take the direct sum over the values $k=0,\ 2$ and $4$.  

\subsubsection {Isomorphism of twisted K-theories }

As $P$ is an oriented 7-manifold, this isomorphism extends to the corresponding 7-twisted K-theories up to an extension problem which we will ignore in what follows.  

To define K-theory on $P$, twisted by a closed 7-form $H_7$ representing $k$ times the 
generator of $\H^7(P,\ZZ)$, we first recall from Corollary 4.7 in \cite{DP} that the generator of 
$\H^7(S^7,\ZZ)$ corresponds to the Dixmier-Douady invariant of an algebra bundle 
$\cE\to S^7$ with fibre a stabilized infinite Cuntz $C^*$-algebra $O_{\infty} \otimes \cK$. 
Now let $f:P \to S^7$ be a degree $k$ continuous map, then $f^*(\cE) \to P$ is an algebra 
bundle with fibre a stabilized infinite Cuntz $C^*$-algebra $O_{\infty} \otimes \cK$ and 
Dixmier-Douady invariant equal to $k$ times the generator of $\H^7(P,\ZZ)$. 
Then, by \cite{DP2}, the twisted K-theory is defined as $\K^*(P, H_7) = \K_*(C_0(P, f^*(\cE)))$, 
where $C_0(P, f^*(\cE))$ denotes continuous sections of $f^*(\cE)$ vanishing at infinity. 
This shows that $\K^*(P, H_7)$ is well defined, although we will not use the explicit construction.

The $H$-twisted K-theory of an oriented seven manifold $P$ can be computed using a two step spectral sequence with differentials $d_3=Sq^3$ and $d_7=H\cup$.  The second differential may be derived from a Mayer-Vietoris argument similar to that presented for the calculation of K-theory twisted by a 3-cocycle in Ref.~\cite{BS}.  This may be compared with the one step spectral sequence used above to construct twisted cohomology, which only used the differential $d_7$.

The operation $Sq^3$ annihilates all cocycles of degree less than 3.  Also, as $P$ is of dimension 7 and $Sq^3$ increases the degree of a cocycle by 3, it annihilates all cocycles of degree greater than 4.  The image of $Sq^3$ is a $\Z_2$ torsion element of integral cohomology, but $\H^7(P)=\Z$ and so has no $\Z_2$ torsion therefore $Sq^3$ also annihilates $\H^4(P)$.   Finally, the operation acts on any element of $\H^3(P)$ by squaring it.  More precisely, such elements can be decomposed using the Gysin sequence
\beq
\xymatrix{
0 \ar[r] & \H^3(M) \ar[r]^{\pi^*} & \H^3(P) \ar[r]^{\pi_*} & \H^{0}(M)\ar[r]^{\quad c_2(P)\cup} & } \label {gy3}
\eeq 
Now we will consider two cases.  First, if $c_2(P)=0$ then the last map is the zero map and by the K\"unneth theorem
\beq
\H^3(P)\cong\H^3(M)\otimes\H^0(S^3)\oplus \H^0(M)\otimes\H^3(S^3).
\eeq
The operation $\text{Sq}^3$ annihilates both $\H^3(M)$ and $\H^3(S^3)$ and so in this case it annihilates $\H^3(P)$.  Second, if $c_2(P)\neq0$ then the last map of (\ref{gy3}) is an injection and so $\pi_*\H^3(P)=0$ therefore any element $a\in\H^3(P)$ is the pullback $\pi^*b$ of an element of $b\in\H^3(M)$.  As $\text{Sq}^3$ is natural and annihilates $\H^3(M)$
\beq
\text{Sq}^3a=\text{Sq}^3(\pi^*b)=\pi^*\text{Sq}^3b=\pi^*0=0.
\eeq
Therefore $\text{Sq}^3\H^*(P)=0$.  

As the kernel of $\text{Sq}^3$ is $\H^*(P)$ and the image is trivial, the first step of the spectral sequence does not affect the cohomology of $P$.  The second step, the cohomology with respect to $d_7=H\cup$, is identical to the only step in the spectral sequence for the computation of twisted cohomology.  
Thus we have proved 
\begin{theorem} \label{kprop} 
For $\pi:P\rightarrow M$ a principal $\sfSU(2)$ bundle over an oriented 4-manifold with second Chern class $c_2(P)$ 
and $H\in\H^7(P)$ such that $\pi_*H=c_2(P)$, up to an extension problem there is an isomorphism between the twisted 
cohomology $\H^{\rm{even/odd}}_H(P)$ and the $H$-twisted K-theory $\K^{\rm{even/odd}}_H(P)$ .
\end{theorem}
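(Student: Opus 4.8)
The plan is to compute the $H$-twisted K-theory $\K_H^{\rm{even/odd}}(P)$ by means of the twisted Atiyah--Hirzebruch spectral sequence and to show that, because $P$ is a seven-manifold, it collapses onto the twisted cohomology $\H_H^{\rm{even/odd}}(P)$ defined in Theorem~\ref{evenprop}. First I would set up the spectral sequence whose $E_2$-page is the ordinary integral cohomology $\H^*(P)$ assembled by parity of total degree and whose abutment is $\K_H^{\rm{even/odd}}(P)$. The crucial structural input is the location of the nonzero differentials: since the twist class lies in degree $7$ rather than degree $3$, the first twist-dependent differential appears only at $d_7$, while $d_3=\text{Sq}^3$ is the usual untwisted K-theory differential. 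Following the Mayer--Vietoris argument of \cite{BS}, adapted from the degree-$3$ case via the $O_\infty\otimes\cK$-bundle model of the twist, one identifies this seventh differential with cup product by the twist, $d_7=H\cup$.

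The next step is to verify that $d_3=\text{Sq}^3$ acts as zero on all of $\H^*(P)$, so that the $d_3$-step computes nothing and $E_4\cong E_2$. Because $\text{Sq}^3$ raises degree by three and $P$ has dimension seven, it can only be nonzero on $\H^0(P),\dots,\H^4(P)$; it vanishes on $\H^0,\H^1,\H^2$ for degree reasons, and on $\H^4(P)$ because its image is $\ZZ_2$-torsion while $\H^7(P)\cong\ZZ$ is torsion-free. The one genuinely delicate degree is $\H^3(P)$, which I would treat with the Gysin sequence \eqref{eqn:eqBAa}: when $c_2(P)=0$ the K\"unneth splitting exhibits $\H^3(P)$ as built from degree-three classes on $M$ and on $S^3$, on each of which $\text{Sq}^3$ vanishes, and when $c_2(P)\neq0$ the sequence forces $\H^3(P)=\pi^*\H^3(M)$, so naturality gives $\text{Sq}^3\pi^*=\pi^*\text{Sq}^3=0$. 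Hence $\text{Sq}^3\H^*(P)=0$.

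With the $\text{Sq}^3$-page trivial, the only surviving differential is $d_7=H\cup$, so the $E_8$-page is $\ker(H\cup)/\im(H\cup)=\H_H^{\rm{even/odd}}(P)$, exactly the twisted cohomology of Theorem~\ref{evenprop}. Since $P$ is seven-dimensional, no differential $d_r$ with $r\geq 8$ can be nonzero, so the spectral sequence degenerates at $E_8$ and $E_\infty\cong\H_H^{\rm{even/odd}}(P)$. As $E_\infty$ is the associated graded of a filtration of $\K_H^{\rm{even/odd}}(P)$, this yields the asserted isomorphism up to the extension problem that we are ignoring.

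The main obstacle is the justification that, for a degree-$7$ twist, the twisted Atiyah--Hirzebruch spectral sequence really reduces to the two differentials $d_3$ and $d_7$ on $\H^*(P)$. Identifying $d_7$ with $H\cup$ through the Mayer--Vietoris/$O_\infty$-bundle description is the technical heart, and one must also confirm that the intermediate secondary differential $d_5$ contributes nothing on the relevant groups; here the low dimension of $P$ together with the vanishing of $\text{Sq}^3$ are what make a clean two-step description available, but this is precisely the point that most needs care.
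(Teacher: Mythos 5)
Your proposal is correct and follows essentially the same route as the paper: both invoke the twisted Atiyah--Hirzebruch-type spectral sequence with differentials $d_3=\text{Sq}^3$ and $d_7=H\cup$ (justified via the Mayer--Vietoris argument of \cite{BS} and the $O_\infty\otimes\cK$-bundle model of the degree-$7$ twist), then kill $\text{Sq}^3$ on all of $\H^*(P)$ by the identical case analysis — degree reasons in low and high degrees, torsion-freeness of $\H^7(P)$ for $\H^4(P)$, and the Gysin/K\"unneth dichotomy on $c_2(P)$ for $\H^3(P)$ — leaving only $d_7=H\cup$, whose cohomology is $\H_H^{\rm even/odd}(P)$. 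Your closing caution about intermediate differentials such as $d_5$ is a reasonable flag, but the paper likewise simply asserts the two-step structure, so there is no substantive divergence.
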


Combining Theorems \ref{evenprop} and \ref{kprop} one obtains

\begin{theorem}  For $P\rightarrow M$ and $\widehat P\rightarrow M$ principal $\sfSU(2)$ bundles over an oriented 4-manifold related by the map (\ref{t}), there is an isomorphism between the twisted K-theories $\K^{\rm{even/odd}}_H(P)$ and $\K^{\rm{odd/even}}_{\widehat{H}}(\widehat P)$.
\end{theorem}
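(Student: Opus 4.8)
The plan is simply to concatenate the isomorphisms already in hand. First I would apply Theorem \ref{kprop} to each bundle separately: it identifies the $H$-twisted K-theory of $P$ with its $H$-twisted integral cohomology, $\K^{\rm{even/odd}}_H(P)\cong\H^{\rm{even/odd}}_H(P)$, and likewise $\K^{\rm{even/odd}}_{\widehat{H}}(\widehat P)\cong\H^{\rm{even/odd}}_{\widehat{H}}(\widehat P)$, each valid up to the extension problem recorded in that theorem. Next I would invoke Theorem \ref{evenprop}, which supplies the degree-shifting isomorphism $\H^{\rm{even/odd}}_H(P)\cong\H^{\rm{odd/even}}_{\widehat{H}}(\widehat P)$ induced by the T-duality map (\ref{t}). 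Stringing these three together yields
\beq
\K^{\rm{even/odd}}_H(P)\cong\H^{\rm{even/odd}}_H(P)\cong\H^{\rm{odd/even}}_{\widehat{H}}(\widehat P)\cong\K^{\rm{odd/even}}_{\widehat{H}}(\widehat P),
\eeq
which is exactly the claim.

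The one point requiring care, and the step I would treat most explicitly, is the extension problem left open by Theorem \ref{kprop}. The identification there arises from the two-step spectral sequence whose first differential $d_3=Sq^3$ was shown to vanish on $\H^\bullet(P)$ and whose second differential $d_7=H\cup$ produces precisely the twisted cohomology; since $P$ is a seven-manifold the sequence collapses thereafter, so $\H^\bullet_H(P)$ is the associated graded of $\K^\bullet_H(P)$, and the same holds for $\widehat P$. The cleanest way to conclude is therefore to observe that the T-duality transform respects these filtrations: it intertwines the $d_7$ differentials on $P$ and $\widehat P$ with a shift of degree by $3$, hence descends to an isomorphism of $E_\infty$-pages, which are the groups of Theorem \ref{evenprop}. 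It thus lifts to the stated isomorphism of twisted K-theories modulo the common extension ambiguity on both sides, and since both K-theories are $\ZZ/2$-graded the even/odd $\leftrightarrow$ odd/even shift is automatically compatible.

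The main obstacle is thus not in producing the map but in the bookkeeping around this extension problem. Since the statement is deliberately phrased up to an extension problem, matching the hypothesis of Theorem \ref{kprop}, no further work is needed to resolve it, and the proof reduces to the composition displayed above.
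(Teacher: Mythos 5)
Your proposal is correct and follows exactly the paper's route: the paper derives this theorem in one line by combining Theorem \ref{evenprop} with Theorem \ref{kprop}, precisely the composition you display. Your additional remarks on the extension problem and the compatibility of the T-duality transform with the $d_7$ filtration go slightly beyond what the paper records, but they are consistent with it and do not change the argument.
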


\subsubsection{Example}
Consider the 4-dimensional sphere $S^4$ as the projective quaternionic space $\HH P^1$. 
Define a quaternionic line bundle $\cL$ over $\HH P^1$ as follows. Consider the trivial rank 2 quaternionic vector bundle 
$\HH^2 \times \HH P^1 \to \HH P^1$, and define $\cL$ as the subbundle $\{(x, L) \in \HH^2 \times \HH P^1: 
x\in L\}$. It is known that the sphere bundle $\sfSU(2)\to S(\cL) \to \HH P^1$ has second Chern class 
equal to $-1$.  Therefore the pair $(S({\cL}),  {\rm vol})$ has spherical T-dual the pair $(S^7, -1\cdot {\rm vol})$ consisting of the Hopf bundle $S^7\to S^4$.



\subsection{Special Case: $\widehat P$ is the trivial bundle}



If $H=0$ so that $c_2(\widehat P)=0$ then $\widehat H=\widehat\pi^*c_2\cup a$ where $\widehat\pi_*a=1$.  Up to an extension problem the Gysin sequence splits to yield
\bea
\H^k_{H=0}(P)&\cong&\ker(c_2\cup)|_{\H^{k-3}(M)}\oplus\frac{\H^k(M)}{\im(c_2\cup)}\label{arbdim}\\
\H^k_{\widehat H}(M\times \widehat S^3)&\cong&\ker(c_2\cup)|_{\H^{k}(M)}\oplus\frac{\H^{k-3}(M)}{\im(c_2\cup)}
\nonumber
\eea
and so the isomorphism (\ref{seieq}) of Theorem \ref{seiprop} extends to manifolds $M$ of arbitrary dimension in the case $H=0$. 

When $H\in\H^7(P)$ is the pullback $\pi^* h$ of a class $h\in\H^7(M)$ then
\beq
c_2(\widehat P)=\pi_*H=\pi_*\pi^*h=0
\eeq
where the last equality follows from $\pi_*\pi^*=0$ which is implied by the exactness of the Gysin sequence.  In this case, $\widehat H=c_2\cup a+\widehat\pi^* h$ and Eq. (\ref{arbdim}) continues to hold, but using the $h$-twisted cohomology of $M$ and so Theorems \ref{seiprop} and \ref{evenprop} extend to this case as well.

\subsection{The Dimension of $M$ is Less than or Equal to 7}

In this subsection we will compute the integral twisted cohomology groups explicitly in the case in which the dimension of the base $M$ is less than or equal to 7 and we will see that the even twisted cohomology of $P$ continues to be isomorphic to the T-dual odd twisted cohomology of $\widehat P$.   Twisted cohomology can be computed from a spectral sequence beginning with ordinary cohomology whose first differential $d_1=H\cup$ and whose second differential is a dimension 13 secondary operation.  As the total spaces of $P$ and $\widehat P$ are at most of dimension 10, only the first differential $d_1$ acts nontrivially and so, up to an extension problem the twisted cohomology is just the cohomology with respect to $d_1=H\cup$.  It is this cohomology with respect to $d_1$ which we will compute and, since it is anyway only equal to $d_H$ cohomology up to an extension problem, we will assume that all exact sequences split so that we only compute the $d_1$ cohomology itself up to another extension problem.

 Note that twisted cohomology and $d_1$ cohomology are both ill-defined in general as $d_1\cup d_1=H\cup H$ only vanishes mod 2 and so the differential is not nilpotent.  However, again since we are only interested in manifolds of dimension less than or equal to 10, these classes vanish and so integral twisted cohomology is well-defined.

The Gysin sequence
\beq
\stackrel{c_2\cup}{\rightarrow}\H^k(M)\stackrel{\pi^*}{\rightarrow}\H^k(P)\stackrel{\pi_*}{\rightarrow}\H^{k-3}(M)\stackrel{c_2\cup}{\rightarrow}
\eeq 
may be reduced to the short exact sequence
\beq
0{\rightarrow}\frac{\H^k(M)}{\im(c_2\cup)}\stackrel{\pi^*}{\rightarrow}\H^k(P)\stackrel{\pi_*}{\rightarrow}\ker(c_2\cup)|_{\H^{k-3}(M)}{\rightarrow} 0.
\eeq 
Recall that we are only calculating the twisted cohomology up to an extension problem, and so we may assume that the sequence splits, yielding a decomposition of $\H^k(P)$
\beq
\H^k(P)=\pi^*\left(\frac{\H^k(M)}{\im(c_2\cup)}\right)\oplus(\pi_*)^{-1}\left(\ker(c_2\cup)|_{\H^{k-3}(M)}\right). \label{pco}
\eeq
Note that $\pi_*$ maps surjectively onto $\ker(c_2\cup)$ and so its inverse is well defined modulo an element in the image of $\pi^*$.

We will use this decomposition to decompose $H$ as
\beq
H=\pi^*h+(\pi_*)^{-1}{\widehat c_2}\hsp h\in\H^7(M)\hsp \widehat{c}_2\in\H^4(M).
\eeq
More generally we can decompose an arbitrary element $A\in\H^k(P)$ as
\beq
A=\pi^*a+(\pi_*)^{-1}\tilde a\hsp a\in\H^k(M)\hsp \tilde{a}\in\H^{k-3}(M).
\eeq
The first term in (\ref{pco}) corresponds to $h$ and $a$ while while $\widehat c_2$ and $\tilde a$ correspond to the second.  Therefore  (\ref{pco}) implies that
\beq
c_2\cup\widehat{c}_2=c_2\cup\tilde{a}=0
\eeq
while $h$ and $a$ are only defined modulo $\im(c_2\cup)$.

To calculate the cohomology twisted by $H$, we must determine the action of $H\cup$ on $\H^*(P)$.   Eq. (\ref{pco}) may be used to decompose this product
\beq
H\cup A=\pi^*b+(\pi_*)^{-1}\tilde b\hsp b\in\H^{k+7}(M)\hsp \tilde{b}\in\H^{k+4}(M)
\eeq
where $c_2\cup\tilde b=0$ and $b$ is only defined modulo $\im(c_2\cup)$.   In particular, $c^k$ is in the kernel of $H\cup$, and so represents an $H$-twisted cohomology class, if and only if both $b\in\im(c_2\cup)$ and $\tilde b=0$.

Using the properties of the pushforward and pullback maps
\beq
f_*f^*=0\hsp f_*(\alpha\cup f^*\beta)=(f_*\alpha)\cup\beta
\eeq
we can calculate
\bea
\tilde b&=&\pi_*(H\cup A)=\pi_*\left((\pi^*h+(\pi_*)^{-1}{\widehat c_2})\cup(\pi^*a+(\pi_*)^{-1}\tilde a)\right)\nonumber\\
&=&\pi_*\pi^*(h\cup a)+h\cup\pi_*(\pi_*)^{-1}\tilde a+\pi_*(\pi_*)^{-1}\widehat c_2\cup a+\pi_*\left((\pi_*)^{-1}\widehat c_2\cup(\pi_*)^{-1}\tilde a \right)\nonumber\\&=&h\cup\tilde a+\widehat c_2\cup a.
\eea
Here we used $(\pi_*)^{-1}\widehat c_2\cup(\pi_*)^{-1}\tilde a=0$  which is a result of the fact that the product of two 7 cocycles would be a 14 cocycle, but the total space $P$ is at most 10-dimensional, it must vanish.



Although $a$ is only defined modulo $\im(c_2\cup)$, $\widehat c_2\cup c_2=0$ and so $\widehat c_2\cup a$ and therefore $\tilde b$ is well-defined.

Similarly, this splitting can be used to compute $b$  by assembling the remaining terms
\beq
\pi^*b=\pi^*h\cup\pi^*a=\pi^*(h\cup a).
\eeq
As the kernel of $\pi^*$ is $\im(c_2\cup)$, this yields
\beq
b=h\cup a
\eeq
where it is understood that $b$ is only defined modulo $\im(c_2\cup)$.

Now we are ready to write a representative of general $H$-twisted cocycle in $\H^k(P)$.  It is an element of $\ker(H\cup)$, and so an element of $\H^k(P)$ such that $\tilde b=b=0$
\beq
\{\pi^*a+(\pi_*)^{-1}\tilde a|a\in\H^k(M),\tilde a\in\H^{k-3}(M),h\cup \tilde a+\widehat c_2\cup a=c_2\cup\tilde a=0,h\cup a\in\im(c_2\cup)\}.
\eeq
A twisted cocycle $\H^k_H(P)$ consists of such elements where one quotients $a$ by $\im(c_2\cup)$, as $a$ is only defined up to this equivalence, and also one quotients by $\im(H\cup)$ which decomposes into quotients of the components $a$ and $\tilde a$ by the values of $\pi^* b$ and $(\pi_*)^{-1}\tilde b$ respectively
\beq
\H^k_H(P)=\frac{\{a\in\H^k(M),\tilde a\in\H^{k-3}(M)|h\cup \tilde a+\widehat c_2\cup a=c_2\cup\tilde a=0,h\cup a\in\im(c_2\cup)\}}{a\sim a+\im(c_2\cup), \tilde a\sim\tilde a+\im(h\cup\ker(c_2\cup)),(a,\tilde a)\sim(a,\tilde a)+\im(h\cup,\widehat c_2\cup)}. \label{hh}
\eeq

If $h=0$ then the twisted cohomology simplifies to
\beq
\H^k_H(P)=\frac{\{a\in\H^k(M),\tilde a\in\H^{k-3}(M)|\widehat c_2\cup a=c_2\cup\tilde a=0\}}{a\sim a+\im(c_2\cup), \tilde a\sim\tilde a+\im(\widehat c_2\cup)}. \label{senzah}
\eeq
Under spherical T-duality, $c_2\leftrightarrow\widehat c_2$ which leaves the conditions and relations of (\ref{senzah}) invariant if $a\leftrightarrow\tilde a$, which shifts the degree of each generator by $3$, extending Theorem \ref{seiprop} to arbitrary dimensions in the case in which $h=0\in\H^7(M)$.  

Note that $h=0$ automatically if the dimension of $M$ is less than or equal to six.  Furthermore, if $M$ is an orientable 7-manifold and $h\neq 0$ then $h\cup\widehat a+\widehat c_2\cup a=0$ can be reduced to $\widehat c_2\cup a^\prime=0$ by shifting $a\rightarrow a^\prime=a+\gamma$ at degree $k=3$, where $\widehat c_2\cup\gamma=h$.  Note that $\gamma$ exists because $h$ is proportional to the top form and $M$ is orientable.  At other degrees $h\cup\tilde a=0$ for dimensional reasons and so again the condition reduces to $\widehat c_2\cup a=0$.  

Now T-duality, exchanging $c_2\leftrightarrow\widehat c_2$ and $a^\prime\leftrightarrow\tilde a$, again leaves the twisted cohomology invariant but shifts the degree of each generator by 3, in accordance with Theorem \ref{seiprop}.  
Thus we have extended Theorem \ref{seiprop} and so also its corollary Theorem \ref{evenprop} to the case in which the dimension of $M$ is less than or equal to 7.

\subsection{Example: Bundles over $S^4\times S^3$} \label{setsez}

We have seen that when the dimension of the base $M$ is equal to seven, spherical T-duality is complicated by the fact that part of the twist may arise from the pullback of a 7-class on the base.  This 7-class prevents a unique choice of $\widehat H$ already when M is a 7-manifold and when the dimension is higher than 7 it prevents us from proving that spherical T-duality induces an isomorphism on integral twisted cohomology, which is well defined when the dimension of $P$ is less than or equal to 13, corresponding to dim$(M)=10$.

In this subsection we will consider the example $M=S^4\times S^3$, in which the richness of the 7-dimensional case can be seen.  In fact, at dim$(M)=7$, the obstructions described above only occur when $\H^4(M)$ contains nontorsion classes, as in this case.  Let $\alpha$ and $\beta$ be the generators of $\H^4(M)$ and $\H^3(M)$ respectively.  Define the $\sfSU(2)$ principal bundles $P$ and $\widehat P$ to be Cartesian products of $S^3$ with the bundles in Subsec. (\ref{sferasez}).  In particular
\beq
c_2(P)=k\alpha\hsp c_2(\widehat P)=j\alpha.
\eeq
Notice that $P$ and $\widehat P$ are not the only bundles with their Chern classes, one may also cut out a 7-ball and reglue it using a transition function representing a nontrivial element of $\pi_6(S^3)=\Z_{12}$.  These other bundles $\widehat P$ are therefore also T-dual to $(P,H)$.  Such choices would not change the Gysin sequences and so calculations of cohomology groups below, but will obstruct the constructions of bundle automorphisms that we will then describe.

The cohomology of $P$ is easily obtained from the K\"unneth theorem together with the cohomology of an $\sfSU(2)$-bundle over $S^4$
\beq
\H^0(P)\cong\H^3(P)\cong\H^{10}(P)\cong\Z\hsp \H^4(P)\cong\Z_k\hsp \H^7(P)\cong\Z\oplus\Z_k
\eeq
and similarly for $\widehat P$ with $j\leftrightarrow k$.  It is the degree 7 cohomology which is relevant for spherical T-duality, which is described by the Gysin sequence
\beq
\xymatrix{
\H^3(M)\ar[d]^\cong \ar[r]^{\cup c_2} & \H^7(M) \ar[d]^\cong \ar[r]^{\pi^*} & \H^7(P) \ar[d]^\cong\ar[r]^{\pi_*} & \H^{4}(M) \ar[d]^\cong\ar[r]^{\cup c_2} & \H^8(M)
\ar[d]^\cong  \\
\ZZ \ar[r]^{\times k} & \ZZ \ar[r] & \ZZ\oplus\ZZ_k \ar[r] & \ZZ \ar[r]  & 0 
}
\eeq  
As $\pi_*$ is surjective it must act as $\pi_*:(m\in\Z,n\in\Z_k)\mapsto m$.  Therefore $c_2(\widehat P)=\pi_*(H)$ is independent of $n\in\Z_k$.  Similarly, $c_2(P)=\widehat\pi_*(\widehat H)$ will not determine the $\Z_j$ torsion part of $\widehat H$.  However these torsion parts are still restricted by the condition $p^*\widehat H=\widehat p^* H\in\H^7(\widehat P_M\times P)$ to which we will now turn.  First however we comment that the bundle automorphisms that we will describe later determine the images of the maps $\pi^*:\H^7(M)\rightarrow\H^7(P)$ and  $\widehat \pi^*:\H^7(M)\rightarrow\H^7(\widehat P).$ 

The seventh cohomology group of the correspondence space can be found from that in Subsec. \ref{sferasez} together with the K\"unneth theorem
\beq
\H^7(P\times_M \widehat P)\cong (\Z\oplus\Z_{\gcd(j,k)})\otimes \H^0(S^3)\oplus \Z_{\gcd(j,k)}\otimes \H^3(S^3)\cong\Z\oplus\Z_{\gcd(j,k)}\oplus\Z_{\gcd(j,k)}.
\eeq
We will need the pullbacks to the correspondence space.  First consider $\widehat p^*H$.  This can be calculated using the Gysin subsequence
\beq
\xymatrix{
\H^3(P) \ar[r]  \ar[d]^\cong&  H^7(P) \ar[d]^\cong \ar[r]^{\widehat p^*\quad} & \H^7(P\times_M \widehat P)  
\ar[d]^\cong \ar[r]^{\quad \widehat p_*}  & \H^4(P) \ar[d]^{\cong} \ar[r] & \H^8(P) \ar[d]^\cong \\
\ZZ \ar[r]^{(0,\times j)} & \ZZ\oplus \ZZ_k \ar[r]  & \ZZ\oplus\ZZ_{\gcd(j,k)}^2 \ar[r] & \ZZ_k \ar[r]^{\times j} & 0 
}
\eeq
The maps can be constructed by applying the K\"unneth theorem to the corresponding maps in the case $M=S^4$. 
\bea
&&\widehat p ^*:\H^7(P)\longrightarrow \H^7(P\times_M \widehat P):(m,n)\rightarrow\left(\frac{km}{\gcd(j,k)},mb,n\right)\nonumber\\
&&p^*:\H^7(\widehat P)\longrightarrow \H^7(P\times_M \widehat P):(\widehat m,\widehat n)\rightarrow\left(\frac{j\widehat m}{\gcd(j,k)},\widehat m\widehat b,\widehat n\right).
\eea

Recalling that the relation $c_2(\widehat P)=\pi_*H$ implies $m=j$ and similarly $\widehat m=k$, these maps are
\bea
\widehat p^* H&=&\widehat p^*(j,n)=\left(\frac{kj}{\gcd(j,k)},jb,n\right)=\left(\frac{kj}{\gcd(j,k)},0,n\right)\in\Z\oplus\Z_{\gcd(j,k)}^2\nonumber\\
p^* \widehat H&=&\widehat p^*(k,\widehat n)=\left(\frac{jk}{\gcd(j,k)},kb,\widehat n\right)=\left(\frac{jk}{\gcd(j,k)},0,\widehat n\right)\in\Z\oplus\Z_{\gcd(j,k)}^2.
\eea
Note that the first two components of $\widehat p^*H$ and $p^*\widehat H$ agree and the condition that the third are equal is equivalent to 
\beq
n=\widehat n\in\Z_{\gcd(j,k)}.
\eeq
This condition does not uniquely determine $\widehat n$ given $n$, the ambiguity is an element of $\Z_{j/\gcd(j,k)}$.   And so we see that indeed the pair $(P,H)$ does not uniquely determine $(\widehat P,\widehat H)$ as expected when dim$(M)\geq 7$.

Also in the case of the topological T-duality for circle bundles, $\widehat H$ is not determined uniquely from the original pair $(P,H)$.  However it is determined up to a bundle automorphism.  In the case at hand $\widehat H$ is also determined up to a bundle automorphism.  Construct a map $g:M=S^4\times S^3\rightarrow\sfSU(2)$ by composing any projection $M\rightarrow S^3$ which is the identity on the $S^3$ with an identification between $S^3$ and the group manifold $\sfSU(2)$.     The bundle isomorphism is just the fiberwise right multiplication of $\widehat P$ by $g(m)$ for each $m\in M$, $\widehat n-n$ mod $\gcd(j,k)$ times.

Does this choice affect the T-duality isomorphism between the integral twisted cohomologies?  The nonvanishing elements of the integral twisted cohomology groups are
\bea
&&\H_H^4(P)\cong\Z_k\hsp\H^{10}_H(P)\cong\Z_j\hsp\H^7_H(P)\cong\frac{\Z\oplus\Z_j}{(j,n)\Z}\cong\Z_{\frac{jk}{\gcd(k,n)}}\oplus\Z_{\gcd(k,n)}\nonumber\\
&&\H_{\widehat H}^4(\widehat P)\cong\Z_k\hsp\H_{\widehat H}^{10}(\widehat P)\cong\Z_j\hsp\H_{\widehat H}^{7}(\widehat P)\cong\frac{\Z\oplus\Z_j}{(k,n)\Z}\cong\Z_{\frac{jk}{\gcd(j,n)}}\oplus\Z_{\gcd(j,n)}.
\eea
When $n=0$ the even and odd twisted cohomology groups are all isomorphic to $\Z_j\oplus\Z_k$ and so T-duality gives a true isomorphism.  However more generally T-duality relates two distinct extensions of $\Z_j$ by $\Z_k$, and so provides an isomorphism up to an extension problem.   This should be of no surprise both as the demonstration of the isomorphism the previous subsection was only performed up to an extension problem and also the identification of the twisted cohomology with the $H$ cohomology via the spectral sequence only holds up to an extension problem.

\medskip


\section{Applications to supergravity and string theory}

\subsection{10-dimensional Supergravity} \label{stringsez}

The data in 10-dimensional supergravity includes a Lorentzian 10-manifold $Y^{10}$ together with several real valued $(p+1)$-chains called $p$-branes and $p$-cochains called $p$-fluxes.   The chains and cochains must satisfy certain consistency conditions called the equations of motion.  We will restrict our attention to the most common case of interest in which there exists a diffeomorphism $Y^{10}\cong \R\times X^9$ which induces a foliation of $Y^{10}$ into copies $X^9_t$ of $X^9$ with $t\in\R$ such that the metric induced on each $X^9_t$ is Riemannian.   An allowed $(p+1)$-chain on $Y^{10}$ can be intersected with $X^9_t$ to yield a $p$-chain on $X^9_t$ corresponding to a given $p$-brane.  At the level of cohomology the K\"unneth theorem allows a decomposition of a $p$-flux on $Y^{10}$ into a $p$-class and a $(p-1)$-class on $X^9_t$.

A charge on $X^9_t$ is a map $g$ from the $p$-chains corresponding to allowed $p$-branes on $X^9_t$ to a set $S$ .  The charge of the brane is the image of the map applied to the chain corresponding to that brane.  As there are many different choices of maps with many possible sets as their image, the set of charges is too large to be useful.  For example, the set $S$ may just be the set of chains in $X^9_t$.  The set of charges is defined to be the image of $g$, which is a subset of $S$.  If $S$ is the set of chains and the $p$-fluxes are equal to zero, then the set of charges is the set of cycles on $X^9_t$. 

A more useful notion is that of a conserved charge.  This is a charge on $X^9_t$ which, for any allowed $p$-brane, will have an image which is independent of $t$.   The set of conserved charges is the image of the corresponding map $g$.  The set of cycles does not satisfy this condition, because if a $p$-brane whose $(p+1)$-cycle on $Y^{10}$ restricts to a cycle $Z^p$ of $X^9_t$, the equations of motion allow it to restrict to other cycles $\tilde Z^9$ of $X^9_{\tilde t}$.  However, the equations of motion demand, if all of the $p$-fluxes vanish, that $Z^p$ and $\tilde Z^p$ are homologous. Therefore, the homology group $\H_p(X^9)$ is a set of conserved charges for $p$-branes.   As $X^9$ is oriented in IIB supergravity, we may use Poincar\'e duality to identify these conserved charges with the cohomology group $\H^{9-p}(X^9)$.

In this section it will be understood that all cohomology groups are calculated with real coefficients, as branes and fluxes in supergravity correspond to chains and cochains with real coefficients.  We will comment upon extensions to string theory, where Dirac quantization implies that the chains and cochains have integral coefficients.

\subsection{Type IIB Supergravity with Zero Flux}

Each supergravity theory comes with a set of $p$-branes and $p$-fluxes.  We will consider Type IIB supergravity.  In this theory there are D1-branes, D3-branes, D5-branes, D7-branes, F1-branes and NS5-branes.   Therefore a configuration will consist of two 1-chains and 5-chains and one 3-chain and 7-chain on each $X^9_t$.   The $p$-fluxes are $F_1$, $F_3$, $F_5$, $F_7$, $F_9$, $H_3$ and $H_7$.  Note that, for $p\geq 0$, there is a pairing between $p$-branes and $(p+2)$-fluxes.  In string theory this pairing extends to $p=-1$.  The $p$-fluxes must satisfy consistency conditions $F_p=*F_{10-p}$ and $H_7=*H_3$.

Let us begin with the case in which all fluxes vanish.  Now, as we wrote in the previous subsection, $\H^{9-p}(X^9)$ is an allowed group of conserved charges for $p$-branes.  Therefore the corresponding conserved charges for D-branes are the even cohomology $\H^{\rm even}(X^9)$ while NS5-brane and F1-brane charges are $\H^4(X^9)$ and $\H^8(X^9)$ respectively.   These charges are summarized in Table \ref{cariche}.

What about T-duality?  We will first consider ordinary T-duality, in which $X^9$ admits a free circle action whose space of orbits $M^8$, so that $X^9$ is a circle bundle over $M^8$.  The T-dual of $X^9$ will be $\widehat X^9\cong M^8\times S^1$.  The T-dual $H_3$ vanishes if and only if \cite{BEM} $X^9$ is the trivial circle bundle $X^9\cong M^8\times S^1$, for now we will consider this case.  Then by the K\"unneth theorem the even and odd cohomologies of $X^9$ and $\widehat X^9$ are both isomorphic to the cohomology of $M^8$ and so isomorphic to each other.  The T-duality map induces the isomorphism between the even (odd) cohomology of $X^9$ and the odd (even) cohomology of $\widehat X^9$.  

Thus T-duality provides an isomorphism on the D-brane charges.  However T-duality does {\it not} provide an isomorphism on the NS brane charges.  The T-dual of an NS5-brane with respect to a circle action which leaves the corresponding 5-cycles invariant yields another NS5-brane but a more general circle action can lead to a T-dual in which the NS5-brane disappears entirely, being replaced with a degeneration in the dual circle. Therefore T-duality only induces an isomorphism on the set of D-brane charges $\H^{\rm even}(M)$, not on the set of all $p$-brane charges $\H^{\rm even}(M)\oplus\H^4(M)\oplus\H^8(M)$.

\begin{table}[position specifier]
\centering
\begin{tabular}{c|l|l|l|l|}
Brane&No flux&$H_3\neq 0$&$G_7\neq 0$&$H_7\neq 0$\\
\hline\hline
D1&$\H^8(X^9)$&$\H^8_{H_3}(X^9)$&$\H^8(X^9)$&$\H^8_{H_7}(X^9)$\\
\hline
D3&$\H^6(X^9)$&$\H^6_{H_3}(X^9)$&$\H^6_{G_7}(X^9)$&$\H^6_{H_7}(X^9)$\\
\hline
D5&$\H^4(X^9)$&$\H^4_{H_3}(X^9)$&$\H^4_{G_7}(X^9)$&$\H^4(X^9)$\\
\hline
D7&$\H^2(X^9)$&$\H^2_{H_3}(X^9)$&$\H^2_{G_7}(X^9)$&$\H^2(X^9)$\\
\hline
F1&$\H^8(X^9)$&$\H^8(X^9)$&$\H^8_{G_7}(X^9)$&$\H^8(X^9)$\\
\hline
NS5&$\H^4(X^9)$&$\H^4(X^9)$&$\H^4(X^9)$&$\H^4_{H_7}(X^9)$\\
\hline
NS7&&&&$\H^2_{H_7}(X^9)$\\
\hline
\end{tabular}
\caption{Set of conserved charges for various branes in IIB supergravity with all fluxes equal to zero and with all fluxes equal to zero except for one}
\label{cariche}
\end{table}

\subsection{Type IIB Supergravity with $H_3$ Flux}

Now let the cocycle $H_3$ on $Y^{10}$ be nonzero.   The Hodge duality condition $H_7=*H_3$ is also nonzero.  The conserved charges only depend on the cohomology classes of the fluxes.  Using the K\"unneth theorem, we can decompose $\H^p(Y^{10})\cong\H^p(X^9)\oplus\H^{p-1}(X^9)$ and so decompose
\beq
H_3=\tilde H_3 + \tilde H_2\wedge e\hsp H_7=\tilde H_7 + \tilde H_6\wedge e \label{em}
\eeq
where $e$ generates $\H^1(\R)$.   Up to an exact element, the Hodge duality condition then relates $\tilde H_3$ to $\tilde H_6$ and $\tilde H_2$ to $\tilde H_7$.  However, it turns out that $\tilde H_2$ and $\tilde H_6$ do not affect the set of conserved branes.  Therefore only $\tilde H_3$ and $\tilde H_7$ will be relevant here.  These two cocycles of $X^9$ are not related by any Hodge duality and so can be chosen independently.  To keep the notation as uncluttered as possible, we will suppress the tildes.

We will begin by considering the case in which $H_3$ is arbitrary but $H_7$ and all other cocycles vanish.  In this case D-branes corresponding to certain cohomology classes are not allowed by the equations of motion.  The allowed cohomology classes are precisely the kernels of the differential operators in the spectral sequence which determines the twisted cohomology of $X^9$ with respect to $d_H=d-H\wedge$, as was shown over the reals in Ref~\cite{andreas} by adapting the analogous argument over the integers in Ref.~\cite{MMS}, where the spectral sequence yields twisted K-theory \cite{BCMMS}.  Similarly, a D-brane restricted to $X^9_t$ and $X^9_{\tilde t}$ can represent two distinct cohomology classes.  However these two cohomology classes are always equal up to an element of the image of these differential operators.  Therefore twisted cohomology $\H^{\rm even}_{H_3}(X^9)$ is a set of conserved charges for D-branes in type IIB supergravity. 

This set of conserved charges is invariant under T-duality:  It was shown in Ref.~\cite{BEM} that the T-duality map induces an isomorphism between the even (odd) twisted cohomology of $X^9$ and the odd (even) twisted cohomology of its T-dual $\widehat X^9$.  Here $\widehat X^9$ is a circle bundle over $M^8$ with Chern class equal to the pushforward of $H$ by the projection map $\pi:X^9\rightarrow M^8$.  In type IIB string theory $H$ is an integral 3-cocycle and the corresponding conserved charges are given by twisted K-theory, and it was shown that T-duality induces an isomorphism in this setting as well.

Recall that the conserved charges are a map from the set of $p$-branes to a set. Twisted cohomology only classifies D-branes, not NS-branes, and so the kernel of this map includes all F1 and NS5-branes.   As a result this set of conserved charges is not maximal, it only classifies some of the objects in the theory.  

\subsection{Type IIB Supergravity with $F_7$ Flux}

In the previous subsection we considered the case in which the cocycle $H_3$ on $X^9$ is arbitrary while $H_7$ represents  $0\in\H^7(X^9)$.    What if instead $H_3$ is exact and $H_7$ is arbitrary?  Some motivation for the study of this case has appeared in the work of Sati; relations between the roles played by 3 forms and 7 forms in type II string theory have been discussed in Ref. \cite{hishamns5} while \cite{sati09} used $H_7$ twisted cohomology to study fields and charges in  a simplified version of heterotic string theory.   

Type IIB supergravity and string theory admits a $\Z_2$ automorphism called S-duality which exchanges D5 and NS5-branes, D1 and F1-branes and also $F_{p+2}$ and $H_{p+2}$ fluxes for $p=1$ and $5$, while leaving the others invariant.  The action of S-duality on the D7-brane is somewhat more complicated.  The arguments above all admit simple actions of S-duality \cite{uday}.  What if we use S-duality to change the above question somewhat, to study configurations in which the only nontrivial cocycle is $F_7$?

Several string theory compactifications of interest are on the total spaces of $S^3$ bundles and have nontrivial $F_7$ fluxes, such as string theory on AdS$^3\times{\rm T}^4\times S^3$ and AdS$^3\times{\rm K3}\times S^3$.  These examples also have nontrivial $F_3$ fluxes.  Nonetheless, the spherical T-duals are easily computed, they are products of AdS$^3$ with nontrivial $S^3$ bundles fibered over $T^4$ and ${\rm K3}$ respectively.  As the base is a 4-manifold, spherical T-duality and the corresponding isomorphisms on twisted cohomologies are well defined over the integers.  If $H_7$ is equal to $k$ times the top class, then the $H_7$-twisted cohomology of $P=M^4\times S^3$ where $M^4={\rm T}^4$ and K3 are 
\bea
&&\H^0_{H_7}(T^4\times S^3)=0\hsp \H^1_{H_7}(T^4\times S^3)=\H^6_{H_7}(T^4\times S^3)=\Z^4\nonumber\\
&& \H^2_{H_7}(T^4\times S^3)=\H^5_{H_7}(T^4\times S^3)=\Z^6\nonumber\\
&&\H^3_{H_7}(T^4\times S^3)=\H^4_{H_7}(T^4\times S^3)=\Z^5\hsp \H^7_{H_7}(T^4\times S^3)=\Z_k
\eea
and
\bea
&&\H^0_{H_7}(K3\times S^3)=\H^1_{H_7}(K3 \times S^3)=\H^6_{H_7}(K3 \times S^3)=0\nonumber\\
&&\H^2_{H_7}(K3 \times S^3)=\H^5_{H_7}(K3 \times S^3)=\Z^{22}\nonumber\\
&&\H^3_{H_7}(K3 \times S^3)=\H^4_{H_7}(K3 \times S^3)=\Z\hsp \H^7_{H_7}(K3 \times S^3)=\Z_k.
\eea
The spherical T-duals $\widehat P_{T^4}$ and $\widehat P_{K3}$ have $H_7=0$ and so their twisted cohomology is just their ordinary cohomology, which can be calculated from the Gysin sequence
\bea
&&\H^0_{H_7}(\widehat P_{T^4})=\Z\hsp \H^1_{H_7}(\widehat P_{T^4})=\H^6_{H_7}(\widehat P_{T^4})=\Z^4\nonumber\\
&& \H^2_{H_7}(\widehat P_{T^4})=\H^5_{H_7}(\widehat P_{T^4})=\Z^6\nonumber\\
&&\H^3_{H_7}(\widehat P_{T^4})=\Z^4\hsp \H^4_{H_7}(\widehat P_{T^4})=\Z^4\oplus\Z_k\hsp \H^7_{H_7}(T^4\times S^3)=\Z
\eea
and
\bea
&&\H^0_{H_7}(\widehat P_{K3})=\Z\hsp \H^1_{H_7}(\widehat P_{K3})=\H^6_{H_7}(\widehat P_{K3})=0\nonumber\\
&&\H^2_{H_7}(\widehat P_{K3})=\H^5_{H_7}(\widehat P_{K3})=\Z^{22}\nonumber\\
&&\H^3_{H_7}(\widehat P_{K3})=0\hsp \H^4_{H_7}(\widehat P_{K3})=\Z_k\hsp \H^7_{H_7}(\widehat P_{K3})=\Z.
\eea
In the case $M=T^4$ the even and odd conserved charges $\Z^{15}$ and $\Z^{15}\oplus\Z_k$ are exchanged by T-duality as are $\Z^{23}$ and $\Z^{23}\oplus\Z_k$ in the case $M=$K3.

As was described in Ref.~\cite{MMS}, there is a simple 1-1 correspondence between differentials in a spectral sequence which computes the a set of conserved charges and baryonic configurations of the kind introduced in Ref.~\cite{barioni}.  In the case of $F_7$ there is only one kind of baryon, a D7-brane whose cap product with the 7-cocycle $F_7$ is nontrivial and is equal to the boundary of an F1-brane.  This single baryon leads to a single differential $d_1=F_7$ which acts on D7-branes and yields F1-branes.  As a result the set of conserved D7-brane charges is $\ker(d_1)|_{\H^2(X^9)}$ while the conserved F1 charges are $\H^8(X^9)/\im(d_1)$.  A set of conserved charges for all other $p$-branes is just $\H^{9-p}(X^9)$.

This means that we can assemble the F1, D3, D5 and D7 charges together into $F_7$-twisted cohomology $\H^{\rm even}_{F_7}(X^9)$ while the D1 and NS5 are classified by ordinary cohomology $\H^8(X^9)$ and $\H^4(X^9)$.   As $H$ vanishes, the Freed-Witten anomaly in fact implies that, in the integral lift, when $H_7=0$, F1, D3, D5 and D7 charges are classified by untwisted K-theory.  7-twisted K-theory on a 9-manifold, like $X^9$ is just the $d_1$ cohomology of untwisted K-theory and so in fact this subset of branes in type IIB string theory has a set of conserved charges in one to one correspondence with elements of $\K^0_{F_7}(X^9)$.  Unfortunately there is no candidate for a T-dual theory with a 7-flux twist.

What about the original question, what if only $H_7$ is nontrivial?  String theory contains a function $Y^{10}\rightarrow\R$ called the dilaton.  If one allows the dilaton to be defined only locally, then the D7-brane itself has an S-dual, traditionally called the NS7-brane.  This NS7-brane brane appears in some cases of interest \cite{vafa}, indeed in F-theory it can be obtained from a D7-brane by exchanging the two circles in the torus fibration.  Now the corresponding baryon is an NS7-brane with a nontrivial cap product with $H_7$ equal to the boundary of a D1-brane.   The D1, D3, NS5, NS7 conserved charges are $\H^{\rm even}_{H_7}(X^9)$, with an integral lift to $\K^0_{H_7}(X^9)$ in string theory, while F1 and D5 branes are classified by $\H^8(X^9)$ and $\H^4(X^9)$.

Our isomorphism Theorem \ref{princ} states that spherical T-duality will yield an isomorphism to $\H^{\rm even}_{H_7}(\widehat X^9)$.  There is no known prescription however for the T-dual of a configuration with a dilaton which is only locally defined.  Thus the construction of either a T-dual or a spherical T-dual in string theory remains an open problem.  Furthermore, the obvious target of a T-dual, type IIA supergravity or string theory, contains an $H_7$-flux but no obvious conserved charges classified by $H_7$-twisted cohomology.

As a result, unlike ordinary T-duality, spherical T-duality does not correspond to any known isomorphism between physical theories.  However, two spherical T-duals, using distinct $\sfSU(2)$ bundle structures, will again yield the same set of conserved charges $\H_{H_7}^{\rm even}$.  Therefore such a spherical T-duality provide a one to one correspondence between a set of conserved charges in certain distinct IIB supergravity and string compactifications.

\subsection{Classification of Fluxes and Bianchi Identities}

The appearance of $7-twisted$ cohomologies classifying conserved charges of $p$-branes in IIB supergravity can be seen already from the viewpoint of the Bianchi identities of the fluxes, with no branes at all.  In Ref. \cite{moorewitten} the authors noted that there is a 1-1 correspondence between the classifications of $F_p$ fluxes and D-branes given by Stokes' theorem, an observation extended to all fluxes and branes in Ref. \cite{monodromie} and reviewed in \cite{modave}.  In this subsection we will treat the fluxes as differential forms.  Define the $d_{H_3}$-closed field strength
\beq
G_p=F_p+B\wedge F_{p-2} \,,
\eeq
where the exact forms $F_p$, $F_{p-2}$ and $B$ satisfying $H_3=dB$ are defined patchwise on a good cover.  The equations of motion are obtained by setting to zero the variation of the kinetic term
\beq
S\propto\int H_3\wedge H_7+G_3 \wedge G_7 \,,
\eeq
with respect to $B$, where locally $H_3=dB$, to zero.  This yields
\beq
0=\frac{\delta S}{\delta B}= -dH_7+ F_1\wedge F_7\,.
\eeq
One thus sees that the complex $(H_7,F_1)$ must be closed under the operation $d_{F_7}$ as desired.  As the other fluxes do not enter in this expression, one in fact finds that the complex $(H_7,F_5,F_3,F_1)$ is $d_{F_7}$ closed.  It was shown in Ref. \cite{monodromie} that $d_{F_7}$-exact fluxes are related by automorphisms generated by certain monodromies and so the quadruplet $(H_7,F_5,F_3,F_1)$ is classified by $\H_{F_7}^{\rm odd}(X^9)$.  

Recall from Eq.~(\ref{em}) that each p-form on $Y^{10}$ can be decomposed into a $p$-form on $X^9$ and a $(p-1)$-form on $X^9$ with one leg along the time direction.  These are called magnetic and electric fluxes respectively.  The above argument was given for the magnetic quadruplet $(H_7,F_5,F_3,F_1)$, but in fact, as we are using a magnetic $F_7$, it applies identically to the corresponding electric quadruplet which, being one degree lower, will then be classified by $\H_{F_7}^{\rm even}(X^9)$.  We therefore learn that spherical T-duality provides a one to one correspondence between the allowed electric and magnetic fluxes which is inequivalent to the familiar electromagnetic duality given by Hodge duality.

Now $dH_7$ is Poincar\'e dual to an F1-brane and $dF_1$ to a D7-brane and $dF_7$ to a D1-brane.  An application of $d$ yields
\beq
ddH_7=dF_1\wedge F_7+ F_1\wedge dF_7.
\eeq
The last term vanishes if we consider a configuration with no D1-branes.  Note that $ddH_7$ is nontrivial if there are F1-branes with boundaries, in which case it is Poincar\'e dual to the boundary.  Taking the cap product of this expression with the top class one finds that the boundary of the F1-branes is equal to integral of $F_7$ over the D7-branes, thus demonstrating the existence of the baryon configuration invoked in the previous subsection.

\section{Towards Geometry}

Thus far we have mainly considered spherical T-duality from a topological perspective.  However,
to determine whether this duality is actually a symmetry of some physical theory, we need to introduce some
geometry and determine a set of transformation rules analogous to the Buscher rules for ordinary T-duality  \cite{Buscher, Buscher2}.  

In order to get some insight into how spherical T-duality acts on concrete geometries, we discuss some explicit
examples of metrics, connections and 7-forms on certain principal $\sfSU(2)$-bundles.  
Concretely, we aim to construct a canonical metric on a principal $\sfSU(2)$-bundle $\pi:P\to M$ with 
2nd Chern number $c_2(P)$  of the form
\begin{equation} \label{eqn:0501a}
ds^2_P = ds^2_M + A \odot A\,,
\end{equation}
where $A$ is a principal connection on $P$, such that 
\begin{equation}
c_2(P) = \frac{1}{8\pi^2} \int_M \text{Tr} ( F_A \wedge F_A) \,.
\end{equation}

We will use a construction for base manifolds of the type $S^1\times M$ which involves 
a particular interpretation of the Chern-Simons form (see \cite{CS74}).  

To this end,  let $\pi:P\to M$ be a principal $\sfSU(2)$-bundle, and let
$A(t)$ be a path of principal connections on $P$ in $\Omega^1(P,\fg)$ such that $A(t)=A_0$, $A(1) = A_1$, then we have 
\begin{equation*}
\frac{1}{8\pi^2} \int_{[0,1]\times M} \text{Tr} (F_{A(t)} \wedge F_{A(t)}) = \int_{[0,1]\times M} d\CS(A(t)) = 
\cs(A_1) - \cs(A_0) \,,
\end{equation*}
where we have defined
\begin{equation*}
\cs(A) = \int_M \CS(A)  = \frac{1}{8\pi^2} \int_{M}   \text{Tr} ( A\wedge F - \frac13 A\wedge A\wedge A)   \,.
\end{equation*}
In particular, if we take $g:M\to \sfG$, $A_0=A$, $A_1={}^gA$, then 
\begin{equation*}
\int_{[0,1]\times M} \text{Tr} (F_{A(t)} \wedge F_{A(t)}) = \cs({}^gA) - \cs(A) = \text{deg}\ g\,.
\end{equation*}
Now, if we glue the endpoints of $[0,1]$ by the gauge transformation $g$, we obtain a $\sfG$-bundle $\widetilde P$
over $S^1\times M$ with $c_2 (\widetilde P)=  \text{deg}\ g$.  It thus remains to provide an explicit formula for $g:M\to \sfG$,
then we can construct a principal connection $A$ on $\widetilde P$, and the canonical metric 
\begin{equation} \label{eqn:0501b}
ds_{\widetilde P}^2 = ds^2_{S^1} + ds^2_M + A \odot A
\end{equation}
on the $\sfG$-bundle $\widetilde P$ over $S^1\times M$.
\medskip

Now take $\sfG=\sfSU(2)$, and $M=S^3$.  Coordinates on the $S^3$ base, and $S^3$-fiber,
are given in terms of unit quaternions $p$ and $q$, respectively.
The (left-invariant) Maurer-Cartan forms are given by $\bar p dp$ and $\bar q dq$, and a metric
for the trivial $\sfSU(2)$-bundle over $S^3$ is given by 
\begin{equation}
ds_{S^3\times S^3}^2 =  |\bar pdp|^2 + |\bar qdq|^2\,,
\end{equation}
which is of the form \eqref{eqn:0501a} with trivial principal connection $A = \bar qdq$.
A map $g:S^3\to \sfSU(2)$ of degree $k$ is given in terms of quaternions by $p \mapsto p^k$, and
acts on the $\sfSU(2)$-fiber as $q\to p^k q$,
so we have a principal connection on the $\sfSU(2)$-bundle $P$ over $S^1\times S^3$ with $c_2(P)=k$
by
\begin{equation*}
A = (1-t) \bar q dq + t (\overline{p^k q}) d(p^k q)  = \bar q dq +  t  \bar q (\bar p^k dp^k) q \equiv \bar q \, (d + \widetilde A) \, q \,.
\end{equation*}
[Note that $\bar p^k dp^k \neq k \, \bar p dp$ as $p$ and $dp$ do not commute.]

So, for $k=1$ we have $\widetilde A = t  \bar pdp$, and 
\begin{equation*}
\widetilde F = dt \wedge \bar p dp +  (t^2-t)\bar p dp \wedge \bar p dp 
\end{equation*}
Thus
\begin{equation*}
\widetilde F \wedge \widetilde F = 2 t (t-1) dt \wedge (\bar pdp)^3 \,,
\end{equation*}
and 
\begin{equation*}
c_2(P) = \frac{1}{8\pi^2} \int_{S^1\times S^3} \text{Tr} (\widetilde F \wedge \widetilde F) = 
-3 \ \cs(\bar p dp) \int_0^1 dt\  2 t (t-1)= 1 \,,
\end{equation*}
where we have used the normalization 
\begin{equation*}
\cs( \bar p dp)  = -\frac{1}{24\pi^2} \int_{S^3} \text{Tr} (\bar p dp)^3 = 1 \,.
\end{equation*}

Now, take $k=2$.  Then by the same arguments as before, we have 
\begin{equation*}
c_2(P) = \cs( \bar p^2 dp^2) =2\,,
\end{equation*}
where the result follows from our abstract arguments above.
On the other hand
\begin{equation*}
\bar p^2 dp^2 = \bar p dp + \bar p ( \bar p dp ) p \,.
\end{equation*}
We have 
\begin{equation*}
\cs( \bar p^2 dp^2) = \cs(\bar p dp + \bar p ( \bar p dp ) p)  = - \frac{1}{24\pi^2} \int_{S^3}  \text{Tr} (\bar p dp + \bar p ( \bar p dp ) p) ^3 \,.
\end{equation*}
Since 
\begin{equation*}
- \frac{1}{24\pi^2} \int_{S^3}  \text{Tr} (\bar p dp)^3 = - \frac{1}{24\pi^2} \int_{S^3}  \text{Tr}( \bar p ( \bar p dp ) p) ^3 =1 \,,
\end{equation*}
it follows that 
\begin{equation*}
\int_{S^3}  \text{Tr}( (\bar p dp)\wedge (\bar p dp) \wedge \bar p (\bar p dp) p ) = 
\int_{S^3}  \text{Tr}( (\bar p dp)\wedge  \bar p (\bar p dp) p \wedge \bar p (\bar p dp) p) = 0\,.
\end{equation*}
Similarly, for higher $k$, 
\begin{equation*}
\cs( \bar p^k dp^k) = \cs(\bar p dp + \bar p ( \bar p dp ) p + \ldots + \bar p^{k-1}  ( \bar p dp )  p^{k-1} ) \,,
\end{equation*}
and since 
\begin{equation*}
- \frac{1}{24\pi^2} \int_{S^3}  \text{Tr} (\bar p dp)^3 = \ldots = - \frac{1}{24\pi^2} \int_{S^3}  \text{Tr}( \bar p^{k-1} ( \bar p dp ) p^{k-1})^3 =1 \,,
\end{equation*}
it follows that all the mixed terms vanish.

To summarize, we have constructed an explicit metric and connection  on the $\sfSU(2)$-bundle $\pi : P\to S^1\times S^3$ of 2nd Chern 
number $c_2(P) = k \in H^4(S^1\times S^3,\ZZ) \cong \ZZ$ given by
\begin{align}
ds^2_P & = dt^2 + |\bar pdp|^2 + |\bar q dq +  t  \bar q (\bar p^k dp^k) q |^2 \,, \nonumber\\
A & = \bar q dq +  t  \bar q (\bar p^k dp^k) q \,.
\end{align}
\medskip

Now consider $\sfSU(2)$-bundles $P$ over $S^4$.  They are again completely classified by their 2nd Chern class $c_2\in H^4(S^4;\ZZ) \cong \ZZ$.  
The construction above does not quite work for $S^4$, but 
we can view the base $S^4$ as a smashed product $S^1 \wedge S^3$, and guess the metric and connection on $P$.  
Fortunately, we have an explicit description for $c_2=1$ where  $P=S^7$.  We view $S^7\subset \HH^2$, and choose coordinates
(resembling the Euler angles for $S^3$)
\begin{align*}
q_1 & = \cos \theta \,q \,,\\
q_2 & = \sin \theta\ p\, q \,.
\end{align*}
where $q,p$ are unit quaternions, i.e.\ can be identified with points in $S^3$, and $0\leq \theta \leq \frac{\pi}{2}$.  
Now, a short computation gives
\begin{align*}
ds^2_{S^7} = |dq_1|^2 + |dq_2|^2 & = d\theta^2 + \frac14 \sin^2 2\theta \ |\bar p dp|^2 + |\bar q dq + \sin^2 \theta\  \bar q (\bar pdp) q|^2 = 
ds^2_{S^4} + A \odot A  \,,
\end{align*}
where 
\begin{equation*}
A = \overline  q\, dq  +  \bar q ( \sin^2\theta\ \bar p dp) q  = \bar q(d+\widetilde A) q \,,
\end{equation*}
which is similar to our result for $S^1\times S^3$ with $t$ replaced by $\sin^2\theta$.  Similarly we have 
\begin{equation*}
\widetilde F = \sin2\theta d\theta \wedge (\bar p dp) - \frac14 \sin^2 2\theta (\bar p dp) \wedge (\bar p dp)\,,
\end{equation*}
and 
\begin{equation*}
\widetilde F \wedge \widetilde F = - \frac12 \sin^3 2\theta \  dt \wedge (\bar p dp)^3 \,.
\end{equation*}
Hence
\begin{equation*}
c_2(S^7) = \frac{1}{8\pi^2} \int_{S^4} \text{Tr} (\widetilde F \wedge \widetilde F) = 
3 \ \cs(\bar p dp) \int_0^{\pi/2} d\theta  \ \frac12 \sin^3 2\theta  = 1 \,,
\end{equation*}
where we have used  $\int_0^{\pi/2} d\theta  \  \sin^3 2\theta = \frac23$.
A natural guess for the metric on the principal 
$\sfSU(2)$ bundle $P$ over $S^4$ with $c_2(P)=k$ is thus given by 
\begin{equation*}
ds^2_P = ds^2_{S^4} + A \odot A
\end{equation*}
with 
\begin{equation*}
A = \overline  q\, dq  +  \bar q \, ( \sin^2\theta\ \bar p^k dp^k) \, q \,,
\end{equation*}
and an explicit calculation, using the results above, indeed shows that $c_2(P)=k$.
\medskip

Now we consider an explicit representative of a class in $H^7(P,\ZZ)$ for the principal $\sfSU(2)$-bundles we have 
just constructed.  
Consider $M=S^1\times S^3$, and let  $H\in \Omega^7_{\text{cl}}(P)$ be given by
\begin{equation*}
H = dt \wedge \CS(\bar p dp) \wedge \CS(A) = dt \wedge \CS(\bar p dp) \wedge \CS(\bar qdq) \,.
\end{equation*}
We see that this is a globally defined form on the $\sfSU(2)$-bundle $P$ over $S^1\times S^3$ with $c_2(P)=k$.  
Since 
\begin{equation*} 
\int_{P} dt \wedge \CS(\bar p dp) \wedge \CS(A) = \int_0^1 dt \ \times \cs(\bar p dp) \times\  \cs(\bar q dq) = 1 
\end{equation*}
it represents the generator of $H^7(P,\ZZ) \cong \ZZ$.
Now, locally, $H=dB$, with
\begin{equation*}
B = t \  \CS(\bar p dp) \wedge \CS(A) \,,
\end{equation*}
and 
\begin{equation*}
\pi_* B = t \  \CS(\bar p dp) \,.
\end{equation*}
So at the level of forms, spherical T-duality is the statement that
\begin{align*}
\widetilde A & = t (\bar p^k dp^k ) & \widetilde {\widehat{A}} & = t (\bar p^{k'} dp^{k'} ) \,,\\
B & = k' \,t \  \CS(\bar p dp) \wedge \CS(A) &  \widehat B & =  k\,  t \  \CS(\bar p dp) \wedge \CS(\widehat A) \,,
\end{align*}
which can be given the  interpretation of the exchange of winding number $k$ with the an `$S^3$-momentum' measured
by the legs of $B$ in the direction of the $S^3$-fiber, i.e.\ by $k' \ \CS(\bar q dq)$.
Similar formulas hold for $S^4$ with $t$'s replaced by appropriate functions of $\theta$.

Such an identification could be expected if, for example, spherical T-duality were a symmetry of the spectra of a theory of spherical 3-branes that can wrap 
$S^3$ cycles in some spacetime $X$, i.e.\ by replacing closed strings, described by $\text{Maps}(S^1,X)$, by spherical 3-branes
(or `closed quaternionic strings') described by $\text{Maps}(S^3,X) = \text{Maps}(S(\HH),X)$.

\section{Open questions and speculations}

In this Section we briefly list some open questions, speculations and directions for future research:

\begin{enumerate}

\item ($H$ a cocycle) 
In the case of T-duality of $\sfU(1)$-bundles, given the cohomology class represented by $H$ one 
could not uniquely determine that of $\widehat{H}$, but one could determine it up to a bundle automorphism.  
In the present case, 
again the cohomology class of $H$ does not determine that of $\widehat{H}$ and there is a 
gerbe automorphism which relates the 
choices of $\widehat{H}$, but sometimes that gerbe automorphism does not lift to a bundle 
automorphism and so there really are inequivalent choices of $\hat{H}$.  However it appears 
that if a specific cocycle $H$ is chosen and if one demands that $\widehat p^*H=p^*\widehat{H}$ 
as a cocycle, then $\widehat{H}$ will be completely determined as a cocycle and so 
also as a cohomology class.  Thus in this context there is no ambiguity in the determination 
of $\widehat{H}$.  Similarly, if $H$ is a gerbe with connection, as it is in string theory, and if 
one demands that the pullbacks of $H$ and $\widehat{H}$ to the correspondence space 
agree as gerbes with connection, then $\widehat{H}$ appears to be determined as a 1-gerbe 
with connection.  This comment applies both to spherical T-duality and to ordinary T-duality, 
and so will be treated separately elsewhere.

\item  (Missing spherical T-duals via noncommutative geometry?) When $dim(M)>4$, then 
given a pair $(P, H)$ over $M$ consisting of a principal
$\sfSU(2)$-bundle $P \to M$ together with a class $H \in \H^7(P,\ZZ)$, it does not 
in general have a spherical T-dual in the sense of the paper. The question is, what
can be said about these missing spherical T-duals? For instance, is it possible that
$(P, H)$ has a noncommutative spherical T-dual? 

A naive approach to this question is to use the generalized Dixmier-Douady theory in \cite{DP},
where it is shown that there is a an algebra bundle $\cO \to P$ over $P$ with fiber the stabilized Cuntz algebra
$\cO_\infty \otimes \cK$, where $\cK$ is the algebra of compact operators, and with (generalized)
Dixmier-Douady class ${\rm DD}(\cO) = H$. Suppose that the $\sfSU(2)$-action on $P$ lifts 
to an $\sfSU(2)$-action on the algebra of sections vanishing at infinity,  $C_0(P, \cO)$, and consider the crossed product algebra, $C_0(P, \cO)\rtimes \sfSU(2)$.
This could potentially be the missing spherical T-dual of $(P, H)$ as it has a coaction of $\sfSU(2)$
such that $(C_0(P, \cO)\rtimes \sfSU(2))\rtimes \sfSU(2) \cong C_0(P, \cO),$ by non-abelian Takai duality
\cite{Raeburn}. There are a couple of problems. The first is the lifting of the $\sfSU(2)$-action. It is likely that
it lifts, as a similar problem for 3-cocycles was always shown to be true in \cite{AS}. The second is a serious 
problem and shows why this naive approach is doomed, namely  the putative spherical T-dual
$C_0(P, \cO)\rtimes \sfSU(2)$ is not K-theory equivalent to $C_0(P, \cO)$ even with a degree shift.

\item (The higher rank case?) By the higher rank case we mean the following.  Consider pairs $(P, H)$ over $M$ consisting of a 
principal $\sfSU(2)^r$-bundles $P$ with flux $H\in \H^{7}(P, \ZZ)$. Alternatively, we might think of
the associated bundles $P\times_{(S^3)^r} \HH^r$ as quaternionic vector bundles.
When $r\geq1$ a generic flux $H\in \H^{7}(P, \ZZ)$, under dimensional reduction, will have a component in $\H^1(M)$.
This will be an obstruction to the existence of a `classical spherical T-dual'.  One may speculate that this $\H^1(M)$
will play a role as a noncommutativity parameter in some noncommutative spherical T-dual, in the same way 
as for T-duals of higher rank torus bundles with nonclassical H-fluxes.

\item (Higher twisted algebroids?) 
Generalized geometry provides a natural framework in which to discuss T-duality for principal $S^1$-bundles $P\to M$
(see, e.g., \cite{cavalcanti}).  Specifically, T-duality provides an isomorphism of Courant brackets as well as many other structures,
between the $S^1$-invariant parts of the ($H_3$-twisted) generalized tangent spaces 
$E = (TP \oplus T^*P)^{\text{inv}}$ and its T-dual 
$\widehat E = (T\widehat P \oplus T^*\widehat P)^{\text{inv}}$.  It is a natural question to ask whether there exist algebroids over
a manifold $P$ that can twisted by $H \in \Omega_{\text{cl}}^7(P,\RR)$, i.e.\ closed 7-forms, and exhibit spherical T-duality in the 
case $P$ is a principal $\sfSU(2)$-bundle over some base $M$.  A minimal candidate might be the Leibniz algebroid
$E=TP\oplus \wedge^5 T^*P$, but we have been unable to find a spherical T-dual in this case.  An alternative, suggested 
by many of the constructions in this paper, might be some kind of quaternionization of the standard Courant algebroid, e.g.\
$(TP \oplus T^*P)\otimes_\RR \HH$.  We leave this for further investigation.

\end{enumerate}



\begin{thebibliography}{99}

\bibitem{BEM}
P.~Bouwknegt, J.~Evslin and V.~Mathai,
{\it T-duality: Topology Change from H-flux}, 
\href{http://dx.doi.org/10.1007/s00220-004-1115-6}{Comm. Math. Phys. {\bf 249} (2004) 383-415}, 
\href{http://arxiv.org/abs/hep-th/0306062}{[{\tt arXiv:hep-th/0306062}]}.

\bibitem{BEM2}
\bysame, 
{\it On the Topology and Flux of T-Dual Manifolds},
\href{http://dx.doi.org/10.1103/PhysRevLett.92.181601}{Phys. Rev. Lett.  \textbf{92}  (2004)   181601},
\href{http://arxiv.org/abs/hep-th/0312052}{[{\tt arXiv:hep-th/0312052}]}.


\bibitem{BHM}
P.~Bouwknegt, K.~Hannabuss and V.~Mathai,
{\it T-duality for principal torus bundles}, 
\href{http://dx.doi.org/10.1088/1126-6708/2004/03/018}{J. High Energy Phys. {\bf 03} (2004) 018}, 
\href{http://arxiv.org/abs/hep-th/0312284}{[{\tt arXiv:hep-th/0312284}]}.

\bibitem{BHM05}
\bysame,
{\it T-duality for principal torus bundles and dimensionally reduced Gysin sequences}, 
\href{http://dx.doi.org/10.4310/ATMP.2005.v9.n5.a4}{Adv. Theor. Math. Phys.  {\bf 9} (2005) 749-773}, 
\href{http://arxiv.org/abs/hep-th//0412268}{[{\tt arXiv:hep-th/0412268}]}. 

\bibitem{gualtieri}
M.~Gualtieri,
{\it Generalized complex geometry}, 
\href{http://arxiv.org/abs/math/0401221}{[{\tt arXiv:math.DG/0401221}]}.

\bibitem{cavalcanti}
G.~Cavalcanti and M.~Gualtieri,
{\it Generalized complex geometry and T-duality}, 
in ``A Celebration of the Mathematical Legacy of Raoul Bott''
(CRM Proceedings \& Lecture Notes), American Mathematical Society, 2010, pp. 341-366,
\href{http://arxiv.org/abs/1106.1747}{[{\tt arXiv:1106.1747~[math.DG]}]}.

\bibitem{Granja}
G.~Granja,
{\it On quaternionic line bundles},
Ph.D. thesis, Massachusetts Institute of Technology, 1999. 

\bibitem{hishamdisc}
H. Sati, private communication.


\bibitem{hishamwest}
H. Sati and C. Westerland, work in progress.


\bibitem{NM83}
P.~Nelson and A.~Manohar,
{\it Global color Is not always defined},
\href{http://dx.doi.org/10.1103/PhysRevLett.50.943}{Phys. Rev. Lett. {\bf 50} (1983) 943-945}.

\bibitem{CS74}
S.-S.~Chern and J.~Simons,
{\it Characteristic Forms and Geometric Invariants},
\href{http://dx.doi.org/10.2307/1971013}{Ann. Math.  {\bf 99} (1974) 48-69}.


\bibitem{GHV}
W.~Greub, S.~Halperin and R.~Vanstone,
{\it Connections, Curvature, and Cohomology}, Volumes I, II \& III,
(Academic Press, New York, 1976).


\bibitem{RR88}
I.\ Raeburn and J.\ Rosenberg,
{\it Crossed products of continuous-trace $C\sp *$-algebras by smooth actions}, 
\href{http://dx.doi.org/10.1090/S0002-9947-1988-0920145-6}{Trans.\ Amer.\ Math.\ Soc. \textbf{305} (1988) 1-45}.


\bibitem{BS}
U.\ Bunke and T.\ Schick,
{\it On the topology of T-duality}, 
\href{http://dx.doi.org/10.1142/S0129055X05002315}{Rev.\ Math.\ Phys.\  \textbf{17}  (2005) 77-112},
\href{http://arxiv.org/abs/math/0405132}{[{\tt arXiv:math.GT/0405132}]}. 

\bibitem{MR}
V. Mathai and J. Rosenberg,
{\it T-duality for torus bundles with H-fluxes via noncommutative topology},
\href{http://dx.doi.org/10.1007/s00220-004-1159-7}{Comm. Math. Phys., {\bf 253} no. 3 (2005) 705-721},
\href{http://arxiv.org/abs/hep-th/0401168}{[{\tt arXiv:hep-th/0401168}]}.

\bibitem{MR2}
\bysame,
{\it T-duality for torus bundles with H-fluxes via noncommutative topology, II;
the high-dimensional case and the T-duality group}, 
Adv. Theor. Math. Phys.  {\bf 10} (2006) 123-158,  
\href{http://arxiv.org/abs/hep-th/0508084}{[{\tt arXiv:hep-th/0508084}]}.

\bibitem{MR3}
\bysame,
{\it On mysteriously missing T-duals, H-flux and the T-duality group} 
pages 350-358, in ``Differential Geometry and Physics", editors Mo-Lin Ge and Weiping Zhang, 
Nankai Tracts in Mathematics, Vol. {\bf 10}, World Scientific 2006,
\href{http://arxiv.org/abs/hep-th/0409073}{[{\tt arXiv:hep-th/0409073}]}.


\bibitem{BS2}
U.\ Bunke, P.\ Rumpf and T.\ Schick,
{\it The topology of T-duality for $T^n$-bundles}, 
\href{http://dx.doi.org/10.1142/S0129055X06002875}{Rev. Math. Phys.  \textbf{18} (2006)  1103-1154},
\href{http://arxiv.org/abs/math/0501487}{[{\tt arXiv:math.KT/0501487}]}.


\bibitem{danny}
D.~Stevenson,
{\it The geometry of bundle gerbes},
PhD thesis, University of Adelaide, 2000,
\href{http://arxiv.org/abs/math/0004117}{[{\tt arXiv:math.DG/0004117}]}.


\bibitem{Q}
D.G.~Quillen,
{\it Rational Homotopy Theory},
\href{http://dx.doi.org/10.2307/1970725}{Ann. Math., Second Series,  \textbf{90}, No. 2 (1969) 205-295}.


\bibitem{FHT} 
Y. F\'elix, S. Halperin and J.C. Thomas, 
{\it Rational homotopy theory}, 
Graduate Texts in Mathematics  
\textbf{205} (Springer-Verlag, New York, 2001).

\bibitem{KH} 
K.~Hess, 
{\it Rational homotopy theory: a brief introduction},
Interactions between homotopy theory and algebra, 175-202, 
Contemp. Math.,  \textbf{436}, Amer. Math. Soc., Providence, RI, 2007.


\bibitem{MT}
M. Mimura and H. Toda,
{\it Homotopy Groups of SU(3), SU(4) and Sp(2)},
\href{http://projecteuclid.org/euclid.kjm/1250524818}{J. Math. Kyoto Univ.  \textbf{3}, Number 2 (1963) 217-250}.


\bibitem{CG}
D.~Crowley and S.~Goette, 
{\it Kreck-Stolz invariants for quaternionic line bundles},
\href{http://dx.doi.org/10.1090/S0002-9947-2012-05732-1}{Trans. Amer. Math. 
Soc. {\bf 365} (2013)  3193-3225},
\href{http://arxiv.org/abs/1012.5237}{[{\tt arXiv:1012.5237~[math.GT]}]}.

\bibitem{Taubes}
C.H.~Taubes, 
{\it Differential geometry:  Bundles, connections, metrics and curvature}, 
Oxford Graduate Texts in Mathematics, {\bf 23},  Oxford University Press, Oxford, 2011. 

\bibitem{GS} 
D.~Gon\c calves and M.~Spreafico, 
{\it Quaternionic line bundles over quaternionic projective spaces},
\href{http://www.math.okayama-u.ac.jp/mjou/mjou48/_10_goncalves-spreafico.pdf}{Math. J. 
Okayama Univ. {\bf 48} (2006) 87-101}. 


\bibitem{andreas}
A.~Collinucci and J.~Evslin, 
{\it Twisted Homology},
\href{http://dx.doi.org/10.1088/1126-6708/2007/03/058}{J. High Energy Phys. {\bf 03} (2007) 058},
\href{http://arxiv.org/abs/hep-th/0611218}{[{\tt arXiv:hep-th/0611218}]}.


\bibitem{MMS}
J.~M.~Maldacena, G.~W.~Moore and N.~Seiberg,
{\it D-brane instantons and K theory charges,}
\href{http://dx.doi.org/10.1088/1126-6708/2001/11/062}{J. High Energy Phys.  {\bf 11} (2001) 062},
\href{http://arxiv.org/abs/hep-th/0108100}{[{\tt arXiv:hep-th/0108100}]}.


\bibitem{BCMMS}
P.~Bouwknegt, A.~Carey, V.~Mathai, M.~Murray and D.~Stevenson,
{\it Twisted K-theory and K-theory of bundle gerbes}, 
\href{http://dx.doi.org/10.1007/s002200200646}{Comm. Math. Phys. {\bf 228} (2002) 17-49}, 
\href{http://arxiv.org/abs/hep-th/0106194}{[{\tt arXiv:hep-th/0106194}]}.


\bibitem{hishamns5}
 H.~Sati,
{\it Topological aspects of the partition function of the NS5-brane}, \newline
\href{http://arxiv.org/abs/1109.4834}{[{\tt arXiv:1109.4834~[hep-th]}]}.


\bibitem{sati09}
H.~Sati,
{\it A higher twist in string theory},
\href{http://dx.doi.org/10.1016/j.geomphys.2008.11.009}{J. Geom. Phys. {\bf 59} (2009) 369-373}, 
\href{http://arxiv.org/abs/hep-th/0701232}{[{\tt arXiv:hep-th/0701232}]}.


\bibitem{uday}
J.~Evslin and U.~Varadarajan,
{\it K theory and S duality: Starting over from square 3},
\href{http://dx.doi.org/10.1088/1126-6708/2003/03/026}{J. High Energy Phys. {\bf 03} (2003) 026},
\href{http://arxiv.org/abs/hep-th/0112084}{[{\tt arXiv:hep-th/0112084}]}.


\bibitem{barioni}
E.~Witten,
{\it Baryons and branes in anti-de Sitter space}, 
\href{http://dx.doi.org/10.1088/1126-6708/1998/07/006}{J. High Energy Phys. {\bf 07} (1998) 006}, 
\href{http://arxiv.org/abs/hep-th/9805112}{[{\tt arXiv:hep-th/9805112}]}.


\bibitem{vafa}
C.~Vafa,
{\it Evidence for F theory}, 
\href{http://dx.doi.org/10.1016/0550-3213(96)00172-1}{Nucl.\ Phys.\  {\bf B469} (1996) 403-418}, 
\href{http://arxiv.org/abs/hep-th/9602022}{[{\tt arXiv:hep-th/9602022}]}.

\bibitem{moorewitten}
G.W.~Moore and E.~Witten,
{\it Selfduality, Ramond-Ramond fields, and K theory},
\href{http://dx.doi.org/10.1088/1126-6708/2000/05/032}{J. High Energy Phys. {\bf 05} (2000) 032},
\href{http://arxiv.org/abs/hep-th/9912279}{[{\tt arXiv:hep-th/9912279}]}.


\bibitem{monodromie}
J.~Evslin,
{\it Twisted K theory from monodromies}, 
\href{http://dx.doi.org/10.1088/1126-6708/2003/05/030}{J. High Energy Phys.  {\bf 05} (2003) 030}, 
\href{http://arxiv.org/abs/hep-th/0302081}{[{\tt arXiv:hep-th/0302081}]}.


\bibitem{modave}
J.~Evslin,
{\it What does(n't) K-theory classify?},
\href{http://arxiv.org/abs/hep-th/0610328}{[{\tt arXiv:hep-th/0610328}]}.


\bibitem{Buscher}
T.~Buscher,
{\it A symmetry of the string background field equations},
\href{http://dx.doi.org/10.1016/0370-2693(87)90769-6}{Phys. Lett. {\bf B194} (1987) 59-62}.

\bibitem{Buscher2}
\bysame,
{\it Path integral derivation of quantum duality 
in nonlinear sigma models}, 
\href{http://dx.doi.org/10.1016/0370-2693(88)90602-8}{Phys. Lett. {\bf B201} (1988) 466-472}.

\bibitem{DP}
M.~Dadarlat and U.~Pennig, 
{\it A Dixmier-Douady theory for strongly self-absorbing C$^*$-algebras}, 
\href{http://arxiv.org/abs/1306.2583}{[{\tt arXiv:1306.2583~[math.OA]}]}.

\bibitem{DP2}
M.~Dadarlat and U.~Pennig, 
{\it Unit spectra of K-theory from strongly self-absorbing C$^*$-algebras}, 
\href{http://arxiv.org/abs/1302.4468}{[{\tt arXiv:1302.4468~[math.AT]}]}.

\bibitem{Raeburn}
I.\ Raeburn, 
{\it A duality theorem for crossed products by nonabelian groups}, 
Miniconference on Harmonic Analysis (Canberra, 17-20 June 1987),
Cowling, Meaney, Moran, eds. 1987, Proceedings of the Centre for Mathematical Analysis,
Vol. {\bf 15}, pp. 214-227. 

\bibitem{AS}
M.~Atiyah, G.~Segal, 
{\it Twisted K-theory}, 
Ukr. Mat. Visn. {\bf 1} (2004) 287-330, 
(translation in Ukr. Math. Bull. {\bf 1} (2004) 291-334).



\end{thebibliography}
\end{document}